\algrenewcommand\algorithmicrequire{\textbf{Input:}}
\algrenewcommand\algorithmicensure{\textbf{Output:}}
\def \Tr {{\rm Tr}}
\def \Lhat {\widehat{L}}
\def \Ocal{\mathcal{O}}
\def \Ncal {\mathcal{N}}
\def \Hcal {\mathcal{H}}
\def \Ycal {\mathcal{Y}}
\def \Dcal {\mathcal{D}}
\def \Rcal {\mathcal{R}}
\def \Fcal {\mathcal{F}}
\def \Acal {\mathcal{A}}
\def \Ncal {\mathcal{N}}
\def \Ccal {\mathcal{C}}
\def \Gcal {\mathcal{G}}
\def \Acal {\mathcal{A}}
\def \xbf {\mathbf{x}}
\def \sbf {\boldsymbol{\sigma}}
\def \fbf {\mathbf{f}}
\def \Ebb {\mathbb{E}}
\def \Ibb {\mathbb{I}}
\def \rob {\mathcal{S}^{Q/C}_{r,p,\epsilon}}
\newtheorem{theorem}{Theorem}
\newtheorem{proposition}{Proposition}
\newtheorem{definition}{Definition}
\DeclareMathOperator{\sign}{\text{sign}}
\begin{document}

\preprint{APS/123-QED}

\title{On the Generalization of Adversarially Trained Quantum Classifiers}% Force line breaks with \\
%\thanks{A footnote to the article title}%

\author{Petros Georgiou}
\email{pxg402@student.bham.ac.uk}
\author{Aaron Mark Thomas}
\author{Sharu Theresa Jose}

\affiliation{%
 Department of Computer Science, \\University of Birmingham, UK
}%

\author{Osvaldo Simeone}
\affiliation{
 KCLIP Lab\\Centre for Intelligent Information Processing Systems (CIIPS)\\Department of Engineering, King's College London, UK
}%

\date{\today}% It is always \today, today,
             %  but any date may be explicitly specified

\begin{abstract}
Quantum classifiers are vulnerable to adversarial attacks that manipulate their input classical or quantum data. A promising countermeasure is adversarial training, where quantum classifiers are trained by using an attack-aware, adversarial loss function. This work establishes novel bounds on the generalization error of adversarially trained quantum classifiers when tested in the presence of perturbation-constrained adversaries. The bounds quantify the excess generalization error incurred to ensure robustness to adversarial attacks as scaling with the training sample size $m$ as $1/\sqrt{m}$, while yielding insights into the impact of the quantum embedding. 
For quantum binary classifiers employing \textit{rotation embedding}, we find that, in the presence of adversarial attacks on classical inputs $\xbf$, the increase in sample complexity due to adversarial training over conventional training vanishes in the limit of high dimensional inputs $\xbf$. In contrast, when the adversary can directly attack the quantum state $\rho(\xbf)$ encoding the input $\xbf$,  the excess generalization error depends on the choice of embedding only through its Hilbert space dimension. The results are also extended to multi-class classifiers. We validate our theoretical findings with numerical experiments.
\end{abstract}

\maketitle

%\tableofcontents

\section{Introduction}\label{sec:introduction} \vspace{-0.2cm}
\textbf{Context and Motivation:} Quantum Machine Learning (QML) aims to leverage quantum computing capabilities to outperform classical ML techniques \cite{Biamonte:QML, riste2017demonstration}. Recent studies have highlighted limitations of QML including difficulties in training unstructured QML models \cite{mcclean2018barren,wang2021noise} and the classical simulability of some structured QML models \cite{bermejo2024quantum}. Another concern with QML models is the fact that, similar to their classical counterparts, QML models are susceptible to adversarial attacks \cite{Lu_2020,ren2022experimental,west2023benchmarking}. For instance, a quantum classifier utilizing superconducting qubits to classify MRI images, achieving  a test accuracy of $99\%$, was found to be easily deceived by minor adversarial perturbations \cite{ren2022experimental}. This vulnerability poses another challenge on the way to realizing  quantum advantages. 
\begin{figure}
    \centering
    \includegraphics[width=0.96\linewidth]{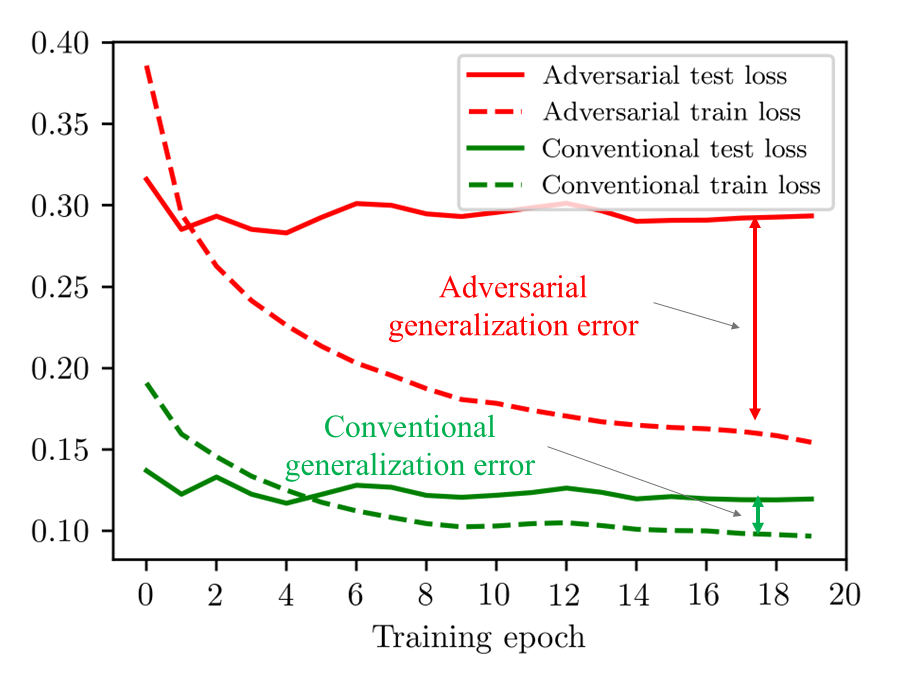}
    \caption{%Top: (Adversarial) Test and Training accuracies plotted against number of training epochs for an angle embedding based classifier. Despite using adversarial training, the adversarial test accuracy remains poor.\newline
    (Adversarial) Generalization error (Test error$-$Train error) plotted against number of training samples for an \emph{adversarially trained} angle embedding based classifier. Despite adversarial training, the adversarial generalization error remains significant. Details about the dataset and quantum classifier used can be found in section \ref{sec:embeddings}.}
    \label{fig:num_samples}
\end{figure}
To address this problem, recent works \cite{Du_2021,gong2024enhancing} have explored efficient strategies to defend quantum classifiers against adversarial attacks, with \textit{adversarial training} emerging as a promising strategy \cite{Lu_2020}.   Adversarial training replaces the standard classification loss with an attack-aware  \textit{adversarial loss}, accounting for the worst-case effect of adversarial perturbation of the input data.
%, and aims to minimize it, assuming that the classifier knows the type and strength of the adversarial attack.  
This results in a min-max optimization problem with the classifier attempting to minimize the worst-case adversarial loss. 

In classical machine learning models it has been observed that adversarially trained classifiers have desirable training performance but a poor performance on test datasets \cite{NEURIPS2018_f708f064}. For example, a Resnet model adversarially trained on CIFAR10 dataset achieved a training accuracy of 96\%, but a test accuracy of only 47\% when evaluated on similarly perturbed data \cite{madry2019deeplearningmodelsresistant}. Theoretical studies have also confirmed that an increase in sample complexity is required to guarantee generalization with adversarial training \cite{yin2019rademacher,xiao2022adversarial}. 

The significant generalization gap of adversarially trained classifiers can also be observed in quantum classifiers. As an example, Figure~\ref{fig:num_samples} shows the adversarial training and test losses -- i.e., the training and test losses evaluated in the presence of an adversary -- incurred during adversarial training of a QML model over the training epochs. The figure also shows the conventional training and test losses of the same model. As observed in the figure, the generalization error -- i.e., the  gap between the training and test losses --  tends to be larger in the presence of an adversary. This is the case even though the QML model was trained using an adversarial loss.

While existing works analyze the empirical merit of adversarial training against various attacks \cite{Lu_2020,montalbano2024quantumadversariallearningkernel, ren2022experimental,west2023benchmarking, maouaki2025qfalquantumfederatedadversarial}, a theoretical understanding of the generalization performance of adversarially trained quantum classifiers remains limited. Addressing this knowledge gap is the focus of this work.

\textbf{Quantum Classifiers and Quantum Embeddings}: To elaborate, consider a quantum binary classifier that takes as  input a $d$-dimensional feature vector $\xbf$, embedding it into a $d_H$-dimensional  quantum state  $\rho(\xbf)$. Classification is carried out based on the sign of the function $f(\xbf)=\Tr(A\rho(\xbf))$, which corresponds to the expected value of a suitable observable $A\in\Acal$, with $\Tr(\cdot)$ denoting the trace. The set of observables $\Acal$ determines the model class, and the specific observable $A$ deployed for classification is identified based on training data.

The choice of the quantum embedding $\rho(\xbf)$ is critical for the performance and learnability of the quantum classifier \cite{schuld2021machine,simeone2022introduction,caro2021encoding, banchi2021generalization}.
\begin{figure}
    \centering
 \includegraphics[width=1.05\linewidth]{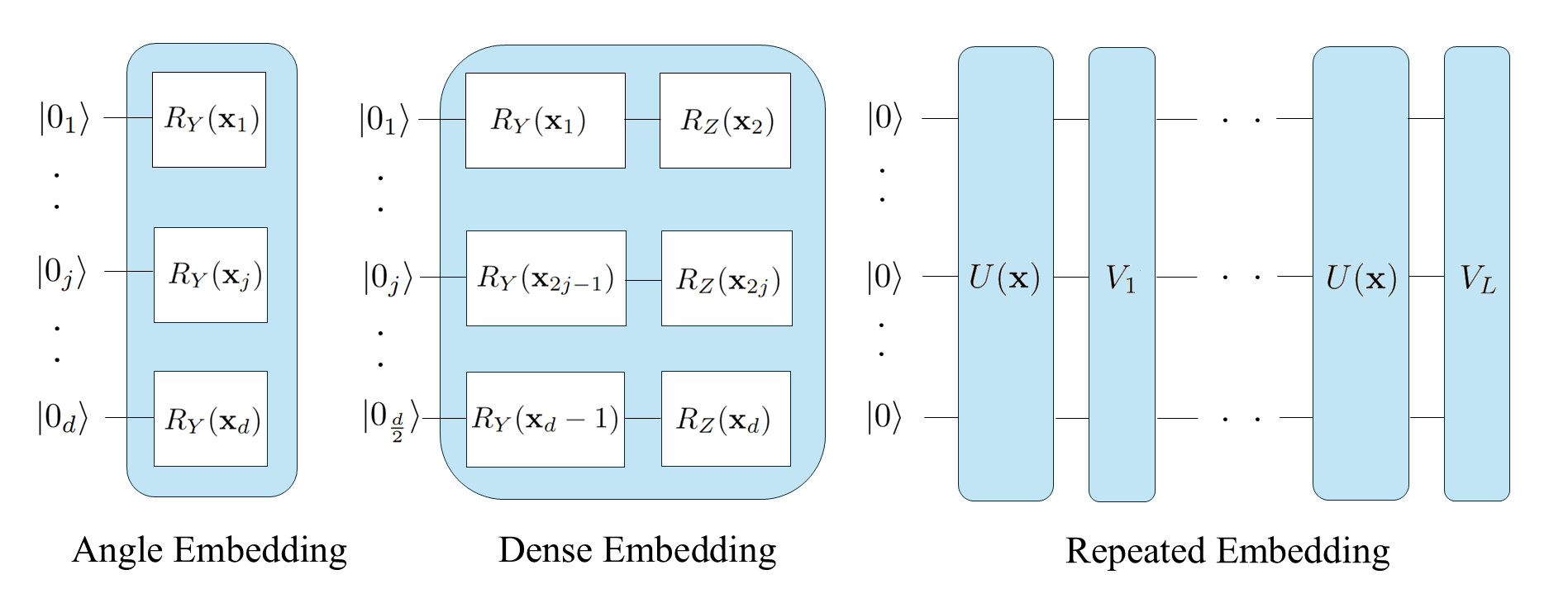}
    \includegraphics[width=1.05\linewidth]{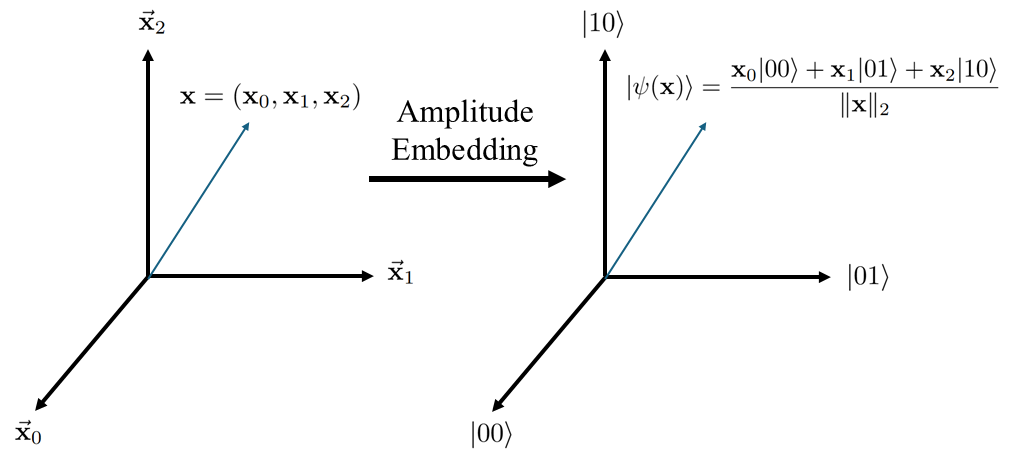}
    \caption{\small{The embedding $\rho(\xbf)$ plays a key role in determining the performance of a quantum classifier \cite{schuld2021machine,simeone2022introduction}. This paper studies the learning requirements for two main classes of embeddings in the presence of adversarial attacks. (Top) Rotation embedding schemes, including angle rotation embedding, dense rotation embedding, and their repeated variants require a number of qubits that scale linearly with the dimension $d$ of the input vector $\mathbf{x}$. (Bottom) Amplitude embedding, which can encode a vector $\mathbf{x}$ of dimension $d$ into a quantum state whose number of qubits scales logarithmically with $d$. In this particular case we embed a $3$-dimensional input $\xbf$ into a $2$-qubit quantum state.}}
    \label{fig:embeddings} \vspace{-0.5cm}
\end{figure}
As illustrated in Figure~\ref{fig:embeddings}, we distinguish two main types of embeddings. The first type encompasses \emph{rotation}-based methods, in which information is encoded via the rotation angles of qubit gates. These methods, encompassing angle embedding, dense embedding, and their repeated variants \cite{dowling2024adversarial},  require a number of qubits that scales \emph{linearly} with the data dimension $d$, and thus the Hilbert space dimension $d_H$ grows exponentially with $d$. Despite their significant requirements in terms of number of qubits, these embeddings are widely used in practice as the corresponding circuits have constant depth \cite{schuld2021machine, LaRose_2020}.

In contrast, with \emph{amplitude encoding}, the information vector $\mathbf{x}$ directly modulates the quantum amplitudes of the quantum state $\rho(\mathbf{x})$. Accordingly, in order to embed a vector $\xbf$ of dimension $d$ into a quantum state, amplitude embedding requires a number of qubits that grows only logarithmically with $d$, and thus the dimension of the Hilbert space $d_H$ scales linearly with $d$. On the flip side, the circuit implementing amplitude embedding generally requires a number of quantum gates that is exponential in the dimension $d$ \cite{schuld2021machine}.

 The impact of the choice of quantum embedding on the generalization of conventionally trained classifiers has been studied in \cite{caro2021encoding}. This reference quantifies the generalization error via the number of encoding gates used in the circuit. In particular, for an $L$-layer angle encoding based classifier, the generalization error is specifically shown to scale as $\Ocal(L^d)$. To the best of our knowledge, generalization for amplitude encoding has not been explicitly addressed in the literature.

%We consider observables whose $r$-Schatten norm, for $r\geq 1$, is upper bounded by $b>0$.

\textbf{Research Questions:}  Different from classical ML models, adversaries acting against QML models can attack data either in the classical  space, perturbing the input $\xbf$ within an $l_p$-ball of some radius $\epsilon$, or in the quantum state space, perturbing the quantum state $\rho(\xbf)$ within a $p$-Schatten norm ball of $\epsilon$-radius \cite{dowling2024adversarial}. Accordingly, we study the \textit{adversarial generalization error} -- defined as the difference between the adversarial training loss and the test loss -- for adversarially trained QML models with angle or amplitude embeddings under classical or quantum attacks.  Specifically, we address the following key questions:\vspace{-0.2cm}\begin{itemize}
\item How is the sample complexity of a QML model  affected by the presence of an attacker in the ideal case in which the training process is aware of the limitations of the attacker?\vspace{-0.2cm} \item How does the choice of the  quantum embedding  affect adversarial generalization? \vspace{-0.2cm} \item What is the impact of the type of attack, classical or quantum, on adversarial generalization? \end{itemize}
We answer these questions by providing novel probably approximately correct (PAC) bounds on the adversarial generalization error via the \emph{adversarial Rademacher complexity} (ARC). This extends the standard Rademacher complexity (RC)-based PAC bounds presented in \cite{mohri2012foundations} to the adversarial setting.

\textbf{Technical Challenges:} Aside from our conference paper \cite{georgiou_ISIT24}, the adversarial generalization of quantum classifiers has yet to be explored at a theoretical level. This is in contrast to extensive investigations into classical classifiers \cite{khim2018adversarial, yin2019rademacher,awasthi2020adversarial}.  
%In contrast, this has been extensively investigated for classical linear classifiers $f(\xbf)=\mathbf{w}^T\xbf$ with $\Vert \mathbf{w} \Vert_r \leq b$ as well as 2-layer neural networks \cite{yin2019rademacher,awasthi2020adversarial}. 
In particular, for linear classifiers, the ARC was previously analyzed in \cite{yin2019rademacher} by explicitly solving the inner maximization over adversarial perturbations. Although the quantum classifiers discussed here are linear in the embedded quantum state $\rho(\xbf)$, they can exhibit highly non-linear behavior with respect to the input $\xbf$ due to the choice of quantum embedding. This complicates the analysis of ARC under classical attacks. For quantum attacks, this problem is compounded by the constraint that adversarial perturbations yield valid quantum states.
%Therein, the ARC is analyzed via explicit computation of the adversarially generated data (i.e., solving inner maximization of the min-max adversarial training problem), resulting in tighter bounds.  Although quantum classifiers considered here are linear in the embedded quantum state $\rho(\xbf)$, it can be highly non-linear in input $\xbf$ due to the choice of the quantum embedding. This makes analysis of ARC of quantum classifiers under classical attack challenging. While this issue seems resolved under quantum attacks, the perturbation constraint necessitating the adversarial example to be a quantum state (i.e., unit-trace, positive semi-definite, Hermitian matrix) makes exact computation of adversarial quantum state and thus the ARC challenging. 

\begin{table}[h!]
\begin{footnotesize}
\resizebox{0.5\textwidth}{!}{%
\begin{tabular}{ |c|c|c|} 
   \hline Attack & Embedding & scaled excess RC\\
   \hline Classical &
   Amplitude & $2^{1+\lceil\log_2d\rceil}b\min\Big\{\frac{\epsilon b\max\{1, d^{1/2-1/p}\}}{\min_{x\in\mathcal{X}}\lVert x\rVert_{2}\sqrt{m}},\frac{1}{\sqrt{m}}\Big\}$\\ \hline
   Classical & $L$-layer Angle  &$\frac{(2\epsilon)^{d}Lbd^{-d/p}}{\sqrt{m}}$\\\hline Classical & $L$-layer Dense &$\frac{(2\sqrt{2}\epsilon)^{d/2}Lb\max\{d^{-d/4},d^{-d/2p}\}}{\sqrt{m}}$\\
  \hline Quantum &Any &$\frac{\epsilon bd_H\max\{1,d_H^{1-1/p-1/r}\}}{\sqrt{m}}$ \\\hline
\end{tabular}}
   \caption{\small{ Behavior of the scaled excess Rademacher complexity  $\rob/\sqrt{m}$  according to the bounds derived in this paper on the ARC (\ref{eq:first}) for classical/quantum attacks and for different types of embeddings ($d$ is the data dimension, $m$ the number of training samples; $(p,\epsilon)$ denote the parameters of the adversarial attack; $b$ is the bound on the observable $r$-Schatten norm; and $d_H$ is the Hilbert space dimension).}} \label{tab:comparison}
  \end{footnotesize}
  \vspace{-0.3cm}
   \end{table}
{\textbf{Main Contributions}:} Our previous conference paper \cite{georgiou_ISIT24} addressed the adversarial generalization error under quantum attacks for a subset of the possible embeddings $\rho(\xbf)$ for which direct evaluation of the ARC is possible. In contrast, this more comprehensive study investigates the adversarial generalization error for both quantum and classical attacks using  \emph{covering number-based} analysis \cite{vershynin2018high}. Specifically, we consider a classifier function class defined by a set of observables $\Acal_r$ constrained via the $r$-Schatten norm.  We also define a $(p,\epsilon)$-adversary capable of perturbing the input $\xbf$ within an $\ell_p$ norm ball of radius $\epsilon>0$ for classical attacks, or perturbing the input quantum state $\rho(\xbf)$ within a $p$-Schatten norm ball of radius $\epsilon>0$ for quantum attacks.  We show that the ARC of binary quantum classifiers can be upper bounded as the sum of the standard RC of the classifier function class and of an additional term quantifying the \textit{excess Rademacher complexity} due to the presence of an adversary, i.e.,
\begin{align}
    {\rm ARC} \leq {\rm RC}+ \mathcal{O}\left(\frac{\rob}{\sqrt{m}}\right). \label{eq:first}
\end{align}
We derive novel bounds on the excess term $\rob$ as a function of the norm of the observable $r$, and the power of the adversary $(p,\epsilon)$ for classical ($C$) and quantum ($Q$) attacks. A larger value of the excess Rademacher complexity $\rob$ indicates a larger \emph{sample complexity} for adversarial training as compared to conventional training.

%observe that it scales as $O(\sqrt{\log d_H}b/\sqrt{m})$ for $r=1$, with $m$ denoting the training sample size, and as $O(b/\sqrt{m})$ for $r\neq 1$. The excess error term, on the other hand, depends on $C_{p,\epsilon}$ which is a function of the quantum embedding $\xbf \mapsto \rho(\xbf)$ and  the $(p,\epsilon)$-adversarial attack. 

Through this analysis, we investigate the impact of the quantum embedding $\rho(\mathbf{x})$ on both RC and on the excess Rademacher complexity $\rob$. Specifically, as shown in  Figure~\ref{fig:embeddings},  we consider  rotation based embeddings, along with  extensions via repeated embedding and  data re-uploading  \cite{LaRose_2020}, as well as amplitude embedding. Our main conclusions are summarized in Table~\ref{tab:comparison} and can be described as follows.
\begin{itemize}\vspace{-0.1cm}
    \item 
    With \textit{rotation embedding} and \emph{classical} adversarial attacks, the sample complexity of adversarial training tends to that of standard training in the limit $d\gg e^{\mathcal{O}(p\ln(2\sqrt{2}\epsilon))}$. In fact, in this limit, the ARC approaches the standard RC asymptotically, with the excess Rademacher complexity $\rob$ decaying at least exponentially with the data dimension $d$.
 \item With amplitude embedding and  \emph{classical} adversarial attacks, quantum classifiers employing \textit{amplitude embedding} the excess Rademacher complexity scales at least linearly with the dimension $d$ of the input data $\xbf$.
\item  With \emph{quantum} adversarial attacks we identify that the excess RC only depends on the choice of embedding through its Hilbert space dimension $d_H$. In the case of sufficiently noisy embeddings we are able to show that ARC $\geq$ RC. Furthermore, for the same class of embeddings we provide tighter bounds on the excess RC.
%Numerical experiments show that the excess generalization error due to adversarial training has the same dimensional dependence regardless of the noisiness of the embedding $\rho(x)$, indicating that the tighter bound of noisy embeddings adequately captures the effect of adversarial training for \emph{all} embeddings.
     %For a restricted class of quantum embeddings we are able to improve this scaling to be at most linear in $d_H$, while also showing that ${\rm ARC} \geq {\rm RC}$.
\end{itemize}
\vspace{-0.2cm}
\textbf{Related Works}: The generalization performance of standard, non-adversarial QML models has been extensively studied, including approaches based on complexity measures such as the RC \cite{caro2021encoding, caro2022generalization, bu2022statistical, banchi2021generalization, banchi2023statistical,jose2023transfer}, and on algorithmic stability analyses \cite{yang2025stabilitygeneralizationquantumneural, zhu2025optimizerdependentgeneralizationboundquantum}. In contrast, works focusing on adversarial generalization of quantum classifiers are scarce. Recent work \cite{berberich2023training} studies the generalization of quantum classifiers with trainable quantum embeddings under conventional classification loss. This is done via Lipschitz bounds that also determine the adversarial robustness. In contrast, our focus is on analyzing the generalization of quantum classifiers that are adversarially trained. 

The work most closely related to ours is our prior conference paper \cite{georgiou_ISIT24}, which studies adversarial generalization in the presence of a quantum adversary. This work derives an upper bound on the adversarial generalization error considering a restricted class of quantum embeddings for which the minimum eigenvalue of the embedding is greater than $\epsilon$. In contrast, we address the impact of the choice of the quantum embeddings on the adversarial generalization error.

%We also provide general upper bounds on adversarial generalization error, using a covering number based analysis, that does not require any strong assumptions. 
%Although adversarial generalization is new in QML, this has been extensively researched in the classical ML setting. References \cite{yin2019rademacher, awasthi2020adversarial} study ARC of the class of linear classifiers $f(\xbf)=\mathbf{w}^T\xbf$ with $\Vert \mathbf{w} \Vert_r \leq b$ as well as 2-layer neural networks. For linear classifiers, they analyze the ARC via explicit computation of the adversarially perturbed data, resulting in tight bounds. Although quantum classifiers considered here are linear in the embedded quantum state $\rho(\xbf)$, it can be highly non-linear in input $\xbf$ due to the choice of the quantum embedding. This makes analysis of ARC of quantum classifiers under classical attack challenging. While this issue seems resolved under quantum attacks, the perturbation constraint necessitating the adversarial example to be a quantum state makes exact computation of adversarial quantum state and thus the ARC challenging. Consequently, in this work, we circumvent these issues by analysing the ARC through the covering number based analysis \cite{mohri2012foundations}. We note that similar approach has been adopted for analysing the ARC of deep neural network models in \cite{xiao2022adversarial}.  
Other existing works in adversarial QML primarily aim at developing efficient adversarial defense strategies, or on proving certifiable adversarial robustness guarantees. In the former category, reference  \cite{Du_2021,huang2023enhancing} explores
leveraging quantum gate noise layers (e.g., depolarization noise) for adversarial defence, while reference \cite{gong2024enhancing} investigates using randomized encodings (see also \cite{guan2021robustness, guan2022verifying, huang2023certified}). Furthermore, the works \cite{liu_vulnerability, liao2021robust, dowling2024adversarial} investigate the robustness of QML models as a function of size of the quantum system used, indicating at a trade-off between system size and robustness, a potential threat to achieving quantum advantage.
\vspace{-0.2cm}
\section{Quantum Classifiers}
\vspace{-0.2cm}
Let $(\xbf,y)$ denote the (classical) data tuple of feature vector $\xbf=(x_1,\hdots,x_d) \in\mathcal{X}\subseteq \mathbb{R}^d$ and output label $y \in \Ycal$  where $\Ycal=\{+1,-1\}$ for binary classification and $\Ycal=\{1,\hdots,K\}$ for $K$-class classification. For quantum data, the input feature is a quantum state $\rho(\xbf)$ indexed by the parameter vector $\xbf$ of the state, and $y$ denotes the corresponding label.

A quantum classifier is described by a mapping $g:\mathcal{X} \rightarrow \Ycal$ from the input feature space to the label space. When the data are  classical,  the classifier first maps the input vector $\xbf \in \mathcal{X}$ to a quantum state $\rho(\xbf)$ in a $d_H$-dimensional Hilbert space via the quantum embedding  $ \xbf \mapsto \rho(\xbf)$. This step can be skipped if the data are  quantum.  Note that the quantum state $\rho(\xbf)$ is a \textit{density matrix}, i.e., a unit-trace,  positive semi-definite, $d_H \times d_H$ Hermitian matrix. 

The classifier  measures the embedded quantum state using a suitable Hermitian observable $A$. Throughout this work, we consider a \emph{model class} of classifiers defined by the set of observables 
\begin{align}
 \Acal_r&=\{A:A=A^{\dagger}, \Vert A \Vert_r\leq b\},\label{eq:Ocal}
\end{align} i.e.,  the set of $r$-Schatten norm constrained Hermitian matrices of dimension $d_H \times d_H$. For $r \in [1,\infty)$, the Schatten norm of observable $A$ is defined as $\Vert A \Vert_r= ({\rm Tr}(|A|^r))^{1/r}$ with $|A|=\sqrt{A^{\dag}A}$, which in the limiting case of $r=\infty$, evaluates as $\Vert A \Vert_{\infty}=\max\{\lambda(|A|)\}$ with $\{\lambda(|A|)\}$ denoting the set of eigenvalues of $|A|$.

The quantum embedding and the measurement define a  binary quantum classifier as \cite{gyurik2023structural}
\begin{align}
    g(\xbf) & = {\rm sgn}(f(\xbf)), \quad \mbox{with} \quad f(\xbf) = {\rm Tr}(A\rho(\xbf)), \label{eq:quantumclassifier}
\end{align} where the real-valued function $f(\xbf)$ computes the expectation of the Hermitian observable $A$ with respect to the quantum state $\rho(\xbf)$, with ${\rm Tr}(\cdot)$ denoting the trace operation. This is then post-processed via the function  ${\rm sgn}(a)$, which takes value $+1$ when $a>0$ and takes $-1$ when $a<0$, to yield the class label.

For fixed quantum embedding $\xbf \mapsto \rho(\xbf)$, the class $\Gcal_r$ of binary quantum classifiers can be then defined as  $\Gcal_r=\{\xbf \mapsto g(\xbf)={\rm sgn}(f(\xbf)): f\in \Fcal_r\}$, where \begin{align}
    \Fcal_r&=\{\xbf \mapsto f(\xbf)= {\rm Tr}( A\rho(\xbf)): A \in \Acal_r\} .\label{eq:Fcal} \end{align}
More generally, a $K$-class quantum classifier can be defined via $K$ observables $\{A_k\}_{k=1}^K$ as
\begin{align}
    g(\xbf) = \arg \max_{k \in [K]}f_k(\xbf), \quad \mbox{with} \hspace{0.1cm} f_k(\xbf)=\Tr(A_k \rho(\xbf)) \label{eq:multiclass_quantumclassifiers}
\end{align} for $k \in [K]=\{1,\hdots,K\}$. The resulting class of quantum classifiers is $\Gcal_r=\{ \xbf \mapsto g(\xbf)=\arg \max_{k \in [K]}f_k(\xbf): f_k \in \Fcal_r \hspace{0.1cm} \text{for all } k \in [K]\}$.

\emph{Remark 1:} The definition of quantum classifiers above assumes a general class $\Acal_r$ of norm-bounded observables. In practice, the observable $A$ is implemented by leveraging parameterized unitary gates as in quantum neural networks. For instance, in binary classification, the observable $A$ takes the form $A=UZ_1U^{\dag}$ where $U$ denotes a tunable, parameterized unitary gate and $Z_1$ denotes a (fixed) Pauli-Z measurement on the first qubit. Such parameterized observables form a subset of the set $\Acal_r$. \qed 

Before defining the adversarial generalization error of the quantum classifiers, following Section~1 and Figure~\ref{fig:embeddings}, we first discuss conventional methods for embedding classical data into quantum states. As we will see, the choice of the embedding scheme is crucial for adversarial generalization.
\vspace{-0.2cm}
\subsection{Quantum Embedding Schemes}\label{sec:quantumembedding}
We now describe the quantum embeddings $\xbf \mapsto \rho(\xbf)$ that will be studied in this work. Pure-state embeddings are of the form $\rho(\xbf)=\vert \psi(\xbf) \rangle \langle \psi(\xbf) \vert$, where $\vert \psi(\xbf) \rangle \in \mathbb{C}^{d_H}$ is a $d_H$-dimensional complex vector with unit norm.

\textbf{Amplitude embedding} describes the mapping 
\vspace{-0.1cm}
\begin{align}
\xbf \mapsto |\psi(\xbf)\rangle=\sum_{i=1}^d\frac{x_i}{\lVert\xbf\rVert_2}|i\rangle,
\end{align}that encodes the normalized feature entries to the amplitudes of computational basis states $\vert i \rangle$ for $i \in [d]$. Implementing this embedding requires $n= \lceil\log_2 d\rceil$  qubits, resulting in a $d_H=2^n= 2^{\lceil\log_2 d\rceil}$-dimensional Hilbert space \cite{schuld2021supervised}.

\textbf{Angle rotation embedding} uses $n=d$ qubits to encode the $d$-dimensional input $\xbf$. Specifically, the $i$-th entry of the feature vector $\xbf$ is encoded into the $i$-th qubit via a Pauli rotation. Using Pauli-Y rotations, this is achieved via the application of a $d$-qubit unitary gate $U(\xbf) = \bigotimes_{j=1}^d e^{-ix_j\sigma_Y}$, with $\sigma_Y$ denoting the Pauli-Y matrix, on an initial state $\vert 0 \rangle$ as $\vert \psi(\xbf)\rangle = U(\xbf) \vert 0 \rangle$. The resulting embedded quantum state is obtained as
\begin{align*}
   \xbf \mapsto \vert \psi(\xbf) \rangle= U(\xbf) \vert 0 \rangle.
\end{align*}
{\textbf{Dense rotation embedding}} uses $n=d/2$ qubits to encode two feature entries into each qubit via the unitary matrix $U(\xbf)=\bigotimes_{j=1}^{d/2} e^{-ix_{2j}\sigma_Z}e^{-ix_{2j-1}\sigma_Y}$ with $\sigma_Z$ denoting Pauli-Z matrix \cite{dowling2024adversarial}. The resulting embedded quantum state $\rho(\xbf)$ can be obtained as in angle rotation embedding.

Finally, \textbf{$L$-layer rotation embedding} schemes obtain the quantum state $\vert \psi(\xbf) \rangle$ by applying the rotation unitary $U(\xbf)$ followed by a fixed unitary $V^{(l)}$ $L$-times. This results in \begin{align} \vert \psi(\xbf) \rangle =  V^{(L)} U(\xbf) \hdots V^{(2)} U(\xbf) V^{(1)} U(\xbf) \vert 0 \rangle.\end{align}  Such repeated encoding schemes that facilitate data re-uploading are known to yield highly expressive quantum models \cite{schuld2021effect}.

Having defined quantum embeddings and set $\Acal_r$ of observables that define the quantum classifier, we now define the generalization error of quantum classifiers.

\section{Generalization Error: Learning-Theoretic Preliminaries}
Let each data tuple $(\xbf,y)$ be sampled from an unknown data distribution $P(\xbf,y)$.
A learner observes a data set $\mathcal{D}=\{(\xbf_1,y_1),\hdots, (\xbf_m,y_m)\}$ of $m$ examples, generated i.i.d according to the distribution $P(\xbf,y)$, and uses it to learn a quantum classifier $g: \mathcal{X} \rightarrow \Ycal$. 
\vspace{-0.1cm}
\subsection{Non-Adversarial Generalization Error} In the non-adversarial setting,
%the goal of the learner is to find the classifier $g \in \mathcal{G}$ in the hypothesis class $\Gcal$ that can perform well on new, previously unseen examples.
the performance of the quantum classifier $g$ on a data point $(\xbf,y)$ is measured using a non-negative loss function $\ell(g,\rho(\xbf),y)$. The goal of the learner is to find the classifier $g \in \Gcal_r$ that minimizes the \textit{population risk},
\begin{align}
    L(g) =\Ebb_{P(\xbf,y)}[\ell(g,\rho(\xbf),y)], \nonumber \end{align}  which is the expected loss over test data tuples drawn from $P(\xbf,y)$. Since this cannot be evaluated due to the unknown data distribution $P(\xbf,y)$, the learner instead minimizes the \textit{empirical} risk,
\begin{align}
    \Lhat(g) =\frac{1}{m}\sum_{i=1}^m \ell(g,\rho(\xbf_i),y_i),\end{align}
evaluated on the observed data set $\Dcal$. The difference between the population risk and empirical risk is the \textit{generalization error},
\begin{align}
    G(g) = L(g)-\Lhat(g),
\end{align} of the classifier $g$.

Understanding the generalization error of ML models is the central question of statistical learning theory \cite{shalev2014understanding} and there exists different approaches to quantify it. In this work, we resort to the \emph{probably approximately correct} (PAC) framework that characterizes the generalization error via RC as in the theorem below.
%when the loss function $\ell(g,\rho(\xbf),y)$ is bounded, we have the following 
\begin{theorem}\label{thm:uniformconvergence}\cite{shalev2014understanding}
Assume that the loss function $\ell(g,\rho(\xbf),y)$ is $[0,B]$-bounded. Then, with probability at least $1-\delta$, for $\delta \in (0,1)$, with respect to the random draws of dataset $\Dcal$, the generalization error $G(g)$ of any classifier $g \in \Gcal_r$ can be upper bounded as
\begin{align}
    G(g) &\leq 2\Rcal(\ell \circ \Gcal_r)+3B \sqrt{\frac{1}{2m} \log \frac{2}{\delta}}, \quad \mbox{where} \\
  \Rcal(\ell \circ \Gcal_r)&=  \Ebb_{\boldsymbol{\sigma}}\Bigl[\sup_{g \in \mathcal{G}_r}\frac{1}{m} \sum_{i=1}^m \sigma_i \ell(g,\rho(\xbf_i),y_i) \Bigr] \label{eq:non-adversarial_Rademacher}
\end{align} is the empirical RC of class $\Gcal_r$ evaluated on dataset $\Dcal$, with $\boldsymbol{\sigma}=(\sigma_1,\hdots,\sigma_m)$ denoting the vector of i.i.d Rademacher variables $\sigma_i \in \{+1,-1\}$ taking values $\pm 1$  with equal probability.
\end{theorem}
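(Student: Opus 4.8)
The plan is to follow the classical Rademacher-complexity route of \cite{mohri2012foundations}, combining McDiarmid's bounded-differences inequality with a symmetrization argument. Abbreviate a data tuple as $Z=(\xbf,y)$ and write $\phi(g,Z)=\ell(g,\rho(\xbf),y)\in[0,B]$. Introduce the single scalar functional of the sample,
\begin{align}
    \Phi(\Dcal)=\sup_{g\in\Gcal_r}\bigl(L(g)-\Lhat(g)\bigr),
\end{align}
so that $G(g)\le\Phi(\Dcal)$ simultaneously for all $g\in\Gcal_r$; it thus suffices to bound $\Phi(\Dcal)$ with high probability. The first step is a concentration bound on $\Phi$: replacing one example in $\Dcal$ changes $\Lhat(g)$, and hence $\Phi$, by at most $B/m$ uniformly in $g$ (using $\ell\in[0,B]$), so McDiarmid's inequality gives, with probability at least $1-\delta/2$,
\begin{align}
    \Phi(\Dcal)\le\Ebb_{\Dcal}\bigl[\Phi(\Dcal)\bigr]+B\sqrt{\frac{1}{2m}\log\frac{2}{\delta}}.
\end{align}

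The second step bounds the expectation $\Ebb_{\Dcal}[\Phi(\Dcal)]$ by the \emph{expected} Rademacher complexity via the ghost-sample symmetrization trick. Introducing an independent copy $\Dcal'$ of the dataset and writing $L(g)=\Ebb_{\Dcal'}[\Lhat'(g)]$, Jensen's inequality moves the inner expectation outside the supremum; exchangeability of $Z_i$ and $Z_i'$ then allows inserting i.i.d.\ Rademacher signs $\sigma_i$ in front of each difference $\phi(g,Z_i)-\phi(g,Z_i')$ without changing the distribution; finally subadditivity of the supremum splits the two sums, each of which contributes the same amount. This yields $\Ebb_{\Dcal}[\Phi(\Dcal)]\le 2\,\Ebb_{\Dcal,\boldsymbol{\sigma}}\bigl[\sup_{g\in\Gcal_r}\frac{1}{m}\sum_{i=1}^m\sigma_i\phi(g,Z_i)\bigr]=2\,\Ebb_{\Dcal}[\Rcal(\ell\circ\Gcal_r)]$.

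The third step replaces the expected RC by the empirical RC appearing in the statement, via a second application of McDiarmid: the map $\Dcal\mapsto\Rcal(\ell\circ\Gcal_r)$ also has bounded differences at most $B/m$ (the Rademacher sign being irrelevant to the magnitude of the change), so with probability at least $1-\delta/2$, $\Ebb_{\Dcal}[\Rcal(\ell\circ\Gcal_r)]\le\Rcal(\ell\circ\Gcal_r)+B\sqrt{\frac{1}{2m}\log\frac{2}{\delta}}$. A union bound over the two favourable events, together with the factor $2$ multiplying the RC in the second step, collects the deviation terms into $3B\sqrt{\frac{1}{2m}\log\frac{2}{\delta}}$, giving $\Phi(\Dcal)\le 2\Rcal(\ell\circ\Gcal_r)+3B\sqrt{\frac{1}{2m}\log\frac{2}{\delta}}$ and hence, since $G(g)\le\Phi(\Dcal)$ for every $g$, the claimed bound.

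The two McDiarmid applications and the bookkeeping of constants are routine. The one step requiring genuine care — and the technical heart of every RC-based generalization bound — is the symmetrization in the second step, specifically justifying that the Rademacher signs may be inserted (exchangeability of the true and ghost samples) and that the resulting supremum splits over the two sums.
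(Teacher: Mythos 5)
Your proposal is correct and is precisely the standard McDiarmid-plus-symmetrization argument behind this bound; the paper does not prove Theorem~\ref{thm:uniformconvergence} itself but cites it from the literature, and the cited textbook proof is exactly the route you follow, including the union bound over the two McDiarmid events that produces the constant $3B$. No gaps.
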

Theorem~\ref{thm:uniformconvergence} shows that the Rademacher complexity of the class $\Gcal_r$ of classifiers can bound the generalization error with high probability.
\vspace{-0.2cm}
\subsection{Adversarial Generalization Error}
We now consider the quantum classification problem in an adversarial setting. Specifically, we consider \textit{white-box adversaries} that have access to the classifier $g(\cdot)$ as well as the loss function $\ell(g,\rho(\xbf),y)$. Depending on the data space that the adversary attacks, we identify two distinct types of adversarial attacks: classical and quantum adversarial attacks. These are depicted in Figure~\ref{fig:adversaries}.
\begin{figure}
    \centering
    \includegraphics[width=1\linewidth]{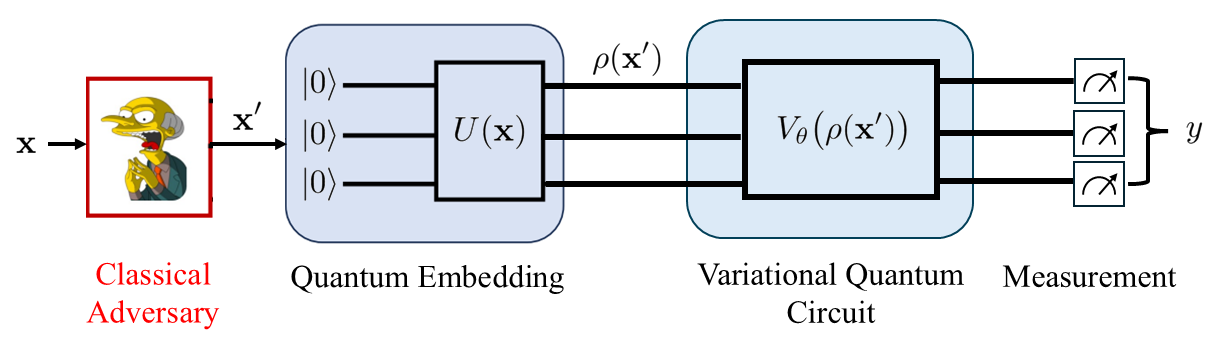}
    \includegraphics[width=1\linewidth]{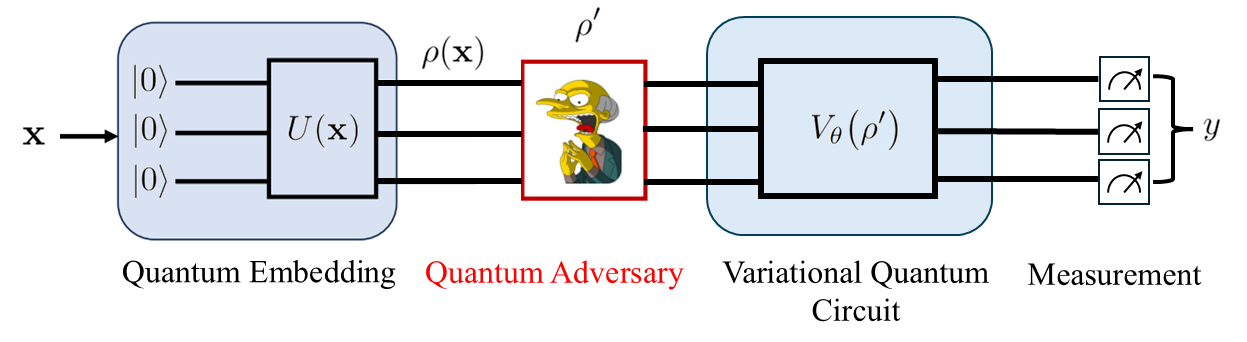}
    \caption{Illustration of a quantum classifier being subject to classical (top) and quantum (bottom) adversarial attacks.}
    \label{fig:adversaries}
\end{figure}

A \textit{classical adversary} perturbs the input feature vector $\xbf \in \mathcal{X}$ to an adversarial feature $\xbf' \in \mathcal{X}$ satisfying the constraint 
\begin{align}
    \Vert \xbf-\xbf' \Vert_p \leq \epsilon,
\end{align} such that it maximizes the loss $\ell(g,\rho(\xbf'),y)$ incurred by the classifier $g$  \cite{dowling2024adversarial}. We call the above attack a \textit{$(p,\epsilon)$-classical attack} since the adversary performs an $l_p$-norm perturbation of the clean input vector within a budget of $\epsilon>0$.  

In contrast, a \textit{quantum adversary} directly perturbs the quantum state $\rho(\xbf)$ to an adversarial quantum state $\rho'$, that satisfies the $p$-Schatten norm constraint 
\begin{align}
\Vert \rho(\xbf)-\rho' \Vert_p \leq \epsilon,
\end{align} 
to maximize the classifier's loss. We call the resulting attack a \textit{$(p,\epsilon)$-quantum adversarial attack}, which assumes that the  adversary has direct access to the quantum state $\rho(\xbf)$.

The attacks defined above incur the following \textit{adversarial loss}  for $(p,\epsilon)$-classical attack,
\begin{align}
    \ell^C_{r,p,\epsilon}(g,\xbf,y)&=\max_{\xbf': \Vert \xbf-\xbf'\Vert_p \leq \epsilon} \ell(g,\rho(\xbf'),y), \quad \mbox{and}\\
    \ell^Q_{r,p,\epsilon}(g,\xbf,y)&=\max_{\rho': \Vert \rho(\xbf)- \rho'\Vert_p \leq \epsilon} \ell(g,\rho',y),
\end{align} for the $(p,\epsilon)$-quantum attack. The superscript $C/Q$ on the adversarial loss denotes if it is under classical or quantum adversary.

Assuming that the learner is aware of the presence of the adversary, their goal is to find the classifier $g \in \Gcal_r$ that minimizes the classical/quantum \textit{adversarial population risk}
\begin{align}
    L^{C/Q}_{r,p,\epsilon}(g)=\Ebb_{P(\xbf,y)}[\ell_{r,p,\epsilon}^{C/Q}(g,\xbf,y)]. \nonumber
\end{align} Since the adversarial population risk cannot be evaluated, the learner instead resorts to \textit{adversarial training} via minimizing the \textit{adversarial empirical risk},
\begin{align}
    \Lhat^{C/Q}_{r,p,\epsilon}(g)=\frac{1}{m}\sum_{i=1}^m  \ell^{C/Q}_{r,p,\epsilon}(g,\xbf_i,y_i). \nonumber
\end{align} The $(p,\epsilon)$-classical/quantum \textit{adversarial generalization error} is then defined as $$G_{r,p,\epsilon}^{C/Q}(g)=L^{C/Q}_{r,p,\epsilon}(g)-\Lhat^{C/Q}_{r,p,\epsilon}(g),$$  the difference between the adversarial population and adversarial empirical risks.

To characterize the adversarial generalization error, one can straightforwardly extend Theorem~\ref{thm:uniformconvergence} to adversarial setting. This results in an upper bound on the  adversarial generalization error $G^{C/Q}_{r,p,\epsilon}(g)$ under classical or quantum adversary via the ARC, 
\begin{align}
  \Rcal(\ell^{C/Q}_{r,p,\epsilon} \circ \Gcal_r)=  \Ebb_{\boldsymbol{\sigma}}\Biggl[\sup_{g \in \mathcal{G}_r}\frac{1}{m} \sum_{i=1}^m \sigma_i \ell^{C/Q}_{r,p,\epsilon}(g,\xbf_i,y_i) \Biggr]  \label{eq:adversarial_Rademacher}.
\end{align} This poses the challenge of analyzing the  ARC $\Rcal(\ell^{C/Q}_{r,p,\epsilon} \circ \Gcal_r)$ of the class $\Gcal_r$ of quantum binary or multi-class classifiers. 
To this end, we resort to covering number-based analysis \cite{vershynin2018high, mohri2012foundations}, deriving the novel bounds on the ARC of binary classifiers defined in the next section.
\section{Adversarial Rademacher Complexity of Binary Classifiers}
In this section, we study the adversarial generalization of binary quantum classifiers of the form $g(\xbf)={\rm sgn}(f(\xbf))$ as defined in \eqref{eq:quantumclassifier}. 
Following \cite{yin2019rademacher} and \cite{awasthi2020adversarial},  we consider the following loss function
\begin{align}
    \ell(g,\rho(\xbf),y) = \phi(yf(\xbf)), \label{eq:binaryloss}
\end{align} where $\phi:\mathbb{R} \rightarrow [0,B]$ is a $[0,B]$-bounded, monotonically non-increasing, $\eta$-Lipschitz continuous function. We first bound the non-adversarial Rademacher complexity $\Rcal(\ell \circ \Gcal_r)$ of this class of classifiers.

\subsection{Non-adversarial Rademacher Complexity Bounds}
In the non-adversarial setting, the Rademacher complexity \eqref{eq:non-adversarial_Rademacher} of the class $\Gcal_r$ of binary classifiers $g(\xbf)={\rm sgn}(f(\xbf))$ under loss function \eqref{eq:binaryloss} evaluates as
\begin{align}\Rcal(\ell \circ \Gcal_r)=\Ebb_{\boldsymbol{\sigma}} \Biggl[\sup_{f \in \Fcal_r} \frac{1}{m}\sum_{i=1}^m \sigma_i \phi(y_i f(\xbf_i)) \Biggr] \nonumber.
\end{align} Since $\phi(\cdot)$ is $\eta$-Lipschitz, $\Rcal(\ell \circ \Gcal_r)$ can be upper bounded using Talagrand's contraction inequality \cite{talagrand} as
\begin{align}
\Rcal(\ell \circ \Gcal) &\leq \eta \Rcal(\Fcal_r) \hspace{0.2cm} \mbox{with} \nonumber \\\Rcal(\Fcal_r)&=\Ebb_{\boldsymbol{\sigma}} \Biggl[\sup_{f \in \Fcal} \frac{1}{m}\sum_{i=1}^m \sigma_i f(\xbf_i) \Biggr], \label{eq:Rcal_Fcal}
\end{align}denoting the RC of the function class $\Fcal_r$ in \eqref{eq:Fcal}.
The following theorem upper bounds the RC $\Rcal(\Fcal_r)$ defined above.
\begin{theorem}\label{thm:binary_nonadversarial} 
For quantum binary classifiers, the empirical RC $\Rcal(\Fcal_r)$, evaluated on an $m$-sample dataset $\Dcal$, can be upper bounded as
\begin{align*}
\Rcal(\Fcal_r) \leq\frac{b}{m}\begin{cases}
\sqrt{2\ln(d_H)\Big\lVert\sum_{i=1}^m\rho(\bold{x}_i)^2\Big\rVert_\infty}&\hspace{-0.3cm},\;r=1\\\\
        B_{\frac{r}{r-1}}\lVert \sqrt{\sum_{i=1}^m\rho(\bold{x}_i)^2}\rVert_{\frac{r}{r-1}}&\hspace{-0.3cm},\;2\leq r<\infty\\\\\lVert\sqrt{\sum_{i=1}^m\rho(\mathbf{x}_i)^2}\rVert_1&\hspace{-0.3cm},\;r=\infty
    \end{cases}
\end{align*}
where $B_{\beta}=2^{-\frac{1}{4}}\sqrt{\frac{\pi \beta}{e}}$.
\end{theorem}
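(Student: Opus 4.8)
The plan is to start from the definition of the empirical RC and rewrite $f(\xbf_i) = \Tr(A\rho(\xbf_i))$ so that the Rademacher average over $A \in \Acal_r$ becomes a linear functional of $A$. Concretely, $\frac{1}{m}\sum_i \sigma_i f(\xbf_i) = \frac{1}{m}\Tr\big(A \sum_i \sigma_i \rho(\xbf_i)\big)$, so that
\begin{align}
\Rcal(\Fcal_r) = \frac{1}{m}\,\Ebb_{\sbf}\Big[\sup_{A:\,\Vert A\Vert_r \le b} \Tr\big(A\, M_{\sbf}\big)\Big], \qquad M_{\sbf} := \sum_{i=1}^m \sigma_i\,\rho(\xbf_i). \nonumber
\end{align}
By the duality of Schatten norms, the inner supremum equals $b\,\Vert M_{\sbf}\Vert_{r'}$ with $r' = r/(r-1)$ the Hölder conjugate (with $r'=\infty$ when $r=1$ and $r'=1$ when $r=\infty$). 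Since $M_{\sbf}$ is Hermitian, $\Vert M_{\sbf}\Vert_{r'} = \Vert M_{\sbf}^2\Vert_{r'/2}^{1/2} = \big\Vert \sqrt{M_{\sbf}^2}\big\Vert_{r'}$, which already matches the shape of the claimed bounds. So the task reduces to bounding $\Ebb_{\sbf}[\Vert M_{\sbf}\Vert_{r'}]$ for each regime of $r$.

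The three cases are then handled by standard matrix-concentration / symmetrization estimates applied to the Rademacher sum $M_{\sbf} = \sum_i \sigma_i \rho(\xbf_i)$. For $r=\infty$ (so $r'=1$), Jensen's inequality gives $\Ebb_{\sbf}\Vert M_{\sbf}\Vert_1 \le \big(\Ebb_{\sbf}\Vert M_{\sbf}\Vert_1^2\big)^{1/2}$, and since $\Vert\cdot\Vert_1 \le \sqrt{d_H}\,\Vert\cdot\Vert_2$ or more directly by expanding $\Ebb_{\sbf}\Tr(M_{\sbf}^2) = \Tr(\sum_i \rho(\xbf_i)^2)$ one reaches $\Vert\sqrt{\sum_i \rho(\xbf_i)^2}\Vert_1$; here I'd be careful to keep the bound in the stated (tight) form rather than losing a $\sqrt{d_H}$. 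For $r=1$ (so $r'=\infty$), the operator norm of a matrix Rademacher series is controlled by the matrix Khintchine / non-commutative Rosenthal inequality, which yields precisely $\Ebb_{\sbf}\Vert M_{\sbf}\Vert_\infty \lesssim \sqrt{\ln(d_H)}\,\big\Vert \sum_i \rho(\xbf_i)^2\big\Vert_\infty^{1/2}$ — this is where the $\sqrt{2\ln d_H}$ factor enters. For $2 \le r < \infty$ (so $1 < r' \le 2$), the relevant tool is the noncommutative Khintchine inequality in Schatten norm $\Vert\cdot\Vert_{r'}$: $\Ebb_{\sbf}\Vert \sum_i \sigma_i \rho(\xbf_i)\Vert_{r'} \le C_{r'}\,\big\Vert (\sum_i \rho(\xbf_i)^2)^{1/2}\big\Vert_{r'}$, with the optimal constant $C_{r'}$; matching the claimed $B_{r/(r-1)} = 2^{-1/4}\sqrt{\pi (r/(r-1))/(e)}$ means invoking the sharp form of Khintchine's constant (the $\Gamma$-function / Stirling estimate giving the $\sqrt{\beta/e}$ growth).

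The main obstacle I anticipate is not the overall structure — that is routine Schatten-norm duality plus symmetrization — but getting the \emph{constants} exactly as stated, particularly the sharp noncommutative Khintchine constant $B_{r/(r-1)}$ in the middle regime and avoiding a spurious dimensional factor in the $r=\infty$ case. I would handle the constant by citing the known optimal (or near-optimal) Khintchine constants in Schatten classes and then applying a Stirling-type estimate to rewrite $\Gamma$-factors in the compact form $2^{-1/4}\sqrt{\pi\beta/e}$; the $\ln d_H$ in the $r=1$ case I would get from the standard trick of bounding the operator norm by $(\Ebb\,\Tr(M_{\sbf}^{2k}))^{1/2k}$ and optimizing over $k$, or equivalently from the matrix-Rademacher bound with its explicit $\sqrt{2\ln d_H}$ prefactor. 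A minor point to check is that $\rho(\xbf_i) \succeq 0$ is not actually needed for the bounds — only Hermiticity — but it does make $\sum_i \rho(\xbf_i)^2$ manifestly positive so that $\sqrt{\sum_i \rho(\xbf_i)^2}$ is well-defined, which is why the statement is phrased that way.
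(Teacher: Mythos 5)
Your proposal follows essentially the same route as the paper: rewrite the Rademacher average as $\frac{1}{m}\Tr\big(A\sum_i\sigma_i\rho(\xbf_i)\big)$, use Schatten-norm duality (the paper uses matrix H\"older) to extract $b\,\Ebb_{\sbf}\big\Vert\sum_i\sigma_i\rho(\xbf_i)\big\Vert_{r/(r-1)}$, and then control this matrix Rademacher series via Jensen plus the operator Khintchine inequalities for $2\le r\le\infty$ and via Tropp's matrix-Rademacher bound (your $\sqrt{2\ln d_H}$ route) for $r=1$. One caveat: for $r=\infty$ the second-moment expansion $\Ebb_{\sbf}\Tr(M_{\sbf}^2)=\Tr\big(\sum_i\rho(\xbf_i)^2\big)$ that you mention only controls $\Ebb_{\sbf}\Vert M_{\sbf}\Vert_2$, not $\Ebb_{\sbf}\Vert M_{\sbf}\Vert_1$, so you genuinely need the Schatten-$1$ noncommutative Khintchine inequality there (as the paper does) rather than either the elementary expansion or the $\sqrt{d_H}$-lossy norm comparison you flag.
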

\begin{proof}
Proof can be found in Appendix~\ref{app:binary_nonadversarial}.
\end{proof} It is important to note that apart from the explicit Hilbert space dimensional dependence of these bounds (as seen in $r=1$ case), they also implicitly depend on $d_H$ through the norms of  $\sum_i\rho(\mathbf{x}_i)^2$. In most practical applications, observables are based on Pauli operators, which have bounded $r=\infty$ norm.

% We now derive bounds on the adversarial Rademacher complexity $\Rcal(\ell^{C/Q}_{p,\epsilon} \circ \Gcal)$.
%\begin{corollary}
%The Rademacher complexity of the function class $\Fcal_{k_{\rm GF}}$ defined by global fidelity quantum kernel $k_{\rm GF}(x,x')$ on $n$-qubit quantum system can be upper bounded as
%\begin{align}
%    \Rcal(\Fcal_{k_{\rm GF}}) \leq  \frac{\lambda b \sqrt{{\rm Tr}(K)}}{m}=\frac{\lambda b}{\sqrt{m}},
%\end{align}
%while that induced by the projected quantum kernel on an $s$-qubit sub-system is upper bounded as \begin{align}
%    \Rcal(\Fcal_{k_{\rm GF}}) \leq  \frac{\lambda b \sqrt{{\rm Tr}(K)}}{m} \leq \frac{\lambda b}{\sqrt{m}}.
%\end{align}
%\end{corollary}
\subsection{Adversarial Rademacher Complexity Bounds} 
We start by analyzing the ARC \eqref{eq:adversarial_Rademacher} under classical attacks.
%The analysis below easily extends to quantum attacks.
Note that since the function $\phi(\cdot)$ is monotonically non-increasing, the ARC  under classical attack can be equivalently written as 
\begin{align}
\Rcal(\ell^{C}_{r,p,\epsilon} \circ \Gcal_r)
%&= \Ebb_{\boldsymbol{\sigma}}\Bigl[\sup_{f \in \Fcal} \frac{1}{m} \sum_{i=1}^m \max_{x': \Vert x-x'\Vert_p \leq \epsilon}\phi(y_if(x_i)) \Bigr] \nonumber\\
& = 
\Ebb_{\boldsymbol{\sigma}}\Biggl[\sup_{f \in \Fcal_r} \frac{1}{m} \sum_{i=1}^m  \phi \Bigl(\min_{\xbf': \Vert \xbf-\xbf'\Vert_p \leq \epsilon}y_if(\xbf_i)\Bigr) \Biggr] \nonumber.
%\\&=\Ebb_{\boldsymbol{\sigma}}\Bigl[\sup_{h \in \Fcal^{C/Q}_{p,\epsilon}} \frac{1}{m}\sum_{i=1}^m \sigma_i \phi(h(x_i,y_i))\Bigr],\label{eq:adversarial_equivlaent_binary} 
%& \leq L \Ebb_{\boldsymbol{\sigma}}\Bigl[\sup_{h \in \Fcal^{C/Q}_{p,\epsilon}} \frac{1}{m}\sum_{i=1}^m \sigma_i h(x_i,y_i)\Bigr]: =L \Rcal(\Fcal^{C/Q}_{p,\epsilon}),
 %   \Fcal^{\phi}_{p,\epsilon}=\{(x,y) \mapsto \max_{x': \Vert x-x'\Vert_p \leq \epsilon} \phi(yf(x)): f \in \Fcal\}, \label{eq:F_adv}
\end{align} Upper bounding via Talagrand's contraction inequality \cite{talagrand} yields,
\begin{align}
    \Rcal(\ell^{C}_{r,p,\epsilon} \circ \Gcal_r) & \leq \eta \Ebb_{\boldsymbol{\sigma}}\Biggl[\sup_{h \in \Fcal^{C}_{r,p,\epsilon}} \frac{1}{m}\sum_{i=1}^m \sigma_i h(\xbf_i,y_i)\Biggr] \nonumber \\&=\eta \Rcal(\Fcal_{r,p,\epsilon}^C),\label{eq:adversarial_equivlaent_binary} 
\end{align}
%where $\Fcal$ is defined as in \eqref{eq:Fcal}. Noting that $\phi$ is monontonically non-increasing, this function class can be then equivalenty written as $\Fcal^{\phi}_{p,\epsilon}=\phi \circ \Fcal_{p,\epsilon}$ 
where $\Fcal_{r,p,\epsilon}^C$ is the adversarial function class  defined as
\begin{align*}
    \Fcal_{r,p,\epsilon}^C &=\Bigl\{(\xbf,y) \mapsto h(\xbf,y)= \hspace{-0.2cm}\min_{\xbf': \Vert \xbf-\xbf'\Vert_p \leq \epsilon}y f(\xbf'): f \in \Fcal_r\Bigr\}.
    \end{align*}
Similarly, under quantum attack, we get that $\Rcal(\ell^Q_{r,p,\epsilon} \circ \Gcal_r) \leq \eta \Rcal(\Fcal_{r,p,\epsilon}^Q)$ with
    \begin{align}\Fcal_{r,p,\epsilon}^Q =\Bigl\{(\xbf,y) \mapsto h(\xbf,y)&= \hspace{-0.2cm} \min_{\rho': \Vert \rho(\xbf)-\rho'\Vert_p \leq \epsilon} y\Tr(A\rho')\nonumber\\&: A \in \Acal_r\Bigr\}.\end{align}
    % Using Talgrand's contraction inequality, we then have that
  %  \begin{align}
  %   \Rcal(\ell^{C/Q}_{p,\epsilon} \circ \Gcal) &\leq L\Rcal( \Fcal_{p,\epsilon}^{C/Q}), \hspace{0.2cm} \mbox{with} \nonumber\\ \Rcal( \Fcal_{p,\epsilon}^{C/Q}) &= \Ebb_{\boldsymbol{\sigma}}\Bigl[\sup_{h \in \Fcal^{C/Q}_{p,\epsilon}} \frac{1}{m}\sum_{i=1}^m \sigma_i h(x_i,y_i)\Bigr].
  %  \end{align}
    %The equivalence above follows since $\phi(\cdot)$ is a monotonically non-increasing function. Using the representation in \eqref{eq:adversarial_equivlaent_binary}, we first derive an upper bound on the adversarial Rademacher complexity.
%Using Theorem~\ref{thm:uniformconvergence} then yields that with probability at least $1-\delta$, the adversarial generalization error $G_{p,\epsilon}(f)$ is upper bounded as
%\begin{align}
%    G_{p,\epsilon}(f) \leq 2 \mathcal{R}(\Fcal_{p,\epsilon}) + 3B \sqrt{\frac{1}{2m} \log \frac{2}{\delta}}.
%\end{align} In the following subsections, we derive bounds on the adversarial Rademacher complexity $\mathcal{R}(\Fcal_{p,\epsilon})$.
%We now present novel upper bounds on the  Rademacher complexity $\Rcal(\Fcal_{p,\epsilon}^{C/Q})$ of the adversarial function class.
%\subsubsection{Upper Bound on $\Rcal(\Fcal^{C/Q}_{p,\epsilon})$}

The following theorem presents an upper bound on the  ARC $\Rcal(\Fcal_{r,p,\epsilon}^{C/Q})$ of the adversarial function class $\Fcal_{r,p,\epsilon}^{C/Q}$ via covering number-based analysis. The proof is deferred to Section~\ref{sec:proof}.

% $\Fcal^{C/Q}_{p,\epsilon}$ consists of minimizers of $y\Tr(A\rho(x))$ over perturbations on the feature space or quantum state space. As explained under challenges in Section~\ref{sec:introduction}, finding the optimal analytical solution to this minimization problem is  non-trivial, thus making the adversarial generalization analysis of quantum classifiers more challenging than that of classical linear models \cite{yin2019rademacher, awasthi2020adversarial}.We circumvent this issue by using covering number-based analysis of the adversarial Rademacher complexity to get the following bound. The proof sketch is deferred to Section~\ref{sec:proof}.

%The following theorem presents an upper bound on $\Rcal(\Fcal_{p,\epsilon}^{C/Q})$.
\begin{theorem}\label{thm:upperbound_binary}
   The ARC $\Rcal(\Fcal_{r,p,\epsilon}^{C/Q})$  of adversarial function class $\Fcal_{r,p,\epsilon}^{C/Q}$, evaluated on $m$-sample dataset $\Dcal$, under classical/quantum adversarial attack, with $p,r \geq 1$, can be upper bounded as
    \begin{align}
        &\Rcal(\Fcal_{r,p,\epsilon}^{C/Q}) \leq  \Rcal(\Fcal_r)+ \frac{b\rob J(r)}{\sqrt{m}},\label{eq:adversarial_binary_upperbound}
    \end{align}
    where $J(r)$ scales as $\Ocal(\sqrt{1+1/r})$, and
    %$z(r)=6\sqrt{\log(2^{1+1/r}3)}-\frac{36\sqrt{\pi}}{2^{1-1/r}}\Big(1+{\rm erf}$ $\big(\sqrt{\log(2^{1+1/r}3)}\big)\Big)\nonumber$ with ${\rm erf}(x)$ denoting the error function, and 
  %  ${\rm erf}(x)=\frac{2}{\sqrt{\pi}}\int_0^xe^{-z^2}dz$ is the error function and 
    \begin{align}
\mathcal{S}_{r, p,\epsilon}^{C}&=d_H\sup_{\xbf,\xbf':\Vert \xbf-\xbf'\Vert_p \leq \epsilon} \Vert \rho(\xbf) -\rho(\xbf')\Vert_{\frac{r}{r-1}}, \label{eq:classical_c}\\
\mathcal{S}_{r, p,\epsilon}^{Q}&=d_H\sup_{\rho(\bold{x}),\rho':\Vert \rho(\bold{x})-\rho'\Vert_p \leq \epsilon} \Vert \rho(\bold{x}) -\rho'\Vert_{\frac{r}{r-1}} \nonumber \\&\leq \epsilon\max\{d_H, d_H^{2-1/r-1/p}\}. \label{eq:quantum_c}
    \end{align}
\end{theorem}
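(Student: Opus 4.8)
The plan is to reduce the ARC of the adversarial class $\Fcal_{r,p,\epsilon}^{C/Q}$ to the non-adversarial RC plus a controllable correction term via a covering-number argument on the set of observables $\Acal_r$. The starting observation is that each adversarial function $h(\xbf,y)=\min_{\xbf'}yf(\xbf')$ (resp.\ $\min_{\rho'}y\Tr(A\rho')$) can be decomposed as $yf(\xbf) + \big(h(\xbf,y)-yf(\xbf)\big)$, where the first part generates exactly $\Rcal(\Fcal_r)$ after using $y_i\sigma_i \overset{d}{=}\sigma_i$, and the second part is a ``perturbation'' term bounded pointwise by $\sup_{\xbf'}|\Tr(A(\rho(\xbf)-\rho(\xbf')))|$. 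Using Hölder's inequality for Schatten norms, $|\Tr(A\,\Delta)|\leq \|A\|_r\|\Delta\|_{r/(r-1)}\leq b\,\|\Delta\|_{r/(r-1)}$, which is where the conjugate exponent $r/(r-1)$ and the factor $b$ in \eqref{eq:classical_c}--\eqref{eq:quantum_c} come from; the $d_H$ prefactor will emerge from the covering/volumetric step described next.

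First I would set up a proper covering of $\Acal_r$ in an appropriate norm (the dual norm $r/(r-1)$, so that the Hölder pairing above is the natural one), invoking the standard volumetric bound $\log N(\Acal_r,\|\cdot\|_{r/(r-1)},t)\lesssim d_H^2\log(1+2b/t)$ since $\Acal_r$ lives in the real $d_H^2$-dimensional space of Hermitian matrices. Then I would run a Dudley-type chaining argument on the Rademacher process $\sigma\mapsto \sup_{h}\frac1m\sum_i\sigma_i h(\xbf_i,y_i)$: the ``clean'' component contributes $\Rcal(\Fcal_r)$, and the adversarial-minus-clean component is a process whose increments are controlled by $\|A-A'\|_{r/(r-1)}$ times $\sup_{\xbf'}\|\rho(\xbf)-\rho(\xbf')\|_{r/(r-1)}$ (classical) or $\sup_{\rho'}\|\rho(\xbf)-\rho'\|_{r/(r-1)}$ (quantum). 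Chaining over the covering yields a bound of order $\frac{1}{\sqrt m}\cdot b\cdot \big(\sup \|\rho(\xbf)-\rho(\xbf')\|_{r/(r-1)}\big)\cdot \sqrt{d_H^2}$, i.e.\ the $d_H\cdot\sup\|\cdot\|_{r/(r-1)}$ appearing in $\rob$, with the $\sqrt{\log(\cdots)}$ integral in the chaining producing the $J(r)=\Ocal(\sqrt{1+1/r})$ factor (the $1/r$ dependence tracks how the covering metric degrades as $r\to\infty$). The linearity of $\min$-over-a-convex-constraint in the direction of $A$ is what lets the increments of the adversarial process be bounded by the same Schatten distance as in the non-adversarial case — this is the key structural point that makes the decomposition clean.

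Next I would verify the quantum-attack simplification in \eqref{eq:quantum_c}: here $\sup_{\rho':\|\rho(\xbf)-\rho'\|_p\leq\epsilon}\|\rho(\xbf)-\rho'\|_{r/(r-1)}$ is a pure norm-comparison on a $d_H\times d_H$ Hermitian difference $\Delta$ with $\|\Delta\|_p\leq\epsilon$. Using the standard monotonicity relations between Schatten norms — $\|\Delta\|_q\leq\|\Delta\|_p$ when $q\geq p$, and $\|\Delta\|_q\leq d_H^{1/q-1/p}\|\Delta\|_p$ when $q\leq p$ — with $q=r/(r-1)$, one gets $\sup\|\Delta\|_{r/(r-1)}\leq \epsilon\max\{1,d_H^{(r-1)/r-1/p}\}=\epsilon\max\{1,d_H^{1-1/r-1/p}\}$, and multiplying by the $d_H$ prefactor gives $\epsilon\max\{d_H,d_H^{2-1/r-1/p}\}$ as claimed. (For the classical attack no such closed form is available in general because $\rho(\xbf')$ ranges only over the image of the embedding, not a full Schatten ball, so \eqref{eq:classical_c} is left in terms of the embedding modulus of continuity; the embedding-specific consequences are then what Table~\ref{tab:comparison} works out.)

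The main obstacle I anticipate is making the chaining bookkeeping rigorous for the adversarial process: the functions $h$ are pointwise infima, so $h\mapsto$ (its Rademacher average) is not linear in $A$, and one must argue carefully that the process still has sub-Gaussian increments controlled by $\|A-A'\|_{r/(r-1)}$ — e.g.\ via the elementary inequality $|\min_{z}a(z)-\min_{z}b(z)|\leq \sup_z|a(z)-b(z)|$ applied with $a(\xbf')=y\Tr(A\rho(\xbf'))$, $b(\xbf')=y\Tr(A'\rho(\xbf'))$, which is Lipschitz in $A-A'$ uniformly over the constraint set. Getting the constant $J(r)$ to come out with exactly the advertised $\Ocal(\sqrt{1+1/r})$ scaling (rather than, say, a cruder $\Ocal(\sqrt{\log d_H})$) will require being slightly clever about which norm one covers in and about using the type/cotype structure of Schatten classes; this is the delicate quantitative part, whereas the overall architecture of the proof (decompose, Hölder, cover, chain) is routine.
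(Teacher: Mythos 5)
Your proposal follows essentially the same route as the paper: decompose the adversarial function as the clean part plus a perturbation, bound the sup of the sum by the sum of sups to isolate $\Rcal(\Fcal_r)$, control the perturbation class's increments via $|\min_z a(z)-\min_z b(z)|\leq\sup_z|a(z)-b(z)|$ and H\"older, cover the $d_H^2$-dimensional observable space, and apply Dudley's entropy integral, with the quantum-case norm comparison handled exactly as in the paper. One slip to fix: you should cover $\Acal_r$ in the $\|\cdot\|_r$ norm (as the paper does), not the dual norm $\|\cdot\|_{r/(r-1)}$, since the H\"older pairing you correctly invoke bounds increments by $\|A-A'\|_r\,\|\rho(\xbf)-\rho(\xbf')\|_{r/(r-1)}$, and no type/cotype argument is needed for $J(r)=\Ocal(\sqrt{1+1/r})$ --- it falls out of the $2^{1/r}$ factor in the covering number.
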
 
\begin{proof}
Proof can be found in Appendix~\ref{app:binary_adversary_upperbound}.
\end{proof}
%{\color{red} I dont remember seeing the dependence on $r$ in (16), as we used to upper bound using the 1-Schatten norm? I think if we can completely omit dependence on $r$, even in (17), then it is easier to interpret this term as a function of the adversary alone.}

The bound in \eqref{eq:adversarial_binary_upperbound} shows that $\Rcal(\Fcal_{r,p,\epsilon}^{C/Q})$ is upper bounded by the sum of the  RC $\Rcal(\Fcal_r)$ of non-adversarial function class $\Fcal_r$ and an additional term, the \textit{excess Rademacher complexity} due to adversarial training, scaling as $\mathcal{O}(b\rob/\sqrt{m})$. In the absence of adversarial attacks (i.e., $\epsilon=0$), the excess RC vanishes since $\rob=0$. For $\epsilon>0$ and under classical adversarial attack, the constant $\mathcal{S}_{r,p,\epsilon}^{C}$ in \eqref{eq:classical_c} is a measure of \textit{smoothness} of the quantum embedding $\xbf \mapsto \rho(\xbf)$  to classical adversarial perturbations scaled by the dimension $d_H$ \cite{dowling2024adversarial}. Thus, if the quantum embedding chosen has good scaled adversarial smoothness, it results in a lower value of $\mathcal{S}_{r,p,\epsilon}^{C}$, and thus lower adversarial generalization error.
%The theorem above shows that for classical $(p,\epsilon)$-attacks, choosing robust quantum embeddings can help lower the adversarial generalization error. 
In contrast, under quantum attacks the choice of quantum embedding only determines the dimension $d_H$, as can be seen from \eqref{eq:quantum_c}. 
%In particular, under quantum attack, the second term of \eqref{eq:adversarial_binary_upperbound} scales linearly with $d_H$ as $O(\epsilon b d_H/\sqrt{m})$ for $p\leq r/(r-1)$, and scales at least linearly for $p>r/(r-1)$.

\subsubsection{Impact of Quantum Embeddings}
We now  discuss some of the insights gained from the upper bound in Theorem~\ref{thm:upperbound_binary}  about the choice of quantum embedding on adversarial generalization. To this end,  the following 
%Since $C_{p,\epsilon}^c$ under classical attack depends on the specific quantum embedding, 
 proposition upper bounds the scaled excess RC $\mathcal{S}_{r, p,\epsilon}^C$ for the quantum embeddings introduced in Section~\ref{sec:quantumembedding}.
\begin{proposition} \label{prop:quantumembeddings}The scaled excess RC $\mathcal{S}^{C}_{r, p,\epsilon}$ can be upper bounded under amplitude, $L$-layer rotation and dense embeddings respectively as 
\begin{align}
&\mathcal{S}^{C}_{r, p,\epsilon} \nonumber \\& \leq \begin{cases} 2^{1+\lceil\log_2d\rceil}\min\Big\{\frac{\epsilon\max\{1, d^{1/2-1/p}\}}{\min_{x\in\mathcal{X}}\lVert x\rVert_2},1\Big\} \hspace{0.5cm} \mbox{(amplitude)}, \\
2L(2\epsilon)^dd^{-d/p}\hspace{0.4cm} \qquad \qquad \qquad \qquad \mbox{(L-layer angle)},\\
2L(2\sqrt{2}\epsilon)^{d/2}\max\{d^{-d/4},d^{-d/2p}\} \hspace{0.3cm}  \mbox{(L-layer dense)}.\end{cases} \nonumber\end{align}
\end{proposition}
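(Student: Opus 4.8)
The plan is to bound the quantity
\[
\mathcal{S}^{C}_{r,p,\epsilon}=d_H\sup_{\xbf,\xbf':\Vert\xbf-\xbf'\Vert_p\leq\epsilon}\Vert\rho(\xbf)-\rho(\xbf')\Vert_{\frac{r}{r-1}}
\]
for each embedding separately, using two elementary reductions. First, since all embeddings considered here are pure states $\rho(\xbf)=|\psi(\xbf)\rangle\langle\psi(\xbf)|$, the difference $\rho(\xbf)-\rho(\xbf')$ has rank at most $2$, so its Schatten $\frac{r}{r-1}$-norm is controlled by its operator norm up to a factor $2^{(r-1)/r}\le 2$; and the operator norm of the difference of two rank-one projectors is at most $\||\psi(\xbf)\rangle-|\psi(\xbf')\rangle\|_2$ (in fact equal to $\sin\theta$ for the angle $\theta$ between the states, but the crude bound suffices). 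Hence in every case it is enough to bound $d_H\cdot 2\cdot\sup\||\psi(\xbf)\rangle-|\psi(\xbf')\rangle\|_2$ over the $\ell_p$-ball, and the work reduces to a Lipschitz-type estimate of each feature map $\xbf\mapsto|\psi(\xbf)\rangle$ with respect to $\ell_p$-perturbations. The second reduction is the standard $\ell_p$-vs-$\ell_q$ norm inequality $\|\cdot\|_2\le d^{1/2-1/p}\|\cdot\|_p$ for $p\ge2$ and $\|\cdot\|_2\le\|\cdot\|_p$ for $p\le2$, which accounts for the $\max\{1,d^{1/2-1/p}\}$-type factors that appear.

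For \textbf{amplitude embedding}, $d_H=2^{\lceil\log_2 d\rceil}$ and $|\psi(\xbf)\rangle=\xbf/\|\xbf\|_2$. I would bound $\||\psi(\xbf)\rangle-|\psi(\xbf')\rangle\|_2$ by the Lipschitz constant of the normalization map $\xbf\mapsto\xbf/\|\xbf\|_2$, whose Jacobian has operator norm $1/\|\xbf\|_2$, giving $\|\psi(\xbf)-\psi(\xbf')\|_2\le \|\xbf-\xbf'\|_2/\min_{x\in\mathcal{X}}\|x\|_2$; combining with $\|\xbf-\xbf'\|_2\le\max\{1,d^{1/2-1/p}\}\epsilon$ and the rank-$2$ bound yields the first term $\epsilon\max\{1,d^{1/2-1/p}\}/\min_{x}\|x\|_2$, while the trivial bound $\|\rho-\rho'\|_{\frac{r}{r-1}}\le 2$ (two unit-trace states, rank-$2$ difference) gives the competing constant $1$ after multiplying by $d_H$; taking the minimum and absorbing factors of $2$ gives the claimed $2^{1+\lceil\log_2 d\rceil}\min\{\cdots,1\}$. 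For the \textbf{$L$-layer rotation embeddings}, the key point is that $d_H=2^d$ for angle embedding and $d_H=2^{d/2}$ for dense embedding, so the prefactor $d_H$ is exponentially large and must be beaten by an exponentially small smoothness term. Here I would use the explicit tensor-product structure: $U(\xbf)=\bigotimes_j e^{-ix_j\sigma_Y}$ (angle) applied to $|0\rangle$ gives $|\psi(\xbf)\rangle=\bigotimes_j(\cos x_j|0\rangle-\sin x_j|1\rangle)$, and the interleaved fixed unitaries $V^{(l)}$ are isometries that do not change $\ell_2$-distances. The overlap $\langle\psi(\xbf)|\psi(\xbf')\rangle=\prod_j\cos(x_j-x_j')$ for one layer, and more generally one can bound the state distance after $L$ layers by $L$ times the single-layer distance (each application of $U(\xbf)$ vs $U(\xbf')$ contributes, and the $V^{(l)}$'s are norm-preserving). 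The product $\prod_j|\cos(x_j-x_j')|$ or the associated distance, combined with $\sum_j|x_j-x_j'|^p\le\epsilon^p$, is where the $(2\epsilon)^d d^{-d/p}$ factor comes from — maximizing $\prod_j\delta_j$ subject to $\sum_j\delta_j^p\le\epsilon^p$ gives $\delta_j=\epsilon d^{-1/p}$ by symmetry, hence a product of order $(\epsilon d^{-1/p})^d$, and the $2$'s and the dense-embedding's extra $\sigma_Z$ rotation produce the $2\sqrt2$ and the $\max\{d^{-d/4},d^{-d/2p}\}$ refinements.

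The \textbf{main obstacle} is the $L$-layer rotation/dense case: one must carefully track how the repeated structure $V^{(L)}U(\xbf)\cdots V^{(1)}U(\xbf)$ propagates a perturbation, showing the distance grows at most linearly in $L$ (a telescoping/triangle-inequality argument over the $L$ blocks, using unitarity of each $V^{(l)}$ and $U(\xbf)$), and then optimizing the multivariate product $\prod_{j=1}^{d}|\cos(x_j-x_j')|$ (or $\prod_j|\sin(\cdot)|$-type distance terms) against the $\ell_p$-budget constraint via a Lagrange/AM–GM argument to extract the exact exponents $d^{-d/p}$ (angle) and $\max\{d^{-d/4},d^{-d/2p}\}$ (dense, where the two rotation axes per qubit split the budget differently). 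The amplitude case is comparatively routine once the normalization-Lipschitz bound is in place, so I expect roughly one short paragraph for it and the bulk of the proof devoted to the rotation embeddings, very much in the spirit of the smoothness estimates in \cite{dowling2024adversarial}.
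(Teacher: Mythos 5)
Your overall plan coincides with the paper's proof: reduce the Schatten-$\frac{r}{r-1}$ norm of $\rho(\xbf)-\rho(\xbf')$ to the pure-state trace distance $2\sqrt{1-|\langle\psi(\xbf)|\psi(\xbf')\rangle|^2}$; handle amplitude embedding by a Lipschitz estimate for the normalization map (the paper instead computes the overlap explicitly and maximizes over the angle between $\xbf$ and $\delta\xbf$, arriving at the same $2\min\{\lVert\delta\xbf\rVert_2/\lVert\xbf\rVert_2,1\}$ that your Jacobian argument gives); and handle the $L$-layer embeddings by an induction/telescoping over layers that produces the factor $L$, followed by a per-qubit computation and AM--GM against the $\ell_p$ budget. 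The amplitude case in your sketch is complete and correct, and the $\max\{1,d^{1/2-1/p}\}$ and $d_H=2^{\lceil\log_2 d\rceil}$ factors come out exactly as in the paper.

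The gap is in the rotation/dense case, at precisely the step you identify as the source of the $(2\epsilon)^d d^{-d/p}$ factor. The single-layer overlap is indeed $\prod_j\cos(\delta x_j)$, but the quantity to be bounded is the distance $2\sqrt{1-\prod_j\cos^2(\delta x_j)}$, and $1-\prod_j\cos^2(\delta x_j)\leq\sum_j\big(1-\cos^2(\delta x_j)\big)\leq\sum_j\delta x_j^2$: for small perturbations the distance behaves like $2\lVert\delta\xbf\rVert_2$, a \emph{sum} over qubits, not a product. Concretely, for $d=2$, $L=1$, $p=\infty$, $\xbf=(0,0)$ and $\xbf'=(\epsilon,\epsilon)$, the trace distance is $2\sqrt{1-\cos^4\epsilon}\approx 2\sqrt{2}\epsilon$, which exceeds $2\epsilon^2=2\epsilon^dd^{-d/p}$ for small $\epsilon$, so the program ``maximize $\prod_j\delta_j$ subject to the $\ell_p$ budget'' is optimizing the wrong functional and cannot deliver the claimed exponential decay in $d$. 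Be aware also that the paper's own appendix reaches the product $\prod_j|\delta x_j|$ by rewriting $(U(\xbf)-U(\xbf'))^{\dagger}(U(\xbf)-U(\xbf'))$ as $\bigotimes_j(I-e^{i\delta x_j\sigma_Y})(I-e^{-i\delta x_j\sigma_Y})$, i.e.\ by identifying $\bigotimes_jA_j-\bigotimes_jB_j$ with $\bigotimes_j(A_j-B_j)$ up to local unitaries, which is not a valid identity for $d\geq2$; a correct telescoping over the tensor factors yields $\sum_j\lVert A_j-B_j\rVert$ instead. You therefore cannot close this gap by ``filling in the details following the smoothness estimates'': as written, neither your sketch nor the paper's computation establishes the $L$-layer angle and dense bounds.
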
 

The proposition above derives the behavior of the scaled excess RC in \eqref{eq:adversarial_binary_upperbound} under various quantum embeddings as detailed in  Table~\ref{tab:comparison}. This in turn can be used to study the impact of the quantum embedding on the ARC $\Rcal(\Fcal_{r,p,\epsilon}^{C})$ and thus on the adversarial generalization error. 

Specifically, under \textbf{classical attacks}, we have the following observations.
%To this end, recall from Theorem~\ref{thm:binary_nonadversarial} that the Radamacher complexity $\Rcal(\Fcal)$ can be upper bounded by a term  independent of the Hilbert space dimension $d_H$ for all $r\neq 1$. Accordingly, the bound does not depend on the quantum embedding, and adversarial generalization is determined, in terms of the bounds derived in this work, solely  by the  term $C_{p,\epsilon}^{C/Q}$.  Consequently, our key findings can be summarized as follows. 
% scales as $\Ocal(\sqrt{\ln d_H})$, where the relation between the Hilbert space dimension $d_H$ and the dimension of the input data, $d$,  is determined by the quantum embedding, for $r=1$, and 
% Thus, except for $r=1$, the choice of quantum embedding does not affect the derived bounds on the Radamacher complexity $\Rcal(\Fcal)$. It follows that  A
In the \textit{high-dimensional} regime, where $d \gg e^{\max\{2,p\}\ln(2\sqrt{2}\epsilon)-1}$, rotation embeddings incur negligible excess RC, requiring a vanishingly small increase in sample complexity to ensure the adversarial generalization to be on par with standard non-adversarial generalization. Indeed, for rotation embeddings, the excess RC \textit{decays at least exponentially with the data dimension $d$}, vanishing in the limit $d\rightarrow\infty$. This implies that, in these regimes, one can leverage the high expressivity of repeated rotation embeddings while incurring a negligible increase in the sample complexity caused by the possible presence of an adversary.

In contrast, for quantum classifiers that employ  \textit{amplitude embedding}, the ARC has an unavoidable dimensional dependence which is at most linear in $d$.

\textit{Remark 2:} The aim of adversarial training is to minimize the adversarial population risk. This is bounded via the ARC with probability $1-\delta$ as 
\begin{align*}
    L^{C}_{r,p,\epsilon}(g)\leq \widehat{L}^{C}_{r,p,\epsilon}(g) +2\Rcal(\ell^{C}_{r,p,\epsilon} \circ \Gcal_r)+3B \sqrt{\frac{1}{2m} \log \frac{2}{\delta}}
\end{align*}
Our analysis can assess which embeddings yield low adversarial Rademacher complexity, but cannot guarantee that they also minimize the \textit{empirical risk}. Indeed, in the non-adversarial setting, it has been shown that minimizing the empirical risk and the generalization error are competing processes with regards to the choice of quantum embedding \cite{banchi2021generalization}. We leave it to future work to assess how to choose an embedding which jointly minimizes the adversarial empirical risk and generalization error.\qed

%While the above findings hold irrespective of $r$, we also identify another \textit{weak adversary} regime, $\epsilon\leq\mathcal{O}(d^k), k\leq-1/2$, where the amplitude embedding generalizes better than rotation embedding for $r=1$, with the converse being true for $r\neq1$.

%\textbf{Quantum attacks:}
%For attacks on quantum state space, irrespective of the choice of $r$-Schatten norm, the leading contributor to $\Rcal(\Fcal_{p,\epsilon}^q)$ is the $C^q_{p,\epsilon}$ term. Since this term scales at least linearly with $d_H$, 
Finally, when the adversary employs \textbf{quantum attacks}, the excess RC scales at least linearly with the dimension of the Hilbert space. For rotation embeddings, this implies an exponential dependence on the data dimension $d$, whilst amplitude embedding based classifiers incur an additional term with at most quadratic dependence on $d$.

\vspace{-0.3cm}
\subsubsection{Tighter Bounds for Noisy Embeddings}\label{subsec:tight}
In this section, we focus on quantum adversarial attacks and derive tighter bounds on the ARC $\Rcal(\Fcal^Q_{r,p,\epsilon})$, under  the following restricted class of noisy embeddings studied in \cite{georgiou_ISIT24}.
\begin{restatable}{assumption}{asegval}
    \label{assum:1}
  The quantum embedding $\xbf \mapsto \rho(\xbf)$ is such that  the minimum eigenvalue $\lambda_{\min}(\rho(\xbf)) \geq \epsilon$.  
\end{restatable} 
 Assumption~\ref{assum:1} is restrictive as it naturally excludes all pure quantum states.
 %However, due to quantum gate noise that is inherent in NISQ era of quantum devices, this assumption may be satisfied even for originally pure quantum state. 
However, it may reflect well the quantum embeddings produced by NISQ quantum devices.
Under Assumption~\ref{assum:1}, the following theorem gives upper and lower bounds on the ARC $\Rcal(\Fcal^{Q}_{r,p,\epsilon})$. 
\begin{theorem}\label{thm:lowerbound} Consider a $(p,\epsilon)$-quantum adversary for $p \geq 1$ and $\epsilon>0$. Under Assumption~\ref{assum:1}, the  ARC $\Rcal(\Fcal^{Q}_{r,p,\epsilon})$  satisfies the inequalities
\begin{align}\Rcal(\Fcal_r)\leq\Rcal(\Fcal^{Q}_{r,p,\epsilon}) \leq \Rcal(\Fcal_r)+\frac{b\epsilon\max\{1,d_H^{1-1/p-1/r}\}}{\sqrt{m}}. \label{eq:bounds_noisyembedding}
    \end{align}
\end{theorem}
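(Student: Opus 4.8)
The plan is to establish the two inequalities in \eqref{eq:bounds_noisyembedding} separately, exploiting Assumption~\ref{assum:1} to reduce the quantum-adversarial maximization to an explicit, tractable form.

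For the \emph{lower bound} $\Rcal(\Fcal_r) \leq \Rcal(\Fcal^{Q}_{r,p,\epsilon})$, the key observation is that under Assumption~\ref{assum:1} the $\epsilon$-ball $\{\rho' : \Vert \rho(\xbf)-\rho'\Vert_p \leq \epsilon\}$ actually contains valid density matrices — in particular, since $\lambda_{\min}(\rho(\xbf))\geq \epsilon$, one can perturb $\rho(\xbf)$ in any traceless Hermitian direction of $p$-norm $\epsilon$ and remain positive semidefinite. So the inner minimization $\min_{\rho'} y\,\Tr(A\rho')$ is over a genuinely larger set than the single point $\rho(\xbf)$, and hence for every $f$ the adversarial function $h(\xbf,y)$ is pointwise $\leq y f(\xbf)$. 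The subtlety is that this pointwise domination of the class $\Fcal^Q_{r,p,\epsilon}$ by $\Fcal_r$ does not immediately give an inequality of Rademacher complexities (the sup is inside the Rademacher average). The standard trick here — as in \cite{yin2019rademacher, awasthi2020adversarial} — is to note that for the specific worst-case choice, $\max_{\rho'} \ell(g,\rho',y) \geq \ell(g,\rho(\xbf),y)$, i.e. the adversarial loss dominates the clean loss pointwise and uniformly in $g$; then one argues $\Rcal(\ell^Q_{r,p,\epsilon}\circ\Gcal_r) \geq \Rcal(\ell\circ\Gcal_r)$ by a symmetrization/comparison argument — e.g. observing that the supremum over $g$ of a Rademacher-weighted sum is unchanged (up to the desired direction) when each term is replaced by a pointwise-larger term that preserves the ``$+$'' contributions. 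Concretely I would write the adversarial perturbation as $\rho' = \rho(\xbf) + \Delta$ with $\Tr(\Delta)=0$, pick the direction of $\Delta$ adaptively with the sign of $\sigma_i$ so that the perturbed term always dominates $\sigma_i y_i f(\xbf_i)$, and conclude by monotonicity.

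For the \emph{upper bound}, I would solve the inner maximization explicitly. Under Assumption~\ref{assum:1}, $\min_{\rho':\Vert\rho(\xbf)-\rho'\Vert_p\leq\epsilon} y\,\Tr(A\rho') = y\,\Tr(A\rho(\xbf)) - \max_{\Vert\Delta\Vert_p\leq\epsilon,\,\Tr\Delta=0} |\Tr(A\Delta)|$ (the positivity constraint being inactive thanks to the eigenvalue gap, so the only binding constraint is the norm ball together with tracelessness). The maximum of $|\Tr(A\Delta)|$ over the $p$-Schatten ball is, by Hölder's inequality for Schatten norms, $\epsilon\,\Vert A\Vert_{p/(p-1)}$ up to the traceless projection, which only helps. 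Hence $h(\xbf,y) = yf(\xbf) - \epsilon \Vert A \Vert_{p/(p-1)} + (\text{correction from } \Tr\Delta = 0)$, and therefore
\begin{align*}
\Rcal(\Fcal^Q_{r,p,\epsilon}) \leq \Rcal(\Fcal_r) + \Ebb_{\sbf}\Bigl[\sup_{A\in\Acal_r}\frac{1}{m}\sum_{i=1}^m \sigma_i\bigl(-\epsilon\Vert A\Vert_{p/(p-1)}\bigr)\Bigr].
\end{align*}
The remaining Rademacher term is bounded by $\epsilon\,\sup_{A\in\Acal_r}\Vert A\Vert_{p/(p-1)} \cdot \Ebb_{\sbf}[|\frac1m\sum_i\sigma_i|] \leq \epsilon\,\sup_{A\in\Acal_r}\Vert A\Vert_{p/(p-1)}/\sqrt{m}$. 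Finally I convert the $r$-Schatten constraint $\Vert A\Vert_r\leq b$ into a $p/(p-1)$-Schatten bound via the dimension-dependent norm inequality $\Vert A\Vert_{p/(p-1)} \leq d_H^{\max\{0,\,1-1/p-1/r\}}\Vert A\Vert_r \leq b\,d_H^{\max\{0,1-1/p-1/r\}} = b\max\{1,d_H^{1-1/p-1/r\}}$, which yields exactly the stated bound.

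The main obstacle I anticipate is the lower bound: making rigorous the claim that the pointwise domination of clean loss by adversarial loss lifts to a Rademacher-complexity inequality, while simultaneously respecting the quantum-state validity constraint — one must be careful that the dominating perturbation can be chosen for \emph{each} sample $i$ and \emph{each} sign $\sigma_i$ within the same function class (i.e. the adversary's perturbation is allowed to depend on $(\xbf_i, y_i)$ but the observable $A$ is shared across all samples). The cleanest route is probably the one used in \cite{yin2019rademacher}: for a fixed $A$, the function $h_A(\xbf_i,y_i) = \min_{\rho'} y_i\Tr(A\rho')$ satisfies $h_A(\xbf_i,y_i) \leq y_i\Tr(A\rho(\xbf_i))$, and one shows $\Ebb_\sbf[\sup_A \frac1m\sum_i \sigma_i h_A] \geq \Ebb_\sbf[\sup_A \frac1m\sum_i \sigma_i y_i\Tr(A\rho(\xbf_i))]$ because the adversarial class is ``as rich'' — concretely, by taking in the adversarial class the perturbation to be zero whenever it would hurt, using that $\rho(\xbf_i)$ itself lies in the constraint ball. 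Then composing with the non-increasing $\phi$ (which flips the inequality appropriately) and invoking the contraction step recovers $\Rcal(\Fcal_r) \leq \Rcal(\Fcal^Q_{r,p,\epsilon})$.
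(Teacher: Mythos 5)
Your upper bound is essentially the paper's argument: under Assumption~\ref{assum:1} the positivity constraint is inactive, the inner minimum decouples as $y f(\xbf)$ minus a quantity $c(A)=\max_{\Vert\tau\Vert_p\le\epsilon,\,\Tr\tau=0}|\Tr(A\tau)|$ that depends only on $A$, and the excess Rademacher term is controlled by H\"older plus the Schatten-norm conversion and $\Ebb_{\sbf}\bigl|\tfrac1m\sum_i\sigma_i\bigr|\le 1/\sqrt m$ (the paper phrases this last step as Khintchine applied to $\Vert\sum_i\sigma_i\tau^*\Vert$, which is the same thing since $\tau^*$ does not depend on $i$). That part is fine.

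The lower bound is where you have a genuine gap. The mechanisms you propose --- ``pick the direction of $\Delta$ adaptively with the sign of $\sigma_i$'' and ``take the perturbation to be zero whenever it would hurt'' --- are not available to you: the functions in $\Fcal^{Q}_{r,p,\epsilon}$ are $h_A(\xbf,y)=\min_{\rho'}y\Tr(A\rho')$, so the perturbation is pinned down by the minimization and is the \emph{adversary's} choice, not a free parameter of the class. Pointwise domination $h_A(\xbf_i,y_i)\le y_i\Tr(A\rho(\xbf_i))$ pushes the terms with $\sigma_i=+1$ down and the terms with $\sigma_i=-1$ up, so it yields no comparison of the two Rademacher averages in either direction; this is precisely why ${\rm ARC}\ge{\rm RC}$ is \emph{not} true in general and is only claimed here under Assumption~\ref{assum:1}. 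The argument that actually closes this gap (and is the one in the paper) is a symmetry argument: let $A^*_{\sbf}$ be the observable achieving the supremum in the \emph{standard} RC for the realization $\sbf$, and lower bound the adversarial supremum by substituting $A^*_{\sbf}$. This gives
\begin{align*}
\Rcal(\Fcal^{Q}_{r,p,\epsilon})\;\ge\;\Rcal(\Fcal_r)+\frac1m\,\Ebb_{\sbf}\Bigl[\sum_{i=1}^m\sigma_i\min_{\tau:\Vert\tau\Vert_p\le\epsilon,\;\Tr\tau=0}y_i\Tr\bigl(A^*_{\sbf}\tau\bigr)\Bigr],
\end{align*}
where Assumption~\ref{assum:1} was used to drop the constraint $\rho(\xbf_i)+\tau\succeq0$, making the feasible set symmetric under $\tau\mapsto-\tau$. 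Under $\sbf\mapsto-\sbf$ one has $A^*_{-\sbf}=-A^*_{\sbf}$, and by the symmetry of the constraint set the optimal perturbation also flips sign, so the minimum value $\min_\tau y_i\Tr(A^*_{\sbf}\tau)=-\max_\tau|\Tr(A^*_{\sbf}\tau)|$ is \emph{even} in $\sbf$ while $\sigma_i$ is odd; hence the excess term has zero expectation and the lower bound follows. Note that Assumption~\ref{assum:1} is doing real work here (it restores the $\tau\mapsto-\tau$ symmetry that the positivity constraint would otherwise break), whereas in your sketch it only appears as a feasibility remark.
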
 
\begin{proof}
Proof can be found in Appendix~\ref{app:lower}.
\end{proof}
Assumption \ref{assum:1} allows us to evaluate a tighter upper bound on the ARC because in this regime, the adversarial perturbations are universal within each class. In fact, the set of adversarial perturbations corresponding to input $\rho(\xbf)$,  defined as $$P_{\succeq0}=\{\tau:\tau+\rho(\xbf)\succeq0, \rm{Tr}(\tau)=0, \lVert\tau\rVert_{p}\leq\epsilon\}$$  
%with  $\tau(g, \xbf, y) \coloneqq \rho'-\rho(\xbf),\;\rho' = \underset{\rho: \lVert\rho(x)-\rho\rVert_p\leq\epsilon}{\arg\max}\;\ell(g, \rho, y)$,
becomes equivalent to the set $$P=\{\tau: \rm{Tr}(\tau)=0, \lVert\tau\rVert_p\leq\epsilon\},$$ which is independent of $\mathbf{x}$. This allows us to employ H\"older's inequality which together with the supremum over the space of observables $\mathcal{A}_r$ allows for an exact evaluation of the excess RC. On the contrary, when Assumption~ \ref{assum:1} is not satisfied, we need to resort to a covering number analysis which gives an upper bound on the excess RC.

Theorem~\ref{thm:lowerbound} shows that the RC $\Rcal(\Fcal^{Q}_{r,p,\epsilon})$ under quantum attacks is at least as high as that of the non-adversarial function class $\Fcal_r$, for a  class of quantum embeddings satisfying Assumption~\ref{assum:1}. Thus, for these embeddings, uniform convergence in the adversarial setting is provably \textit{at least as hard} as in the non-adversarial setting.
%Most typically used embeddings are unitary and thus are excluded by the assumption in the noiseless setting. However real quantum systems (especially in the NISQ era) are naturally subject to quantum noise and thus the assumption may be satisfied even for unitary embeddings depending on the noise strength. 
Additionally, note that the upper bound in Theorem~\ref{thm:lowerbound} has a tighter dimensional dependence than the one in \eqref{eq:adversarial_binary_upperbound}, scaling \emph{at most} linearly with $d_H$.  Indeed,  this dimensional dependence can be avoided if $r\geq p/(p-1).$

Finally, we like to remark that finding lower bounds on $\Rcal(\Fcal^{C/Q}_{r,p,\epsilon})$ is in general challenging. For classical neural networks \cite{xiao2022adversarial},  lower bounds on the ARC are conventionally derived by appropriately restricting the function class $\mathcal{F}^C_{r,p,\epsilon}$. However, a similar approach does not extend to quantum classifiers subject to classical attacks, due to the non-linearity in $\bold{x}$ induced by the embedding $\bold{x}\mapsto\rho(\xbf)$.
%cannot be removed by appropriately restricting the function class $\mathcal{F}^C_{p,\epsilon}$, unlike in classical neural networks.
Additionally, under quantum attacks, the derivation of lower bounds must ensure that the adversarial state is a physical density matrix -- a positive semi-definite, unit-trace, Hermitian matrix.  The positive semi-definite condition $\rho'\succeq0$ in turn requires the evaluation of the eigenvalues of $\rho'$, which is a polynomial equation of order $d_H$. This has no general analytic solutions for $d_H>4$ (more than 2 qubits).
\vspace{-0.2cm}
\section{Sketch of Proof of Theorem~\ref{thm:upperbound_binary}} \label{sec:proof}
We now provide key steps used in the derivation of the upper bound in Theorem~\ref{thm:upperbound_binary} for classical adversarial attacks. The bound for quantum adversarial attacks can be obtained following same steps. We start by noting the following series of relations:
\begin{align} &m\Rcal(\Fcal_{r,p,\epsilon}^{C})\nonumber\\&=  \Ebb_{\boldsymbol{\sigma}}\Biggl[\sup_{f \in \Fcal_r} \sum_{i=1}^m \sigma_i \min_{\xbf': \Vert \xbf_i-\xbf'\Vert_{p} \leq \epsilon}  y_if(\xbf') +y_if(\xbf_i)-y_if(\xbf_i)\Biggr] \nonumber\\& \stackrel{(a)}{\leq}  m\Rcal(\Fcal_r)+ \Ebb_{\boldsymbol{\sigma}}\Biggl[\sup_{f \in \Fcal_r} \sum_{i=1}^m \sigma_i \hspace{-0.2cm}\min_{\xbf': \Vert \xbf_i-\xbf'\Vert_{p} \leq \epsilon}\hspace{-0.2cm} y_i(f(\xbf') -f(\xbf_i))\Biggr]\nonumber \\
%& \stackrel{(a)}{\leq} m\Rcal(\Fcal)+ \lambda \Ebb_{\boldsymbol{\sigma}}\Bigl[\sup_{f \in \Fcal} \sum_{i=1}^m \sigma_i \min_{x': \Vert x_i-x'\Vert_{p} \leq \epsilon} y_i (f(x') -f(x_i))\Bigr]\\
%& \stackrel{(b)}{\leq} m\Rcal(\Fcal)+ \lambda \Ebb_{\boldsymbol{\sigma}}\Bigl[\sup_{f \in \tilde{\Fcal}} \sum_{i=1}^m \sigma_i \min_{x': \Vert x_i-x'\Vert_{p} \leq \epsilon} y_i(f(x') -f(x_i))\Bigr]\\
&= m\Rcal(\Fcal_r)+ m \Rcal(\tilde{\Fcal}_{r,p,\epsilon}), \label{eq:intermediatestep}\end{align} where $\Rcal(\tilde{\Fcal}_{r,p,\epsilon})$
%= \Ebb_{\boldsymbol{\sigma}}[\sup_{g \in \tilde{\Fcal}_{p,\epsilon}} \sum_{i=1}^m \frac{1}{m}\sigma_i g(\xbf_i,y_i)]$
is the Rademacher complexity of the function class $\tilde{\Fcal}_{r,p,\epsilon}=\{(\xbf,y) \mapsto g(\xbf,y)= \min_{\xbf': \Vert \xbf -\xbf'\Vert_{p} \leq \epsilon} yf(\xbf') -yf(\xbf): f \in {\Fcal_r}\}$ defined as in \eqref{eq:Rcal_Fcal}. The inequality in $(a)$ follows from the subadditivity of the supremum, $\sup(a+b) \leq \sup(a)+\sup(b)$. 
%Here, $\Rcal(\Fcal)$ is defined as in \eqref{eq:Rcal_Fcal}. The equality in $(b)$ 
%and inequality $(b)$ follows since  $\Fcal \subseteq \tilde{\Fcal}$, where 
%$$ \tilde{\Fcal}=\{x \mapsto f(x)=y\langle O,\rho(x)\rangle_{\Hcal}: O \in \Hcal, \Vert O \Vert_{\Hcal}^2 \leq b\}, $$ is the function class parameterized by complex matrices $O$, not necessarily Hermitian. 
%follows by defining an adversarial function class

 It now remains to upper bound the term $\Rcal(\tilde{\Fcal}_{r,p,\epsilon})$. To this end, we use a covering number based argument that adopts a two-step approach to find a $\delta$-cover of $\tilde{\Fcal}_{r,p,\epsilon}$.
%Note that To derive upper bounds on the adversarial Rademacher complexity, we resort to covering number-based arguments \cite{vershynin2018high, mohri2012foundations}.
%\begin{definition} \label{def:coveringnumber}[Covering Number] Let $(X,d)$ be a metric space. Let $K \subset X$ be a subset and $\delta>0$ be a scalar value. Then, a subset $N \subseteq K$ is called a $\delta$-\textit{covering net} (or a cover) of $K$ if for all $z \in K$, there exists an $z' \in N$ such that $d(z,z') \leq \delta$. The covering number $\mathcal{N}(K,d,\delta)$ is the smallest possible cardinality of a $\delta$-covering net of $K$.  
%\end{definition}
Critical to this, is the following upper bound on the covering number of the space $\Acal_r$ of Schatten norm-bounded Hermitian observables in \eqref{eq:Ocal}.
\begin{restatable*}{lemma}{lemcover}\label{lem:covering_hermitian_petros}
Consider the set $\Acal_r$ of Hermitian matrices in \eqref{eq:Ocal}. The $\delta$-covering number $\mathcal{N}(\Acal_r, \lVert\cdot\rVert_r,\delta)$ of $\Acal_r$ with respect to the $r$-Schatten norm $\lVert\cdot\rVert_r$ can be upper bounded as
\begin{align}
     \mathcal{N}(\mathcal{A}_r, \lVert\cdot\rVert_r,\delta) \leq \Bigl(3\frac{2^{1/r}b}{\delta} \Bigr)^{d_{H}^2}, \hspace{0.1cm} \mbox{for} \hspace{0.1cm} 0\leq \delta \leq 2^{1/r}b. \nonumber
\end{align}
\end{restatable*} 
\begin{proof}
    Detailed proof can be found in Appendix~\ref{apdx:herm_cover}.
\end{proof}

Equipped with this, we now bound the term $\Rcal(\tilde{\Fcal}_{r,p,\epsilon})$.
 Firstly, obtain a $\delta'=\delta/\mathcal{S}^{C}_{r,p,\epsilon}$-cover of the space $\Acal$ of Hermitian observables with respect to $r$-Schatten norm such that for any $A \in \Acal_r$, there exists $A^c$ in the cover such that $\Vert A-A^c\Vert_{r} \leq \delta'$. To this end, we use Lemma~1. In the second step, for functions $g_A$ and $g_{A^c} \in \tilde{\Fcal}_{r,p,\epsilon}$ defined respectively by observables $A$ and $A^c$, we will show that $|g_A(x,y)-g_{A^c}(x,y)| \leq \delta$. This implies that a $\delta'=\delta/\mathcal{S}^C_{r,p,\epsilon}$-cover of the observable space $\Acal_r$ yields a $\delta$-cover on the adversarial function space $\tilde{\Fcal}_{r,p,\epsilon}$. Using this, we complete the proof by the application of Dudley entropy integral bound \cite{vershynin2018high}.
 Detailed derivation can be found in Appendix~\ref{app:binary_adversary_upperbound}.
 %{\color{red}\subsubsection{Ext%ension to Quantum Neural Network Models}}
 %Following \cite{schuld2021supervised}, we will modify $\Acal$ as
 %\begin{align*}
  %   \Acal=\{A_{\theta}=W_{\theta}^{\dag}OW_{\theta}: W_{\theta}=\prod_{l=1}^L V_lW_l(\theta_l), \Vert O \Vert_{\Hcal}\leq b, \theta \in \Theta \}
 %\end{align*}
 %You can use the key ideas from \cite{caro2022generalization}, the covering number derivations in Appendix.
 \vspace{-0.2cm}
\section{Extension to Multi-Class Classifiers}
We now extend the previous results to study adversarial generalization of $K$-class quantum classifiers defined in \eqref{eq:multiclass_quantumclassifiers}.  To this end, we follow the standard margin bound framework  used in classical literature \cite{yin2019rademacher, awasthi2020adversarial}. For $K$-class classification problems, let $\mathbf{f}(\xbf)=[f_1(\xbf), \hdots, f_K(\xbf)]$ denote the $K$-length vector-valued function, whose  $k$th entry corresponds to the score $f_k(x)=\Tr(A_k \rho(\xbf))$  assigned to the $k$th class.
%\begin{align*}\mathcal{F}_K&=\{\xbf \mapsto \mathbf{f}(\xbf)=(f_1(\xbf), \hdots, f_K(\xbf)): \nonumber \\& \qquad \qquad f_k(\xbf)=\Tr(O_k \rho(\xbf)) \in \Fcal, \forall  k \in [K]\}.\end{align*} Intuitively, each element of the vector $\mathbf{f}(x)$ corresponds to a score that the function assigns to the $k$-th class.
The final predicted label is then obtained as $g(\xbf)= \arg \max_{k \in [K]}f_k(x).$
Denote $M(\mathbf{f}(\xbf),y)= f_y(\xbf) - \max_{{k \neq y}}f_k(\xbf)$ as the margin operator that quantifies the difference between the $y$-th class score and the best score assigned to other classes. For a training example $(\xbf,y)$, $\mathbf{f}$ makes correct prediction only if $M(\mathbf{f}(\xbf),y)>0$. We consider the following loss function \cite{yin2019rademacher,awasthi2020adversarial,xiao2022adversarial},
\begin{align}
    \ell(g,\xbf,y)= \phi_{\gamma}(M(\mathbf{f}(\xbf),y)),
\end{align} where $\gamma>0$ and $\phi_{\gamma}: \mathbb{R} \rightarrow [0,1]$ is the $1/\gamma$-Lipschitz, ramp loss defined as
$
    \phi_{\gamma}(t)= 1 \Ibb\{t \leq 0\}+ (1-\frac{t}{\gamma}) \Ibb\{ 0< t <\gamma\},
$ where $\Ibb\{\cdot\}$ denotes the indicator function. 

%The following theorem then gives an upper bound on the non-adversarial Rademacher complexity leveraging
 %\cite[Theorem 9.2]{mohri2012foundations}.
% \begin{theorem}\label{thm:non-adversarial_multiclass}
%     For $K$-class quantum classifier, the non-adversarial Rademacher complexity can be upper bounded as
 %    \begin{align}
 %   \Rcal(\ell \circ \Gcal)&\leq \frac{2K}{\gamma} \underbrace{\Ebb_{\boldsymbol{\sigma}}\Bigl [\sup_{k \in [K],f_k \in \Fcal} \frac{1}{m} \sum_{i=1}^m \sigma_i f_{k}(\xbf_i)\Bigr]}_{:=\Rcal(\Pi(\Fcal_K))},
 %   \end{align} where $\Rcal(\Pi(\Fcal_K))$ can be upper bounded as in Theorem~\ref{thm:binary_nonadversarial}.
% \end{theorem}
%\begin{align*}
%    \Rcal(\ell \circ \Gcal)&\leq \frac{2K}{\gamma} \Rcal(\Pi(\Fcal_K)), \hspace{0.2cm} \mbox{where}
%\\\Rcal(\Pi(\Fcal_K))&=\Ebb_{\boldsymbol{\sigma}}\Bigl[ \sup_{k \in [K], f_k \in \Fcal} \frac{1}{m} \sum_{i=1}^m  \sigma_i f_k(\xbf_i)\Bigr] \end{align*} is the Rademacher complexity of the set $\Pi(\Fcal_K) =\{\xbf \mapsto f_k(\xbf): k \in [K], f_k \in \Fcal\} $. The following theorem bounds the Rademacher complexity $\Rcal(\Pi(\Fcal_K))$.

%We now analyze the adversarial Rademacher complexity $\Rcal( \ell^{C/Q}_{p,\epsilon} \circ \Gcal)$ of multi-class classifier. To this end, we combine Rademacher complexity analysis of classical margin-based classifiers in \cite{yin2019rademacher,xiao2022adversarial} with novel covering number bounds (Lemma~\ref{lem:covering_hermitian_petros}) derived for quantum classifiers.
The following theorem gives an upper bound on the ARC.
\begin{theorem}\label{thm:multiclass_adversarial}
 The ARC of $K$-class margin-based quantum classifier can be upper bounded as
 \begin{align}
     \Rcal(\ell^{C/Q}_{r,p,\epsilon} \circ \Gcal) \leq \frac{2K}{\gamma}\Rcal(\Pi(\Fcal_K)) + \frac{Kb \rob J'(r)}{\gamma\sqrt{m}} 
 \end{align} where $\Rcal(\Pi(\Fcal_K))=\Ebb_{\boldsymbol{\sigma}}\Bigl[ \sup_{k \in [K], f_k \in \Fcal} \frac{1}{m} \sum_{i=1}^m  \sigma_i f_k(\xbf_i)\Bigr]$ can be upper bounded as in Theorem~\ref{thm:binary_nonadversarial} and $J'(r)$ scales as $\mathcal{O}(\sqrt{1+1/r})$.
\end{theorem}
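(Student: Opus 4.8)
The plan is to follow the proof of Theorem~\ref{thm:upperbound_binary} almost verbatim, inserting the standard multiclass margin machinery of \cite{yin2019rademacher,awasthi2020adversarial}. Since the ramp loss $\phi_\gamma$ is monotonically non-increasing and $1/\gamma$-Lipschitz, the adversarial loss equals $\ell^{C/Q}_{r,p,\epsilon}(g,\xbf,y)=\phi_\gamma\big(\min_{\mathrm{perturb}}M(\mathbf{f}(\cdot),y)\big)$, so that Talagrand's contraction inequality \cite{talagrand} gives $\Rcal(\ell^{C/Q}_{r,p,\epsilon}\circ\Gcal)\leq \tfrac1\gamma\Rcal(\Mcal^{C/Q})$, where $\Mcal^{C/Q}$ is the class of adversarial margin operators $(\xbf,y)\mapsto \min_{\mathrm{perturb}}M(\mathbf{f}(\cdot),y)$ indexed by the tuple $(A_1,\dots,A_K)\in\Acal_r^K$. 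I would then add and subtract the clean margin $M(\mathbf{f}(\xbf),y)$ inside the supremum and invoke subadditivity of the supremum, exactly as in \eqref{eq:intermediatestep}, to split $\Rcal(\Mcal^{C/Q})\leq \Rcal(\Mcal_0)+\Rcal(\widetilde\Mcal^{C/Q})$, where $\Mcal_0$ is the non-adversarial margin-operator class and $\widetilde\Mcal^{C/Q}$ is the excess class $(\xbf,y)\mapsto \min_{\mathrm{perturb}}\big[M(\mathbf{f}(\cdot),y)-M(\mathbf{f}(\xbf),y)\big]$.

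For $\Rcal(\Mcal_0)$ I would use the classical fact that the map $\mathbf{v}\mapsto v_y-\max_{k\neq y}v_k$ is Lipschitz on $\mathbb{R}^K$; applying a coordinate/vector contraction and bounding the $\max$ over the $K-1$ remaining classes by the coordinate class gives $\Rcal(\Mcal_0)\leq 2K\,\Rcal(\Pi(\Fcal_K))$, with $\Rcal(\Pi(\Fcal_K))=\Ebb_{\sbf}\big[\sup_{k,f_k}\tfrac1m\sum_i\sigma_i f_k(\xbf_i)\big]$ coinciding with the single-observable Rademacher complexity of \eqref{eq:Rcal_Fcal} and hence bounded by Theorem~\ref{thm:binary_nonadversarial}. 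Dividing by $\gamma$ yields the first term $\tfrac{2K}{\gamma}\Rcal(\Pi(\Fcal_K))$.

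For $\Rcal(\widetilde\Mcal^{C/Q})$ I would run the covering-number argument used for Theorem~\ref{thm:upperbound_binary}, now over $\Acal_r^K$: a product cover built from Lemma~\ref{lem:covering_hermitian_petros} has cardinality at most $(3\cdot 2^{1/r}b/\delta')^{K d_H^2}$ at resolution $\delta'$ in the max-of-$r$-Schatten-norms metric. I then establish the stability estimate that if $\max_k\|A_k-A_k^c\|_r\leq\delta'$ the corresponding excess functions differ by at most $c(r,K)\,\delta'$, where $c(r,K)$ is controlled by $\mathcal{S}^{C/Q}_{r,p,\epsilon}/d_H$ up to an $\Ocal(K)$ factor; this uses $|\min_a u(a)-\min_a v(a)|\leq\sup_a|u(a)-v(a)|$ to pull perturbations through the inner minimization, the identity $M(\mathbf{f}(\xbf'),y)-M(\mathbf{f}(\xbf),y)=\Tr(A_y(\rho(\xbf')-\rho(\xbf)))+\big[\max_{k\neq y}\Tr(A_k\rho(\xbf))-\max_{k\neq y}\Tr(A_k\rho(\xbf'))\big]$ so that only state differences over the perturbation set appear, and Hölder's inequality with conjugate exponent $r/(r-1)$. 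Choosing $\delta'=\delta/c(r,K)$ turns a $\delta'$-cover of $\Acal_r^K$ into a $\delta$-cover of $\widetilde\Mcal^{C/Q}$, and the Dudley entropy integral \cite{vershynin2018high} — metric-entropy dimension $Kd_H^2$, diameter $\Ocal(b\,\mathcal{S}^{C/Q}_{r,p,\epsilon}/d_H)$ — produces $\tfrac{Kb\,\rob J'(r)}{\sqrt m}$ with $J'(r)=\Ocal(\sqrt{1+1/r})$, exactly mirroring how $J(r)$ arises in the binary case; dividing by $\gamma$ gives the second term, and for quantum attacks $\mathcal{S}^{Q}_{r,p,\epsilon}$ is further bounded as in \eqref{eq:quantum_c}.

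The main obstacle is the stability estimate for the cover in the presence of the nested $\min_{\xbf'}$ and $\max_{k\neq y}$: unlike the binary case, $M(\mathbf{f}(\cdot),y)$ is not linear in the observables, and the argmax over classes may jump under perturbation of either the observables or $\xbf$. I would handle this by the sandwich $\Tr\!\big((A_y-A_{j^\star})(\rho(\xbf')-\rho(\xbf))\big)\leq M(\mathbf{f}(\xbf'),y)-M(\mathbf{f}(\xbf),y)\leq \Tr\!\big((A_y-A_{i^\star})(\rho(\xbf')-\rho(\xbf))\big)$, with $i^\star,j^\star$ the class-argmaxes at $\xbf,\xbf'$, which keeps only state differences and uses $\|A_y-A_k\|_r\leq 2b$, together with $|\max_k a_k-\max_k b_k|\leq\max_k|a_k-b_k|$ for the cover comparison. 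A secondary subtlety — only for quantum attacks — is that the perturbed state must remain a valid density matrix; as in Theorem~\ref{thm:upperbound_binary} this affects only the \emph{upper} bound through the admissible perturbation set and is absorbed into $\mathcal{S}^{Q}_{r,p,\epsilon}$. I expect the explicit $K$-multiplicative factor to be the place where the bound is likely loose (a $\sqrt K$ improvement may be possible via a sharper vector-contraction/Dudley accounting), but obtaining the stated $\Ocal(K)$ dependence requires no new ideas beyond the classical multiclass margin analysis \cite{yin2019rademacher,awasthi2020adversarial,xiao2022adversarial}.
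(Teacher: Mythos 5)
Your high-level skeleton (contraction, add-and-subtract the clean margin, bound the clean part by $\tfrac{2K}{\gamma}\Rcal(\Pi(\Fcal_K))$, cover the excess class and apply Dudley) matches the paper's, but the order of operations differs in a way that breaks the key technical step. The paper first rewrites $\phi_\gamma(\min_{\xbf'}\min_{k\neq y}(f_y(\xbf')-f_k(\xbf')))$ as $\max_k \phi_\gamma(h^k)$ and applies the max-to-sum bound (Lemma~9.1 of Mohri et al.) \emph{before} any covering argument. This reduces the problem to $K$ separate excess classes of the form $g_k(\xbf,y)=\min_{\xbf'}\Tr\bigl((A_y-A_k)(\rho(\xbf')-\rho(\xbf))\bigr)$, each linear in a single observable difference and depending on the data only through state differences; the binary covering argument of Theorem~\ref{thm:upperbound_binary} then applies verbatim with stability constant $2\delta'\mathcal{S}^{C/Q}_{r,p,\epsilon}$, and the factor $K$ comes from the sum over $k$. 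You instead keep the $\max_{k\neq y}$ inside and cover $\Acal_r^K$ directly, which forces you to prove that a $\delta'$-cover of the observables yields a $c\,\delta'$-cover of the excess margin class with $c=\Ocal(\mathcal{S}^{C/Q}_{r,p,\epsilon}/d_H)$. That stability claim is the gap: it is not true for the full margin operator.

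Concretely, writing $g=\max_{k\neq y}\Tr(A_k\rho(\cdot))$ and $g^c$ for its covered version, the term you must control is $\bigl[g(\xbf')-g^c(\xbf')\bigr]-\bigl[g(\xbf)-g^c(\xbf)\bigr]$. Your sandwich inequality controls increments of the margin in the \emph{state} argument (and correctly bounds the diameter $D$ by $\Ocal(b\,\mathcal{S}^{C/Q}_{r,p,\epsilon}/d_H)$ per sample), but it does not control the sensitivity to the \emph{observable} perturbation: each bracket is bounded only by $\max_k|\Tr((A_k-A_k^c)\rho)|\leq\delta'\Vert\rho\Vert_{r/(r-1)}=\Ocal(\delta')$, not by $\delta'$ times a state difference, and because the argmax over classes can jump between $\xbf$ and $\xbf'$ the two brackets do not combine into a single trace against $\rho(\xbf')-\rho(\xbf)$. (A two-class configuration with a near-tie at $\xbf$, $f_1(\xbf')-f_2(\xbf')\gg\delta'$, and $A_k^c=A_k\mp\delta' \rho(\xbf)$ realizes a discrepancy of $2\delta'$, independent of how small $\Vert\rho(\xbf')-\rho(\xbf)\Vert_{r/(r-1)}$ is.) With the correct $\Ocal(\delta')$ stability constant, the cover granularity no longer shrinks with $\rob$ while the Dudley upper limit does, and the entropy integral produces an extra factor of order $\sqrt{\log(d_H/\rob)}$; your route therefore yields $\Ocal\bigl(\sqrt{K}\,b\,\rob\sqrt{\log(d_H/\rob)}/(\gamma\sqrt{m})\bigr)$ rather than the stated $K b\,\rob J'(r)/(\gamma\sqrt{m})$ with $J'(r)=\Ocal(\sqrt{1+1/r})$ independent of the attack strength and Hilbert dimension. (The $\sqrt{K}$ versus $K$ is an improvement, as you anticipated, but the unbounded logarithmic factor means the theorem as stated does not follow.) The fix is exactly the paper's: perform the $\max_k\to\sum_k$ reduction first so that each covered class is linear in one observable difference and sees only state differences.
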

\begin{proof}
Proof can be found in Appendix~\ref{app:multi-class}.
\end{proof}
Theorem~\ref{thm:multiclass_adversarial} extends the main inference from binary to $K$-class classification with an additional multiplicative factor of $K$ on the generalization error bounds.

\section{Numerical Examples}
In this section, we report numerical examples to validate our theoretical findings. Section \ref{sec:embeddings}  evaluates the impact of quantum embeddings on the adversarial generalization error of quantum classifiers under classical attacks, while Section~\ref{sec:impact of noise} investigates the impact of embedding noise on the adversarial generalization error under quantum attacks. Code and data are available at \cite{code_git}.
\subsection{Impact of Embedding on Classical Attacks}\label{sec:embeddings}
%In this section we numerically verify the predictions of our generalization bounds.
\begin{figure}
    \centering
    \includegraphics[width=1.\linewidth]{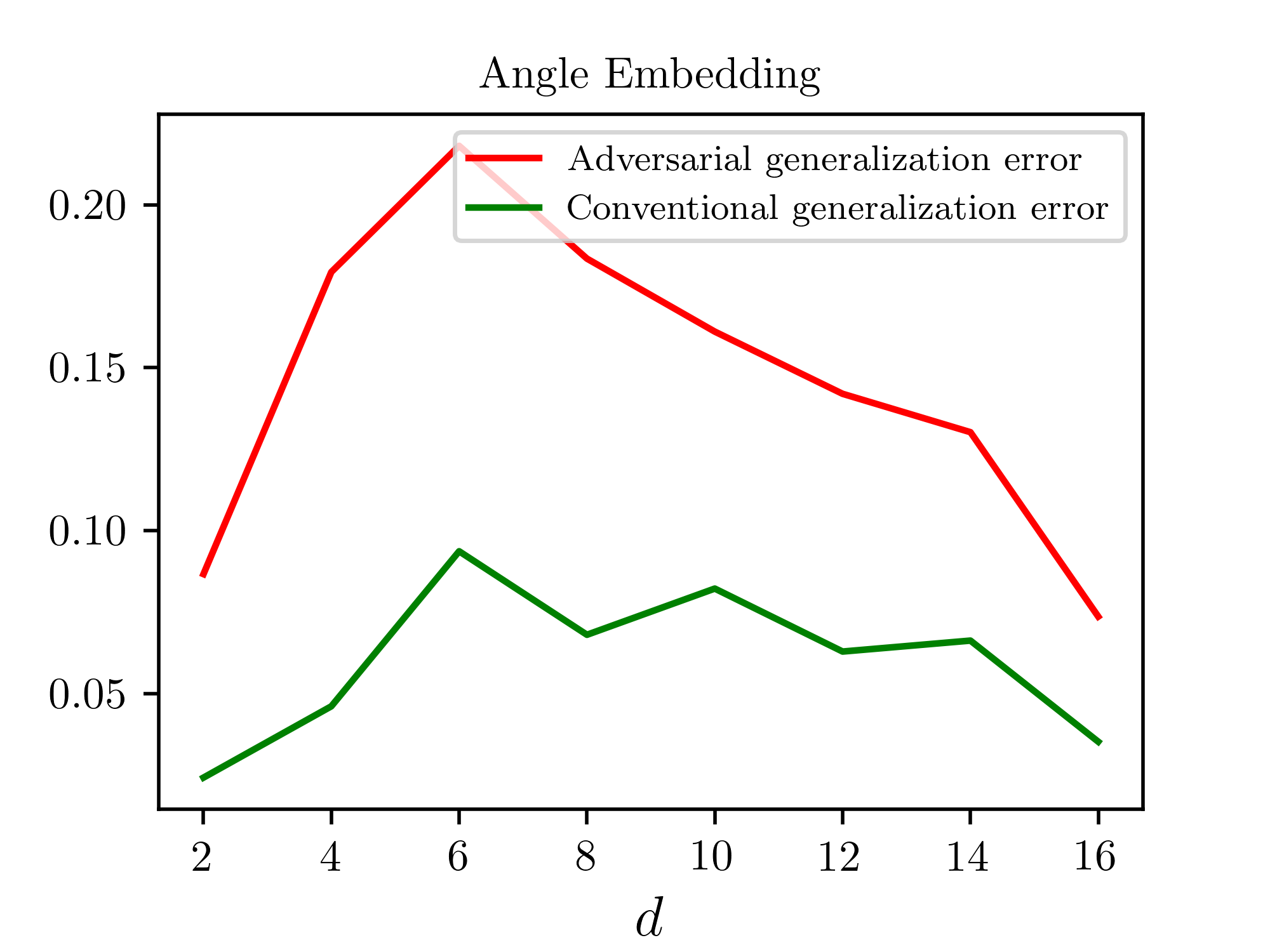}
    \includegraphics[width=1.\linewidth]{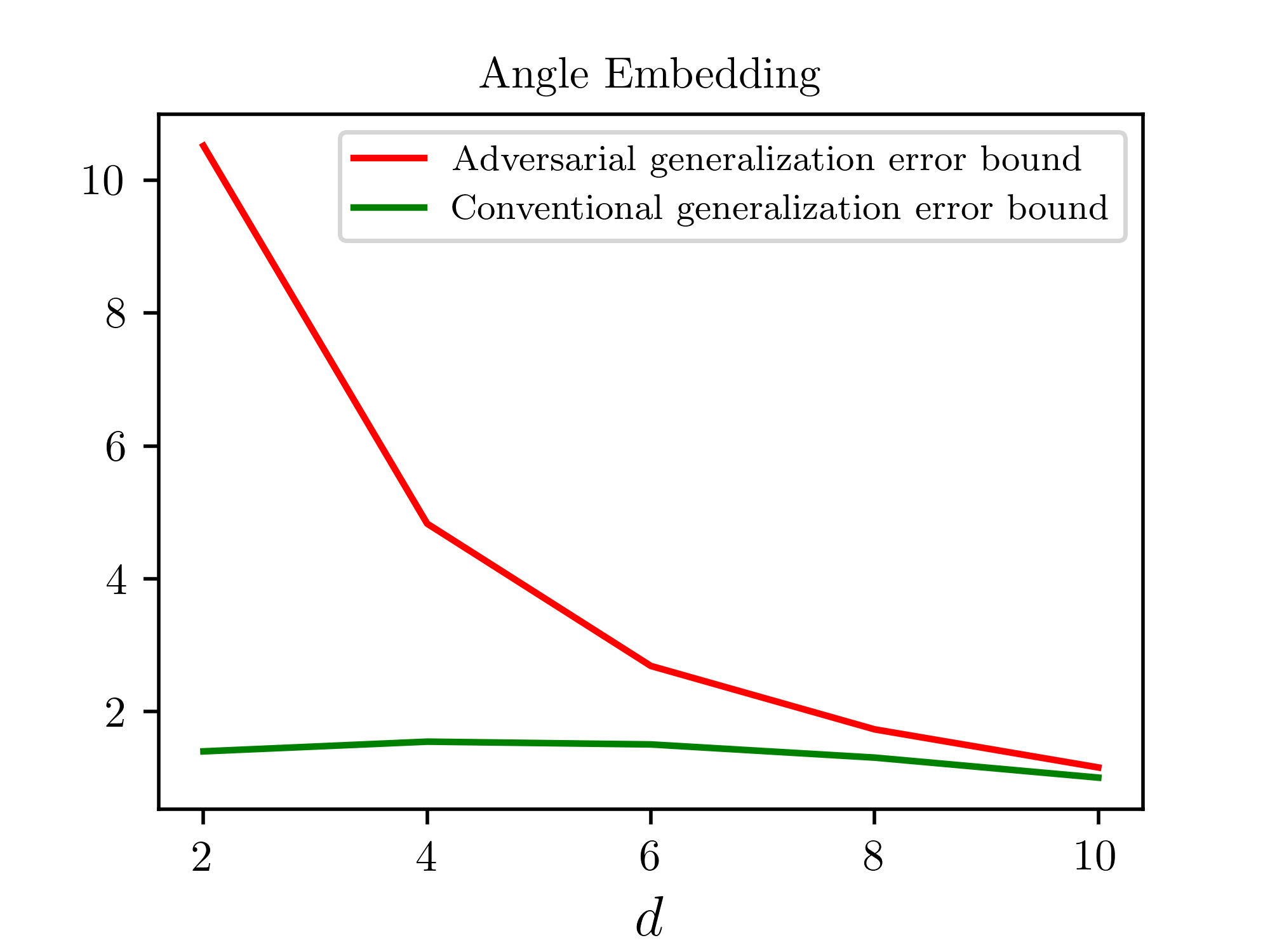}
    \caption{Conventional and adversarial generalization errors (Top) and their respective upper bounds (Bottom) as a function of data dimension $d$. The \emph{angle embedding-based} classifier was adversarially trained on $m=20$ samples against a classical $p=\infty, \epsilon=0.3$ adversary. The conventional and adversarial generalization errors converge in the limit of high dimensional data.}
    \label{fig:angle_gen}
\end{figure}
We consider the task of binary classification with equiprobable class labels $y \in \{0, 1\}$. The input ${\bold{x}}$ is sampled from the multivariate conditional Gaussian distribution $P(\bold{x}|y)=\mathcal{N}(\mu_y, \mathbb{I}_d)$ with mean $\mu_y \in \mathbb{R}^d$ and covariance matrix given by the identity matrix in $d$ dimensions, $\mathbb{I}_d$. For $d=2$, the mean $\mu_y$ of this distribution is
\begin{align*}
    \mu_y= \begin{cases}
        \frac{\pi}{4}(1,1)
        &\text{ if $y = 0$}\\
         \frac{\pi}{4}(1,-1)&\text{ if $y = 1$}
    \end{cases}.
\end{align*}
More generally, for data dimension $d$, the mean $\mu_y$ can be written as
\begin{align*}
    \mu_y = \begin{cases}
        \frac{\pi}{4}I_d&\text{ if $y = 0$}\\
        \frac{\pi}{4}\big(I_{\lfloor d/2\rfloor}\oplus (-I_{d-\lfloor d/2\rfloor})\big)&\text{ if $y = 1$},
    \end{cases}
\end{align*} where $I_n$ denotes a vector of ones in $\mathbb{R}^n$. The above choice ensures that, for even values of $d$, the quantum states obtained via angle and amplitude embedding of the mean of the two classes are orthogonal. Furthermore, this setting allows us to easily vary the data dimension to investigate its impact on the adversarial generalization error.

As the quantum classifier $f(\xbf)$, we consider a class of parametrized observables obtained via the application of a variational quantum circuit,  i.e., a quantum neural network, on the input quantum state followed by fixed measurements. This setting is similar to the one mentioned in Remark 1. Specifically, the quantum neural network has four strongly entangling layers (see PennyLane documentation \cite{pennylane}), followed by a Pauli $Z$ measurement on all qubits, which is an $r=\infty$-bounded observable. We choose as loss function $$
     \ell(g, \rho(\xbf), y) = \frac{1}{1+e^{\alpha f(\bold{x})}},
$$ which is Lipschitz and monotonically increasing, with
 $\alpha=10$.

In Figure~\ref{fig:angle_gen}, we compare the conventional and adversarial generalization errors (left), and the corresponding upper bounds we derived in Theorem~\ref{thm:binary_nonadversarial} and Theorem~\ref{thm:upperbound_binary} (right) as a function of the data dimension $d$. We adversarially train a classifier that employs angle embedding for 40 epochs via the classical Fast Gradient Sign Method (FGSM) attack, which is an approximate $p=\infty$-norm attack. We choose the adversary's strength to be $\epsilon=0.3$, and used a $m=20$-sample training dataset generated according to the multi-variate Gaussian distribution described above.  

Using the adversarially trained classifier, we first evaluate the adversarial generalization error as  the difference between adversarial population risk (approximated using $m=1000$ samples) and the empirical risk. We also evaluate the conventional generalization error of the above trained classifier by evaluating the difference between conventional (non-adversarial) population and empirical risks.   %In the case of amplitude embedding, 

Figure \ref{fig:angle_gen} shows us that, as predicted by our theoretical bounds in Theorem~\ref{thm:upperbound_binary}, the adversarial and conventional generalization errors for angle embedding-based classifiers converge. This implies that, in the high-dimensional limit, the increase in sample complexity required for adversarially trained classifiers vanishes compared to conventional training. 
\begin{comment}{\color{orange}However, even in this limit, the population risk of adversarially trained classifiers is still expected to be higher than for classifiers employing standard training, as the empirical adversarial risk is expected to be higher than the standard empirical risk -- This is not something that is clear from the figure, so better to remove it?}.
\end{comment}

In a similar way, for an adversarially trained quantum classifier that uses \emph{amplitude embedding}, Figure \ref{fig:ampl_gen} compares the adversarial and conventional generalization errors with their corresponding upper bounds. Note that for amplitude embedding, the classifier uses $n=\lceil\log_2d\rceil$ qubits. We pad the state vector with 0 entries as a state vector of dimension $d$ is only achievable when $\log_2d$ is an integer.

Figure \ref{fig:ampl_gen} shows us that the adversarial and conventional generalization errors for amplitude embedding classifiers \textit{diverge} as the data dimension $d$ increases. The difference between the adversarial and conventional generalization error  plateaus at higher dimensions.
\begin{figure}
    \centering
    \includegraphics[width=1.\linewidth]{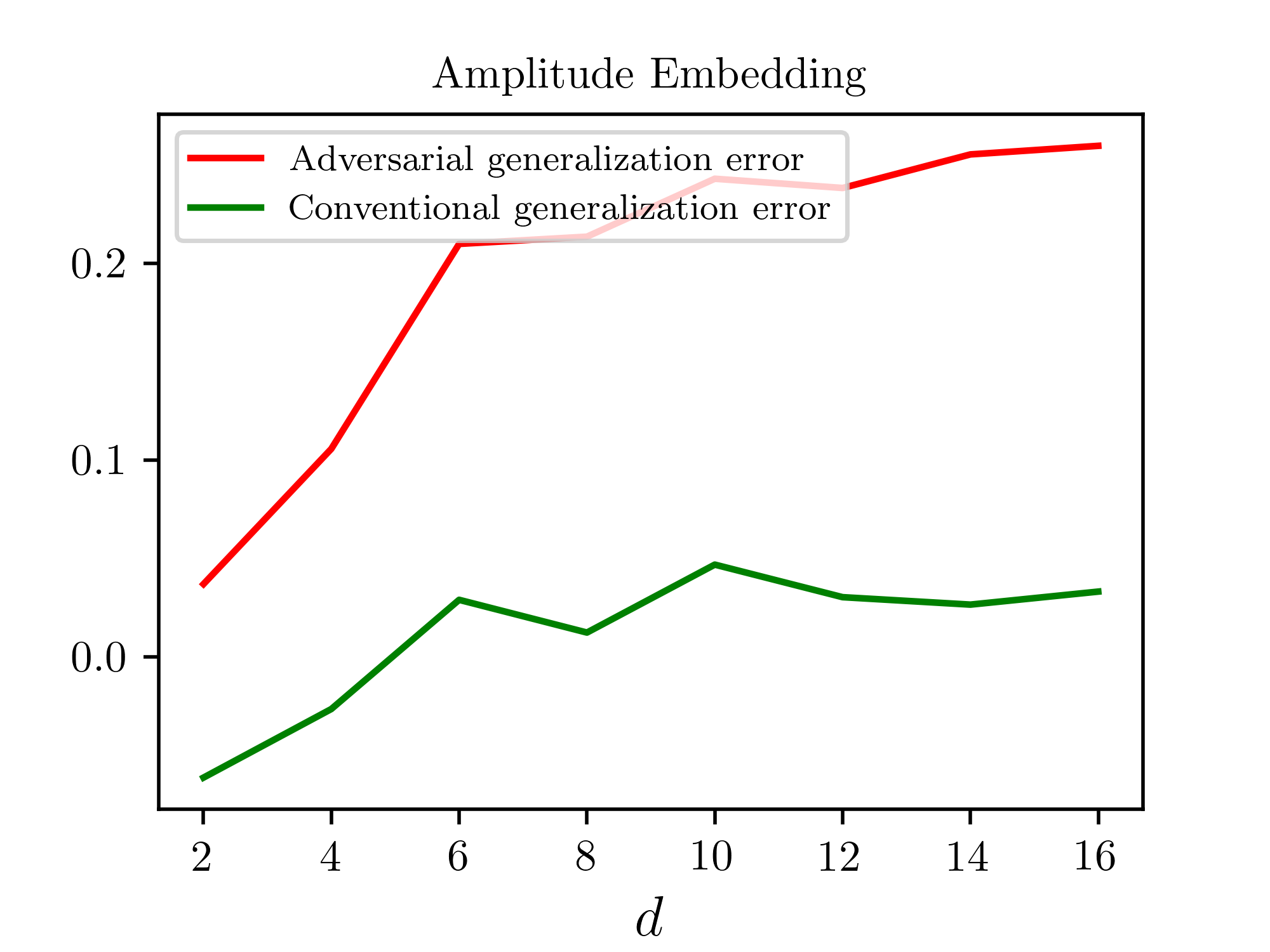}
    \includegraphics[width=1.\linewidth]{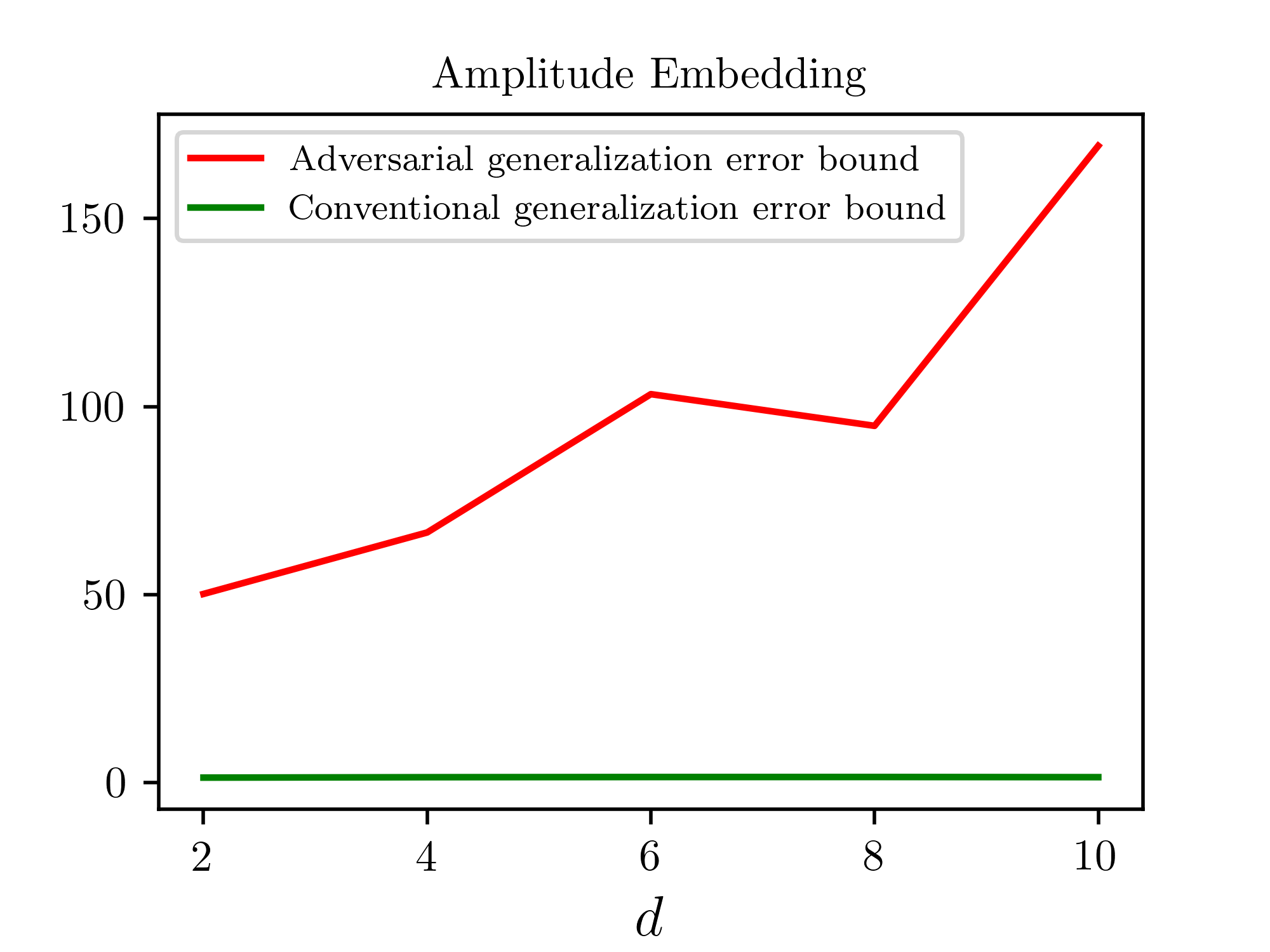}
    \caption{Conventional and adversarial generalization errors (Top) and their respective upper bounds (Bottom) as a function of data dimension $d$. The \emph{amplitude embedding-based} classifier was adversarially trained on $m=20$ samples against a classical $p=\infty, \epsilon=0.3$ adversary. The conventional and adversarial generalization errors diverge in the limit of high dimensional data.}
    \label{fig:ampl_gen}
\end{figure}
%, unlike the prediction of the ARC bound. This is expected as the (adversarial) generalization error is bounded from above. 
This highlights the fact that, whilst uniform convergence bounds are useful to provide a theoretical base for deciding on the choice of embeddings, they are not tight bounds. In order to obtain tighter bounds one may need to use tools outside the uniform convergence framework \cite{understanding_gen}. 
\subsection{Impact of Embedding Noise on Adversarial Generalization Error with Quantum Attacks}\label{sec:impact of noise}
In this section, we investigate the impact of noisy embeddings satisfying Assumption \ref{assum:1}, on the adversarial generalization error.

We again consider the task of binary classification with equiprobable labels $y\in\{0,1\}$ using the same conditional probability distribution $P(\bold{x}|y)$ as in the previous subsection with $d=6$. We use amplitude embedding to map classical input $\xbf$ to a noiseless quantum state $\rho(\xbf)$. This embedding requires $n=3$ qubits, and thus has a Hilbert space dimension $d_H=8$. To prepare noisy quantum states, we apply a global depolarization noise channel on the state $\rho(\xbf)$, obtaining  \begin{align*}
   \rho'(\xbf) &= (1-\lambda_{\min}d_H\rho(\xbf))+\lambda_{\min}\mathbb{I}, \end{align*} 
  where $\lambda_{\min}=0.011$ is the minimum eigenvalue of the density matrix $\rho'(\xbf)$ and $\mathbb{I}$ is the identity matrix. 

We adversarially train a quantum neural network with the same architecture as in Section \ref{sec:embeddings} against an approximate $p=\infty$-Schatten norm quantum adversary of strength $\epsilon=0.001$ with the dataset consisting of quantum states coming from the noiseless embedding. This is implemented via a quantum extension of FGSM that is introduced in  
%proposed Quantum Parameter Fast Gradient Method (QPFGSM) attack as detailed in 
Appendix~\ref{appdx:QFGSM}. We  evaluate the conventional and adversarial training and test losses  assuming both noiseless and noisy embedding. To do this, we again employ the proposed quantum FGSM adversary with perturbation budget $\epsilon$ varying in the range $\epsilon \in [0.001,0.01]$. We note that this range of $\epsilon$ satisfies Assumption~\ref{assum:1}.

Figure~\ref{fig:noisy} compares the conventional and adversarial generalization errors under noiseless and noisy embedding (top) and the corresponding upper bounds derived in Theorem~\ref{thm:binary_nonadversarial}, Theorem~\ref{thm:upperbound_binary} and Theorem~\ref{thm:lowerbound} (bottom) as a function of the attack strength $\epsilon$. As predicted by our theoretical upper bounds, the adversarial generalization error under noiseless and noisy embeddings  increases with $\epsilon$. Furthermore, it can be seen from Figure~\ref{fig:noisy} (top) that the adversarial (and conventional) generalization error of the noisy embedding is lower than that of the noiseless embedding.  Our bound in Theorem \ref{thm:upperbound_binary} (and Theorem \ref{thm:binary_nonadversarial}) attributes this behavior to the standard RC $\Rcal(\Fcal_r)$, which is known to decrease when the embedding is noisy \cite{banchi2021generalization}. 
\begin{figure}
    \centering
    \includegraphics[width=0.95\linewidth]{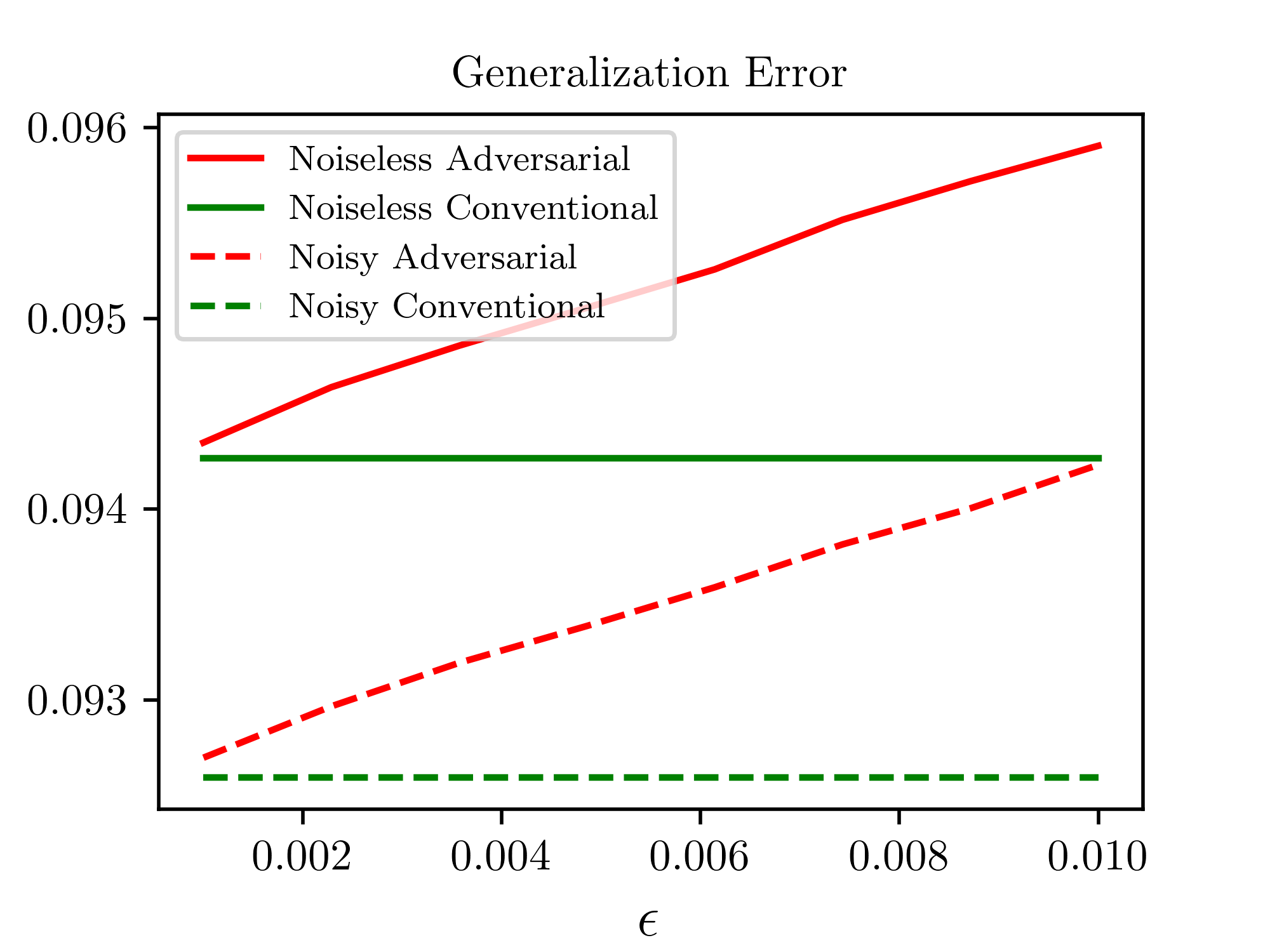}
    \includegraphics[width=0.95\linewidth]{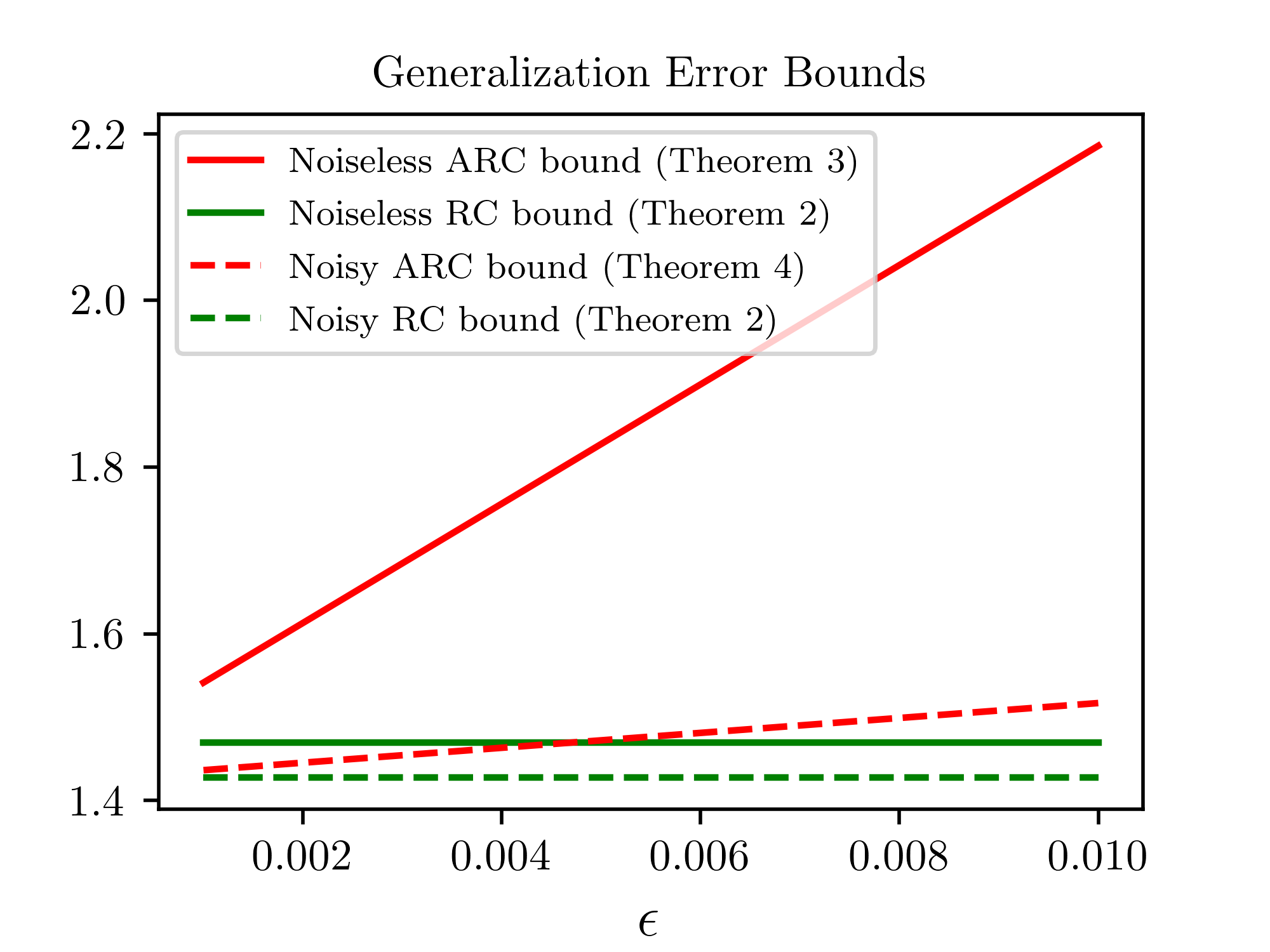}
  
    \caption{
    (Top) Conventional and adversarial  generalization errors for an adversarially trained quantum classifier with a noiseless and noisy embedding and the corresponding bounds (Bottom). The classifier was adversarially trained against the quantum FGSM adversary of strength $\epsilon=0.001$ with $m=20$ samples on $d=6$ dimensional input vectors.}
    \label{fig:noisy}
\end{figure}

\section{Conclusion and Future Works}
This paper has derived novel generalization error bounds for adversarially trained quantum classifiers via an adversarial Rademacher complexity analysis. Our bounds quantify the increase in sample complexity incurred due to adversarial training against classical and quantum adversaries, while also highlighting the impact of the choice of quantum embedding. We investigated how conventional choices of quantum embedding schemes affect adversarial generalization under classical attacks and find that in the limit of high-dimensional input $\xbf$ rotation embedding generalize equally well (asymptotically) to classifiers undergoing standard training.

In the case of quantum attacks we have also provided tighter bounds on the adversarial Rademacher complexity for a restricted class of embeddings, which apply to all $p$-Schatten norm attacks, improving over our previous work \cite{georgiou_ISIT24}. Finally, we derive adversarial generalization error bounds for multi-class classification, which have the same functional form as binary classification bounds, and scale linearly with the number of classes $K$.

Future works include accounting for the effect of quantum gate noise inherent in NISQ devices in adversarial generalization.  Additionally, as noted in \cite{understanding_gen}, moving away from uniform convergence bounds is important for better understanding the generalization capacity of quantum models. Furthermore, analyzing the algorithmic stability of adversarial training can lead insights into the effect of optimization method on adversarial generalization.

\section{Acknowledgments}
The work of OS was supported by an Open Fellowship of the EPSRC with reference EP/W024101/1, by the EPSRC project EP/X011852/1, and by the European Union’s Horizon Europe Project CENTRIC under Grant 101096379.
\appendix
\section{Preliminary Definitions and Results}
In this section, we recall some standard definitions and results which we will use in subsequent sections. This will be used to upper bound the adversarial Rademacher complexity in Theorem~\ref{thm:upperbound_binary}. We start by defining what a covering set is.
\begin{definition}[Covering Set]
    Let $A\subseteq \mathbb{R}^m$. The set $B$ is a $\delta$ covering set of $A$ with respect to $\lVert\cdot\rVert$ (denoted as $\mathcal{C}(A, \lVert\cdot\rVert,\delta)$) if $$\forall a\in A \;\exists\; b\in B : \lVert a-b\rVert\leq\delta$$
\end{definition}
The space of $\delta$-covering sets of a set $A$ defines the $\delta$-covering number for the set, as explained below.
\begin{definition}[Covering Number]
 Let $A\subseteq \mathbb{R}^m$. The $\delta$-covering number  $\Ncal(A, \lVert\cdot\rVert,\delta)$ of $A$ with respect to the norm $\lVert\cdot\rVert$, is the smallest cardinality of a $\delta$-covering set of $A$, i.e.,
\begin{align*}
    \Ncal(A, \lVert\cdot\rVert,\delta) = \underset{B:\;\forall a\in A \;\exists\; b\in B : \lVert a-b\rVert\leq\delta}{\inf} |B|
\end{align*}
    
\end{definition}
The covering number is a useful tool which can be used to upper bound the Rademacher complexity of a set via the Dudley entropy integral bound below.
\begin{theorem}[Dudley Entropy Integral Bound]\label{thm:dudley}
Let $A\subseteq\mathbb{R}^m$ denote a set of $m$-dimensional vectors. Let $B=\underset{a\in A}{\max}\lVert a\rVert_2$. Then, the Rademacher complexity $\Rcal(A) =\Ebb_{\boldsymbol{\sigma}}[\sup_{a \in A} \sum_{i=1}^m \frac{1}{m}\sigma_i a_i]$ can be upper bounded as
\begin{align*}
       \mathcal{R}(A)\leq 12 \int_0^{B/2}\frac{\sqrt{\log(\mathcal{N}(A,\lVert\cdot\rVert_2,\alpha))}}{m}d\alpha
\end{align*}
\end{theorem}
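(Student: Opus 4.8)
The plan is to establish the bound by the classical \emph{chaining} (successive-refinement) argument. Writing $\langle u,v\rangle=\sum_{i=1}^m u_iv_i$ and noting $\Rcal(A)=\tfrac1m\,\Ebb_{\boldsymbol{\sigma}}[\sup_{a\in A}\langle\boldsymbol{\sigma},a\rangle]$, it suffices to bound $\Ebb_{\boldsymbol{\sigma}}[\sup_{a\in A}\langle\boldsymbol{\sigma},a\rangle]$ by $12\int_0^{B/2}\sqrt{\log\Ncal(A,\lVert\cdot\rVert_2,\alpha)}\,d\alpha$. First I would fix a large integer $N$, set the dyadic scales $\alpha_j=2^{-j}B$ for $j=0,\dots,N$, and for each $j$ take a minimal $\alpha_j$-cover $T_j$ of $A$ in $\lVert\cdot\rVert_2$, so that $|T_j|=\Ncal(A,\lVert\cdot\rVert_2,\alpha_j)$. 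Since $\lVert a\rVert_2\le B$ for all $a\in A$, the singleton $\{0\}$ is a valid $\alpha_0$-cover, so $\log|T_0|=0$; and since covering numbers are non-increasing in the radius, $|T_{j-1}|\le|T_j|$. For $a\in A$ pick a nearest point $\hat a_j\in T_j$ (with $\hat a_0=0$), so $\lVert a-\hat a_j\rVert_2\le\alpha_j$, and use the telescoping identity
$$\langle\boldsymbol{\sigma},a\rangle=\langle\boldsymbol{\sigma},\hat a_0\rangle+\sum_{j=1}^N\langle\boldsymbol{\sigma},\hat a_j-\hat a_{j-1}\rangle+\langle\boldsymbol{\sigma},a-\hat a_N\rangle.$$

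Next I would take $\sup_{a\in A}$ and then $\Ebb_{\boldsymbol{\sigma}}$ and bound the three contributions separately. The first vanishes in expectation. For the remainder, Cauchy--Schwarz gives $\sup_{a\in A}\langle\boldsymbol{\sigma},a-\hat a_N\rangle\le\lVert\boldsymbol{\sigma}\rVert_2\,\alpha_N=\sqrt m\,\alpha_N$ (using $\sigma_i\in\{\pm1\}$), which tends to $0$ as $N\to\infty$. For the $j$-th link term, the vector $\hat a_j-\hat a_{j-1}$ ranges over a set of at most $|T_j|\,|T_{j-1}|\le|T_j|^2$ values, each of norm at most $\alpha_j+\alpha_{j-1}=3\alpha_j$; since the Rademacher entries are $1$-sub-Gaussian, $\langle\boldsymbol{\sigma},v\rangle$ is $\lVert v\rVert_2$-sub-Gaussian, so Massart's finite-class maximal inequality yields $\Ebb_{\boldsymbol{\sigma}}[\sup_{a\in A}\langle\boldsymbol{\sigma},\hat a_j-\hat a_{j-1}\rangle]\le 3\alpha_j\sqrt{2\log(|T_j|^2)}=6\alpha_j\sqrt{\log\Ncal(A,\lVert\cdot\rVert_2,\alpha_j)}$. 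Summing over $j$ gives $\Ebb_{\boldsymbol{\sigma}}[\sup_{a\in A}\langle\boldsymbol{\sigma},a\rangle]\le\sqrt m\,\alpha_N+6\sum_{j=1}^N\alpha_j\sqrt{\log\Ncal(A,\lVert\cdot\rVert_2,\alpha_j)}$.

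Finally I would convert the dyadic sum into the Dudley integral. For $\alpha\in[\alpha_{j+1},\alpha_j]$, monotonicity of $\Ncal(A,\lVert\cdot\rVert_2,\cdot)$ gives $\sqrt{\log\Ncal(A,\lVert\cdot\rVert_2,\alpha)}\ge\sqrt{\log\Ncal(A,\lVert\cdot\rVert_2,\alpha_j)}$, hence $\alpha_j\sqrt{\log\Ncal(A,\lVert\cdot\rVert_2,\alpha_j)}=2(\alpha_j-\alpha_{j+1})\sqrt{\log\Ncal(A,\lVert\cdot\rVert_2,\alpha_j)}\le 2\int_{\alpha_{j+1}}^{\alpha_j}\sqrt{\log\Ncal(A,\lVert\cdot\rVert_2,\alpha)}\,d\alpha$; summing over $j=1,\dots,N$ telescopes the integration limits to $(0,\alpha_1]\subseteq(0,B/2]$, so $\Ebb_{\boldsymbol{\sigma}}[\sup_{a\in A}\langle\boldsymbol{\sigma},a\rangle]\le\sqrt m\,\alpha_N+12\int_0^{B/2}\sqrt{\log\Ncal(A,\lVert\cdot\rVert_2,\alpha)}\,d\alpha$, and letting $N\to\infty$ and dividing by $m$ gives the theorem. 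The main obstacle is the chaining bookkeeping rather than any single hard estimate: choosing the scales, keeping the cover cardinalities comparable so that $|T_{j-1}|\,|T_j|\le|T_j|^2$, and invoking the sub-Gaussian maximal inequality on the correct finite set of \emph{increments} $\hat a_j-\hat a_{j-1}$ with radius $3\alpha_j$; one must also check that the residual term $\sqrt m\,\alpha_N=2^{-N}\sqrt m\,B$ indeed vanishes as $N\to\infty$.
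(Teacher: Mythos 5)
Your chaining argument is correct, and all the bookkeeping works out: the singleton $\{0\}$ is a legitimate $\alpha_0=B$ cover under the paper's (external) definition of a covering set, the increment set at scale $j$ has at most $|T_j|\,|T_{j-1}|\le |T_j|^2$ elements of norm at most $3\alpha_j$, the sub-Gaussian maximal inequality gives the factor $3\alpha_j\sqrt{2\log(|T_j|^2)}=6\alpha_j\sqrt{\log\Ncal(A,\lVert\cdot\rVert_2,\alpha_j)}$, and the dyadic-sum-to-integral conversion with $\alpha_j=2(\alpha_j-\alpha_{j+1})$ produces exactly the constant $12$ and the upper limit $\alpha_1=B/2$ appearing in the statement. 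Note, however, that the paper does not prove this theorem at all: it is stated in Appendix A as a recalled standard result (with the covering-number machinery attributed to the literature, e.g.\ Vershynin) and is simply invoked in the proofs of Theorems 3 and 6. So there is no in-paper argument to compare against; your proof is the canonical textbook derivation and would serve as a self-contained justification of the imported bound. One cosmetic point: since $\hat a_0=0$ for every $a$, the term $\langle\boldsymbol{\sigma},\hat a_0\rangle$ is identically zero, not merely zero in expectation.
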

\section{Proof of Lemma~\ref{lem:covering_hermitian_petros}}\label{apdx:herm_cover}
 
 Throughout this section, we use $\Ccal(\Acal, \Vert \cdot \Vert_r,\delta)$ to denote a $\delta$-covering net of the set $\Acal$ with respect to the norm $\Vert \cdot \Vert_r$.

 We first re-state the lemma here and then give the proof.
 \lemcover
\begin{proof}
It is easy to verify that $d_H \times d_H$ Hermitian matrices form a $d_{H}^2$-dimensional real vector space. We can then construct an orthonormal basis $\{E^k\}_{k=1}^{d_{H}^2}$ on this space with respect to the Hilbert-Schmidt inner product $$\langle E^m,E^n\rangle = {\rm Tr}\big((E^m)^\dagger E^n\big)=\delta_{m,n},$$ where $\delta_{m,n}=1$ when $m=n$ and is zero otherwise. A possible choice of basis is $\{E^k\}_{k=1}^{d_H^2} = E_{\text{off-diag}}\cup E_{\text{diag}}$ where
\begin{align}
    (E_{\text{off-diag}})_{i,j} &= \Biggl\{\frac{\delta_{i,m}\delta_{j,n}+\delta_{i,n}\delta_{j,m}}{\sqrt{2}}, \nonumber \\& \frac{i\delta_{i,m}\delta_{j,n}-i\delta_{i,n}\delta_{j,m}}{\sqrt{2}}\Biggr\}_{n=1, m\neq n}^{d_\mathcal{H}}\label{eq:offdiag}
\end{align}
and
\begin{align}
(E_{\text{diag}})_{i,j}&=\Biggl\{\frac{\delta_{i,2n-1}\delta_{j,2n-1}+\delta_{i,2n}\delta_{j,2n}}{\sqrt{2}}, \nonumber \\&\frac{\delta_{i,2n-1}\delta_{j,2n-1}-\delta_{i,2n}\delta_{j,2n}}{\sqrt{2}}\Biggr\}_{n=1}^{d_\mathcal{H}/2}\label{eq:diag}
\end{align}
with the subscript $i,j$ indicating the $i,j$-th element of the basis matrices, and the iteration over $m,n$ generates the individual basis matrices. Visually the set in \eqref{eq:offdiag} consists of off-diagonal matrices of the form
\begin{align*}
    \begin{pmatrix}
0 &. &. &. &0\\
.&.& &\frac{1}{\sqrt{2}} &.\\
. & &. & &.\\
.&\frac{1}{\sqrt{2}} & & .&.\\
0 & .& .& .&0\\
\end{pmatrix},\begin{pmatrix}
0&. &. & .&0\\
.& .& &\frac{i}{\sqrt{2}} &.\\
.& &. & &.\\
.& \frac{-i}{\sqrt{2}}& & .&.\\
0& .& .& .&0\\
\end{pmatrix},
\end{align*}
while the set in \eqref{eq:diag} consists of diagonal matrices of the form
\begin{align*}
\begin{pmatrix}
0&. &. &. &0\\
.& .& & &.\\
.& &\frac{1}{\sqrt{2}} & &.\\
.& & &\frac{1}{\sqrt{2}} &.\\
0&. &. &. &0\\
\end{pmatrix},\begin{pmatrix}
0&. &. &. &0\\
.& .& & &.\\
.& &\frac{1}{\sqrt{2}} & &.\\
.& & & \frac{-1}{\sqrt{2}}&.\\
0& .& .& .&0\\
\end{pmatrix}
\end{align*}
which together form a complete orthonormal basis

Using the above orthonormal basis, we now proceed to obtain a $\delta$-cover of the space $\Acal_r$ of Hermitian matrices. To this end, we first note that any $U\in\mathcal{A}$ can be written as a linear combination  $\sum_{k=1}^{d^2_H} u_k E^k$ of the basis matrices with $u=(u_1,\hdots, u_{d^2_H})$ denoting the real-valued vector. We then employ a well-known result from \citep{vershynin2018high} which bounds the covering number of a real vector space: The $\delta$-covering number of a the set $B_R^D(\bold{x}) = \{\bold{x}':\bold{x}'\in\mathbb{R}^D,\lVert \bold{x}'-\bold{x}\rVert_1\leq R\}$ can be upper bounded as 
\begin{align}
    \mathcal{N}(B^D_R(\xbf), \Vert \cdot \Vert_1,\delta) &\leq \Bigl(1+ \frac{2R}{\delta} \Bigr)^D \leq \Bigl(\frac{3R}{\delta} \Bigr)^D, \label{eq:vec_cov}
\end{align}for $0 < \delta \leq R$.

In order to use this bound, we need to determine the radius $R$ of the vector space of interest. To this end, we first note that the eigenvalues $\lambda_A$ of matrices $A\in\mathcal{A}_r$ have bounded norm, i.e., $|\lambda_{A\in\mathcal{A}_r}|\leq\lVert A\rVert_\infty\leq\lVert A\rVert_r\leq b$.  
Now, consider the following set of matrices \begin{align}\mathcal{Q}=\Big\{Q: -b&\leq Q_{ii}-\sum_{j\neq i}|Q_{ij}|\leq Q_{ii}+\sum_{j\neq i}|Q_{ij}|\leq b \;\forall i \nonumber \\& Q=Q^\dagger\Big\}.
\end{align} Then, from 
Gershgorin's circle theorem \cite{horn13}, we have that for any eigenvalue $\lambda_Q$  of $Q$, there exists $i$ such that the following inequalities hold:
\begin{align*}
     Q_{ii}-\sum_{j\neq i}|Q_{ij}|\leq\lambda_Q\leq Q_{ii}+\sum_{j\neq i}|Q_{ij}|.
\end{align*}
This implies that for $Q\in \mathcal{Q}$, we have $-b\leq\lambda_{Q}\leq b$ which is a sufficient condition to ensure $\lVert Q\rVert_\infty\leq b$.  Therefore $\mathcal{A}_r= \{A:A=A^\dagger, \lVert A\rVert_r \leq b\}\subseteq\{A:A=A^\dagger, \lVert A\rVert_\infty \leq b\}\subseteq\mathcal{Q}$,  
 which in turn yields that $\mathcal{N}(\mathcal{O},\lVert\cdot\rVert_r,\delta)\leq\mathcal{N}(\mathcal{Q},\lVert\cdot\rVert_r,\delta)$. We therefore proceed to upper bound the covering number of the set $\mathcal{Q}$.
 
To this end, we prove that the diagonal elements of matrices $Q$ are bound as $|Q_{ii}|\leq b$ using the eigen decomposition of $Q$ as $Q = UDU^\dagger$. This implies that 
 \begin{align*} Q_{ii}&=\sum_{j}U_{ij}D_{jj}(U^\dagger)_{ij}=\sum_{j}U_{ij}U_{ij}^{*}D_{jj}=\sum_{j}|U_{ij}|^2D_{jj}.
 \end{align*}
 By construction, we have $-b\leq D_{jj}\leq b$, which implies that  $ |Q_{ii}|\leq b$. Additionally, 
 Gershgorin's circle theorem gives that for $i\neq j$,
\begin{align*}
    |Q_{ij}|\leq \sum_{j\neq i}|Q_{ij}|\leq b-Q_{ii}
    \leq 2b.
\end{align*} 
Combining these results with the basis we have constructed, gives that the radius $R$ of our $d_H^2$-dimensional real vector space $\mathcal{Q}$ is $R=2\sqrt{2}b$. All that remains now is to relate the covering number of the space of matrices $\mathcal{Q}$ to the covering number of the underlying vector space. 

To this end, assume that there exists a $2^{1/2-1/r}\delta$-cover $\Ccal(B_R^{d^2_H}(0), \lVert\cdot\rVert_1, 2^{1/2-1/r}\delta)$ of the vector space with respect to the $\ell_1$ distance. Now pick a matrix $U$ from the space $\mathcal{Q}$. This can be written as $U =\sum_{k=1}^{d^2_H}u_k E^k$. We prove that the assumed cover of the vector space implies a $\delta$-cover of the space $\mathcal{Q}$ with respect to the $\lVert\cdot\rVert_r$ distance by showing that there exists a matrix $V$ generated from $\Ccal(B_R^{d^2_H}(0), \lVert\cdot\rVert_1, 2^{1/2-1/r}\delta)$ which is $\delta$-close to $U$. To this end, we have the following series of relations:
\begin{align*}
    \lVert U-V\rVert_r &= \lVert\sum_{k=1}^{d_{H}^2}(u_k-v_k)E^k\rVert_r \\&\leq\sum_{k=1}^{d_H^2}|u_k-v_k|\lVert E^k\rVert_r \\
    & \stackrel{(a)}{\leq} \sum_{k=1}^{d_{H}^2}|u_k-v_k| 2^{1/r-1/2} \\
    & \stackrel{(b)}{=} \Vert u-v\Vert_1 2^{1/r-1/2}  \stackrel{(c)}{\leq} \delta
\end{align*} where the inequality $(a)$ follows since $\lVert E^k\rVert_q = 2^{1/q-1/2}$ by construction, $(b)$ is by the definition of $l_1$-distance between two vectors $u=(u_1,\hdots, u_{d^2_H})$ and $v=(v_1,\hdots,v_{d_H^2})$ and $(c)$ follows from the existence of $\Ccal(B_R^{d^2_H}(0), \lVert\cdot\rVert_1, 2^{1/2-1/r}\delta)$. We have thus proved that  a $2^{1/2-1/r}\delta$-cover of $B_R^{d^2_H}(0)$ yields a $\delta$-cover of $\mathcal{Q}$. 
%$\Ccal(B_R^{d^2_H}(0), \lVert\cdot\rVert_1, 2^{1/2-1/r}\delta)\implies\Ccal(\mathcal{Q}, \lVert\cdot\rVert_r, \delta)$
Hence, the $\delta$-covering number $\mathcal{N}(\mathcal{Q}, \lVert\cdot\rVert_r, \delta)$ of the set $\mathcal{Q}$ with respect to the $r$-Schatten norm  is upper bounded by the $2^{1/2-1/r}\delta$-covering number $\mathcal{N}(B^{d_{H}^2}_{2\sqrt{2}b}(0), \lVert\cdot\rVert, 2^{1/2-1/r}\delta)$ of the vector space $\mathbb{R}^{d_H^2}$ with radius $2\sqrt{2}b$ with respect to the $\ell_1$ norm. Putting everything together we get
\begin{align*}
    \mathcal{N}(\mathcal{A}, \lVert\cdot\rVert_r, \delta)&\leq\mathcal{N}(\mathcal{Q}, \lVert\cdot\rVert_r, \delta)\\&\leq \mathcal{N}(B^{d_{H}^2}_{2\sqrt{2}b}(0), \lVert\cdot\rVert, 2^{1/2-1/r}\delta)\\&\leq \Bigl(1+\frac{2^{2+1/r}b}{\delta}\Bigl)^{d_{H}^2}\leq\Bigl(3\frac{2^{1+1/r}b}{\delta}\Bigl)^{d_{H}^2},
\end{align*}
for $0\leq \delta \leq 2^{1+1/r}b.$
\end{proof}
%\end{comment}
\section{Proof of Theorem~\ref{thm:binary_nonadversarial}}\label{app:binary_nonadversarial}
We upper bound the RC $\Rcal(\Fcal_r)$ of the function class $\mathcal{F}_r$ as follows:
\begin{align}
    \Rcal(\Fcal_r)&= \Ebb_{\boldsymbol{\sigma}}\Bigl[\sup_{f \in \mathcal{F}_r}\frac{1}{m} \sum_{i=1}^m \sigma_i y_i f(\xbf_i) \Bigr] \nonumber \\
    &\stackrel{(a)}{=}\Ebb_{\boldsymbol{\sigma}}\Bigl[\sup_{A\in \Acal_r}\frac{1}{m} \sum_{i=1}^m \sigma_i\Tr(A \rho(\xbf_i)) \Bigr] \nonumber \\
    & = \Ebb_{\boldsymbol{\sigma}}\Bigl[\sup_{A\in \Acal_r} \Tr \Bigl(A (\frac{1}{m} \sum_{i=1}^m \sigma_i \rho(\xbf_i)) \Bigr) \Bigr] \nonumber \\
    & \stackrel{(b)}{\leq}\Ebb_{\boldsymbol{\sigma}}\Bigl[ \sup_{A\in \Acal_r} \Vert A \Vert_r  \biggl \Vert \frac{1}{m} \sum_{i=1}^m \sigma_i \rho(\xbf_i) \biggr \Vert_{\frac{r}{r-1}} \Bigr] \nonumber \\
    &= \frac{b}{m} \Ebb_{\boldsymbol{\sigma}}  \biggl \Vert \sum_{i=1}^m \sigma_i \rho(\xbf_i) \biggr \Vert_{\frac{r}{r-1}}  \nonumber\\ \nonumber\\
    &\stackrel{(c)}{\leq}\frac{b}{m}\begin{cases}
        \sqrt{2\ln(d_H)\lVert \sum_{i=1}^m\rho(\bold{x}_i)^2\rVert_\infty}\;&r=1\\ \\
        B_{r/(r-1)}\lVert \sqrt{\sum_{i=1}^m\rho(\bold{x}_i)^2}\rVert_{\frac{r}{r-1}},\;&2\leq r<\infty\\ \\\lVert\sqrt{\sum_{i=1}^m\rho(\bold{x}_i)^2}\rVert_1,\;&r=\infty
    \end{cases} \nonumber,
  %  \leq\begin{cases}
  %      b\sqrt{2\ln(d_H)}/
  %      \sqrt{m},\;&r=1\\ \\ bB_{\frac{r}{r-1}}/
   %     \sqrt{m},\;&2\leq r<\infty\\ \\ b/
        %\sqrt{m},\;&r=\infty
  %  \end{cases}
\end{align} where $(a)$ follows since $\sigma_i y_i =\sigma_i$ holds due to the symmetric property and $(b)$ from matrix H\"older's inequality. Inequality $(c)$ follows from Jensen's inequality and from the operator Khintchine's inequalities \citep{banchi2023statistical} for $2\leq r\leq\infty$ as
\begin{align*}
&\Ebb_{\boldsymbol{\sigma}}  \biggl \Vert \sum_{i=1}^m \sigma_i \rho(\xbf_i) \biggr \Vert_{\frac{r}{r-1}} \stackrel{\text{Jensen's}}{\leq} \Bigg[\Ebb_{\boldsymbol{\sigma}}  \biggl \Vert \sum_{i=1}^m \sigma_i \rho(\xbf_i) \biggr \Vert_{\frac{r}{r-1}}^{\frac{r}{r-1}}\Bigg]^{\frac{r-1}{r}}\\ &\stackrel{\text{Khintchine's}}{\leq}\begin{cases}
        B_{\frac{r}{r-1}}\lVert \sqrt{\sum_{i=1}^m\rho(\bold{x}_i)^2}\rVert_{\frac{r}{r-1}},\;&2\leq r<\infty\\\\\lVert\sqrt{\sum_{i=1}^m\rho(\bold{x}_i)^2}\rVert_1,\;&r=\infty,
    \end{cases}
\end{align*} 
where we have defined $B_{\frac{r}{r-1}}=2^{-1/4}\sqrt{\frac{\pi r}{e(r-1)}}$. For $r=1$,   Tropp's inequality \citep{banchi2023statistical} gives that \begin{align*}
    \Ebb_{\boldsymbol{\sigma}}\Big \lVert \sum_{i=1}^m\sigma_i\rho(\bold{x}_i)\Big\rVert_\infty\leq\sqrt{2\ln(d_H)\Big\lVert\sum_{i=1}^m\rho(\bold{x}_i)^2\Big\rVert_\infty}.
\end{align*}

\section{Proof of Theorem~\ref{thm:upperbound_binary}} \label{app:binary_adversary_upperbound}
In this section, we build on the proof sketch in Section~\ref{sec:proof} and first derive an upper bound on the Rademacher complexity
\begin{align}
\Rcal(\tilde{\Fcal}_{r,p,\epsilon})
    % &= \Ebb_{\boldsymbol{\sigma}}\Biggl[\sup_{f \in \tilde{\Fcal}} \sum_{i=1}^m \sigma_i \min_{x': \Vert x_i-x'\Vert_{p} \leq \epsilon} y_if(x') -y_if(x_i) \Biggr] \nonumber \\ 
     & = \frac{1}{m}\Ebb_{\boldsymbol{\sigma}}\Biggl[\sup_{g \in \tilde{\Fcal}_{r,p,\epsilon}} \sum_{i=1}^m \sigma_i h(\xbf_i,y_i) \Biggr]
\end{align} of the excess adversarial loss function space $\tilde{\Fcal}_{r,p,\epsilon}$ under  $(p,\epsilon)$-classical adversarial attack. Recall that \begin{align*}\tilde{\Fcal}_{r,p,\epsilon}=\{&(\xbf,y) \mapsto h(\xbf,y)=\\&\min_{\xbf': \Vert \xbf -\xbf'\Vert_{p} \leq \epsilon} yf(\xbf') -yf(\xbf): f \in {\Fcal_r}\}
\end{align*}
with ${\Fcal_r}=\{\xbf \mapsto f(\xbf)=\Tr(A\rho(\xbf)): A \in \Hcal, \Vert A \Vert_r\leq b\}$ denoting the function class parameterized by Hermitian observables $A$. Note that similar steps can be followed to get the upper bound under quantum adversarial attacks. 

To obtain an upper bound on the Rademacher complexity, we proceed via the following steps.

\emph{\textbf{Step 1} Obtain a $\delta/\sqrt{m}$-cover of the excess adversarial function space $\tilde{\Fcal}_{r,p,\epsilon}$ using appropriate cover of $\Acal_r$:}
From Lemma~\ref{lem:covering_hermitian_petros}, we get that there exists a $\delta'$-cover $\Ccal(\Acal_r, \Vert \cdot \Vert_r ,\delta')$ of the space $\Acal_r$ of Hermitian observables with cardinality $|\Ccal(\Acal_r, \Vert \cdot \Vert_r ,\delta')| \leq \Bigl( 3 \frac{2^{1/r}b}{\delta}\Bigr)^{d^2_H}$. This implies that for each $U \in \Acal_r$, there exists $V \in \Ccal(\Acal_r, \Vert \cdot \Vert_r ,\delta')$ such that $\Vert U-V\Vert_r \leq \delta'$. 

We now consider the set $\Ccal(\tilde{F}_{r,p,\epsilon},\delta')=\{g(\xbf,y)=\min_{\xbf': \Vert \xbf-\xbf'\Vert_p \leq \epsilon} y(\Tr(V \rho(\xbf))-\Tr(V\rho(\xbf')): V \in \Ccal(\Acal_r, \Vert \cdot \Vert_r ,\delta')\}$.
Let $g \in \tilde{\Fcal}_{r,p,\epsilon}$ be defined via the function $f(\xbf) =\Tr(U\rho(\xbf))$. Then, there exists a $g' \in \Ccal(\tilde{\Fcal}_{r,p,\epsilon},\delta')$ such that
\begin{align*}
    &|g(\xbf,y)- g'(\xbf,y)| \\&=\Bigl| \Bigl( \min_{\xbf': \Vert \xbf-\xbf'\Vert_p \leq \epsilon} y\Tr( U \rho(\xbf')) -y\Tr( U \rho(\xbf)) \Bigr) \\&\hspace{0.4cm}-\Bigl(\min_{\xbf': \Vert \xbf-\xbf'\Vert_p \leq \epsilon} \Tr( V \rho(\xbf'))-y\Tr( V \rho(\xbf)) \Bigr)\Bigr| \\
    & \stackrel{(a)}{=} \Bigl|  y \Tr( U (\rho(\xbf^*_U)-\rho(\xbf))) -  y\Tr( V (\rho(\xbf^*_V)-\rho(\xbf)))\Bigr| \\
    & \stackrel{(b)}{\leq} \Bigl|  y\Tr( U (\rho(\xbf^*_V)-\rho(\xbf)) -  y\Tr( V (\rho(\xbf^*_V)-\rho(\xbf))\Bigr|\\
    & = \Bigl|  y \Tr\Bigl( (U -V)(\rho(\xbf^*_{V})-\rho(\xbf))\Bigr)\Bigr| \\
    & =|y| \Bigl| \Tr\Bigl( (U -V)(\rho(\xbf^*_{V})-\rho(\xbf))\Bigr)\Bigr| \\
    &\stackrel{(c)}{\leq} \lVert U-V \rVert_{r} \lVert \rho(\xbf^*_{V})-\rho(\xbf)\rVert_{\frac{r}{r-1}}\\
    &\stackrel{(d)}{\leq} \lVert U-V \rVert_{r}\mathcal{S}_{r,p,\epsilon}^{Q/C}\\
    & \leq \delta' \mathcal{S}_{r,p,\epsilon}^{Q/C},
\end{align*} 
where in $(a)$ $\rho(\xbf^*_U)$ denotes the minimizer of $\min_{\xbf': \Vert \xbf-\xbf'\Vert_p \leq \epsilon} y \Tr( U \rho(\xbf')) -y \Tr( U \rho(\xbf))$, and similarly for $\rho(\xbf^*_{V})$. In $(b)$ we assume wlog that $y \Tr( U(\rho(\xbf^*_U)-\rho(\xbf))) \leq  y\Tr(U(\rho(\xbf^*_{V})-\rho(\xbf)))$ and in $(c)$ we use H{\"o}lder's inequality. The upper bound in $(d)$ follows from our definition $\mathcal{S}^{C}_{r,p,\epsilon} =\max_{\xbf':\lVert \xbf'-\xbf\rVert_p\leq\epsilon}\lVert \rho(\xbf')-\rho(\xbf)\rVert_{r/(r-1)}$. Finally, the last inequality follows since $V \in \Ccal( \Acal_r, \Vert \cdot \Vert_r ,\delta')$.

The relations above show that with the choice of $\delta'= \frac{\delta}{\sqrt{m} \mathcal{S}_{r,p,\epsilon}^{C}}$, the set $\Ccal(\tilde{\Fcal}_{r,p,\epsilon},\delta')$ is indeed a $\delta/\sqrt{m}$-cover of $\tilde{\Fcal}_{p,\epsilon}$ with respect to $| \cdot |$. In other words, we have $\Ccal(\tilde{\Fcal}_{p,\epsilon}, | \cdot |, \delta/\sqrt{m})=\Ccal(\tilde{\Fcal}_{r,p,\epsilon},\frac{\delta}{\sqrt{m} \mathcal{S}_{r,p,\epsilon}{C}})$.

{\emph{\textbf{Step 2} Obtain a $\delta$-cover of the excess adversarial function space $\tilde{\Fcal}_{r,p,\epsilon}(\Dcal)$ defined with respect to the data set $\Dcal$:} Let 
$$ \tilde{\Fcal}_{r,p,\epsilon}(\Dcal)=\{\mathbf{g}=( g(\xbf_1,y_1), \hdots g(\xbf_m,y_m)): g \in \tilde{\Fcal}_{r,p,\epsilon}\}$$ denote the excess adversarial function space defined with respect to the data set $\Dcal$.  Furthermore, define the set
\begin{align*} \Ccal(\tilde{\Fcal}_{r,p,\epsilon}(\Dcal))=\{&\mathbf{g}=( g(\xbf_1,y_1), \hdots g(\xbf_m,y_m)):\\&g \in \Ccal(\tilde{\Fcal}_{r,p,\epsilon}, |\cdot |,\delta/\sqrt{m})\}\end{align*}
It is then easy to see that for $\mathbf{g} \in \tilde{\Fcal}_{r,p,\epsilon}(\Dcal)$ and $\mathbf{g}' \in \Ccal(\tilde{\Fcal}_{r,p,\epsilon}(\Dcal))$, we get that 
$$\lVert\bold{g}-\bold{g'}\rVert^2=\sqrt{\sum_{i=1}^m \big(g(\xbf_i,y_i)-g'(\xbf_i,y_i)\big)^2}\leq\delta,$$ which implies that the set $\Ccal(\tilde{\Fcal}_{r,p,\epsilon}(\Dcal)) = \Ccal(\tilde{\Fcal}_{r,p,\epsilon}(\Dcal), \Vert \cdot \Vert_2,\delta)$ forms the $\delta$-cover of $\tilde{\Fcal}_{r,p,\epsilon}(\Dcal)$.

Putting the above results together we get that \begin{align*}
    &\mathcal{N}\big(\tilde{\mathcal{F}}_{r.p,\epsilon}(\mathcal{D}),\lVert\cdot\rVert_2,\delta\big)\leq\mathcal{N}\big(\tilde{\mathcal{F}}_{r,p,\epsilon},\lVert\cdot\rVert_1,\delta/\sqrt{m}\big)\\&\leq\mathcal{N}\big(\mathcal{O}, \lVert\cdot\rVert_r, \delta/(\mathcal{S}_{r,p,\epsilon}^{C}\sqrt{m})\big) \leq \Biggl(3\frac{2^{1/r}\sqrt{m}\mathcal{S}_{r,p,\epsilon}^{C}b}{\delta}\Biggl)^{d_{H}^2}, \label{eq:coveringnumber_dudley}
\end{align*}where the last inequality follows from Lemma~\ref{lem:covering_hermitian_petros}.

\emph{\textbf{Step 3} Application of Dudley Entropy Integral Bound:}
We now employ Dudley entropy integral bound in Theorem~\ref{thm:dudley}, to get that
\begin{align}
&\Rcal(\tilde{\Fcal}_{r,p,\epsilon})
      = \frac{1}{m}\Ebb_{\boldsymbol{\sigma}}\Biggl[\sup_{\mathbf{g} \in \tilde{\Fcal}_{r,p,\epsilon}(\Dcal)} \boldsymbol{\sigma}^{\top} \mathbf{g} \Biggr] \nonumber\\&\leq 12 \int_0^{D/2} \hspace{-0.2cm}\frac{\sqrt{\log(\mathcal{N}(\Tilde{\Fcal}_{r,p,\epsilon}(\Dcal),\lVert\cdot\rVert_2,\alpha))}}{m}d\alpha,
\end{align} where
\begin{align*}
   D&=\max_{g \in \tilde{\Fcal}_{r,p,\epsilon}} \sqrt{\sum_{i=1}^mg(\xbf_i,y_i)^2}\\
    &=\max_{A \in \Acal_r} \sqrt{\sum_{i=1}^m\big({\rm Tr}A(\rho(\xbf^*_i)-\rho(\xbf_i)\big)^2}\\
    &\leq\max_{A \in \Acal_r} \sqrt{\sum_{i=1}^m\big(\lVert A\rVert_r\lVert(\rho(\xbf^*_i)-\rho(\xbf_i)\rVert_{\frac{r}{r-1}}\big)^2}\\
    &\leq\sqrt{\sum_{i=1}^m(\mathcal{S}_{r,p,\epsilon}^C)^2}\leq b\sqrt{m}\mathcal{S}_{r,p,\epsilon}^C.
\end{align*}
Now, substituting the covering number from Lemma~\ref{lem:covering_hermitian_petros} and evaluating the integral in \eqref{eq:coveringnumber_dudley}, we get the following upper bound,
\begin{align*}
    \mathcal{R}(\tilde{\mathcal{F}}_{r,p,\epsilon})&\leq12\int_0^{\sqrt{m}\mathcal{S}_{r,p,\epsilon}^Cb/2}\frac{\sqrt{d_{H}^2\log\Big(3\frac{2^{1/r}\sqrt{m}\mathcal{S}_{r,p,\epsilon}^Cb}{\alpha}\Big)}}{m}d\alpha\\
    &=36\frac{2^{1/r}\mathcal{S}_{r,p,\epsilon}^Cbd_{H}}{\sqrt{m}}\int_0^{2^{-1-1/r}/3}\sqrt{\log\Big(\frac{1}{\alpha}\Big)}d\alpha\\
    &=36\frac{2^{1/r}\mathcal{S}_{r,p,\epsilon}^Cbd_{H}}{\sqrt{m}}\Biggl(\frac{2^{-1/r}}{6}\sqrt{\log(2^{1/r}6)}\\
&\hspace{0.4cm}+\frac{\sqrt{\pi}}{2}\Big(1-{\rm erf}\big(\sqrt{\log(2^{1/r}6)}\big)\Big)\Biggl), 
\end{align*} where ${\rm erf}(z)=\frac{2}{\sqrt{\pi}}\int_0^ze^{-t^2}dt.$ Using this in \eqref{eq:intermediatestep} then yields the required upper bound. 

\section{Proof of Proposition~\ref{prop:quantumembeddings}}\label{app:prop}
In this section, we upper bound the the quantity $\mathcal{S}^{C}_{r,p,\epsilon}=\max_{\xbf':\lVert\rho(\xbf)-\rho(\xbf')\rVert_p\leq\epsilon}\lVert\rho(\xbf)-\rho(x')\rVert_{r/(r-1)}$, for specific types of embeddings. To this end, we note that for pure state embeddings of the form $\rho(\xbf)=\vert \psi(\xbf) \rangle \langle \psi(\xbf)\vert$, this term evaluates as
\begin{align*}
\mathcal{S}^{C}_{r,p,\epsilon}\leq d_H\lVert\rho(\xbf)-\rho(\xbf')\rVert_1=2d_H\sqrt{1-|\big\langle\psi(\xbf+\delta \xbf)|\psi(\xbf)\rangle\big|^2}. 
\end{align*}
Using the above, we now upper bound $\mathcal{S}^{C}_{r,p,\epsilon}$ for specific quantum embeddings.
\subsection{Amplitude embedding}
Recall that amplitude embedding describes the mapping 
\begin{align*}
    \xbf \mapsto |\psi(\xbf)\rangle=\sum_{n=1}^d\frac{\xbf_n}{\lVert \xbf\rVert_2}|n\rangle.
\end{align*}
A direct computation yields that
\begin{align*}
    |\big\langle\psi(\xbf+\delta \xbf)|\psi(\xbf)\rangle\big|^2=\sum_{i,j=1}^d\frac{\xbf_i(\xbf_i+\delta \xbf_i)\xbf_j(\xbf_j+\delta \xbf_j)}{\lVert \xbf\rVert_2^2\;\lVert \xbf+\delta \xbf\rVert_2^2}
\end{align*}
where we write $\xbf'=\xbf+\delta \xbf$. Thus
\begin{align*}
    &\lVert\rho(\xbf)-\rho(\xbf')\rVert_1 \\&=2\sqrt{1-\sum_{i,j=1}^d\frac{\xbf_i(\xbf_i+\delta \xbf_i)\xbf_j(\xbf_j+\delta \xbf_j)}{\lVert \xbf\rVert_2^2\;\lVert \xbf+\delta \xbf\rVert_2^2}}\\
&=2\sqrt{\sum_{i,j=1}^d\frac{\xbf_i^2(\xbf_j+\delta \xbf_j)^2-\xbf_i(\xbf_i+\delta \xbf_i)\xbf_j(\xbf_j+\delta \xbf_j)}{\lVert \xbf\rVert_2^2\;\lVert \xbf+\delta \xbf\rVert_2^2}} \\
    &=2\sqrt{\frac{\lVert \xbf\rVert_2^2\;\lVert \xbf+\delta \xbf\rVert_2^2-|\langle \xbf,\xbf+\delta \xbf\rangle|^2}{\lVert \xbf\rVert_2^2\;\lVert \xbf+\delta \xbf\rVert_2^2}} \end{align*} \begin{align*}
    &=2\sqrt{\frac{\lVert \xbf\rVert_2^2\;\lVert\delta \xbf\rVert_2^2-|\langle \xbf,\delta \xbf\rangle|^2}{\lVert \xbf\rVert_2^2\;\lVert \xbf+\delta \xbf\rVert_2^2}}\\
    &=2\sqrt{\frac{\lVert\delta \xbf\rVert_2^2(1-\cos^2(\theta)}{\lVert \xbf\rVert_2^2+\lVert \delta \xbf\rVert_2^2+2\lVert \xbf\rVert_2\lVert\delta \xbf\rVert_2\cos(\theta)}}\\&\stackrel{(a)}{\leq}2\min\Big\{\frac{\lVert \delta \xbf\rVert_2}{\lVert \xbf\rVert_2},1\Big\}\\&\stackrel{(b)}{\leq}2\min\Big\{\frac{\epsilon\max\{1, d^{1/2-1/p}\}}{\min_{\xbf\in\mathcal{X}}\lVert \xbf\rVert_2},1\Big\},
\end{align*} where
inequality (a) follows from minimizing over the angle $\theta$ between the vectors $\xbf$ and $\delta \xbf$, while (b) follows from maximizing over $\delta \xbf$, and $\xbf\in\mathcal{X}$. Therefore
\begin{align}
    \mathcal{S}_{r,p,\epsilon}^C\leq2\min\Big\{\frac{\epsilon\max\{d, d^{3/2-1/p}\}}{\min_{\xbf\in\mathcal{X}}\lVert \xbf\rVert_2},d\Big\}.
\end{align}

\subsection{Data re-uploading with angle and dense embeddings}

Consider an $L$-layer single qubit re-uploading embedding circuit $\xbf\mapsto\rho(\bold{\xbf})=|\phi(\bold{\xbf})\rangle\langle\phi(\bold{\xbf})|$ where
\begin{align*}
    |\phi(\bold{\xbf})\rangle=\prod_{l=1}^L\big[V_l U(\bold{\xbf})\big]|\bold{0}\rangle,\;|\bold{0}\rangle\coloneqq\bigotimes_{i=1}^d|0\rangle
\end{align*}
We proceed with bounding $\lVert\rho(\bold{x})-\rho(\bold{x}')\rVert_1$ which we will do by induction. Assume the solution is $\lVert\rho(\bold{x})-\rho(\bold{x}')\rVert_1\leq L\lVert \mathcal{U}(\bold{x})-\mathcal{U}(\bold{x}')\rVert_\diamond$
\begin{proof}We begin by defining the diamond distance $$\lVert \mathcal{U}(\bold{x})-\mathcal{U}(\bold{x}')\rVert_\diamond\coloneqq\sup_\rho\lVert U^\dagger(\bold{x})\rho U(\bold{x})-U^\dagger(\bold{x}')\rho U(\bold{x}')\rVert_1$$
    \textit{Step $1$:} Check for $L=1$.
    \begin{align*}
        \lVert\rho(\bold{x})-\rho(\bold{x}')\rVert_1&=\lVert VU(\bold{x})|\bold{0}\rangle\langle\bold{0}|U^\dagger(\bold{x})V^\dagger\\&\hspace{0.4cm}-VU(\bold{x}')|\bold{0}\rangle\langle\bold{0}|U^\dagger(\bold{x}')V^\dagger\rVert_1\\&\stackrel{(a)}{=}\lVert U(\bold{x})|\bold{0}\rangle\langle\bold{0}|U^\dagger(\bold{x})-U(\bold{x}')|\bold{0}\rangle\langle\bold{0}|U^\dagger(\bold{x}')\rVert_1\\&\leq\lVert \mathcal{U}(\bold{x})-\mathcal{U}(\bold{x}')\rVert_\diamond
    \end{align*}
    where in $(a)$ we use the unitary invariance of the trace distance
    
    \noindent\textit{Step $2$:} Assume true for $L=k$.
    \begin{align*}
        &\lVert\rho(x)-\rho(x')\rVert_1\\&=\lVert\prod_{l=1}^k\big[V_lU(\bold{x})\big]|\bold{0}\rangle\prod_{m=1}^k\langle\bold{0}|\big[U^\dagger(\bold{x})V^\dagger_m\big]\\&\hspace{0.5cm}-\prod_{l=1}^k\big[V_lU(\bold{x}')\big]|\bold{0}\rangle\prod_{m=1}^k\langle\bold{0}|\big[U^\dagger(\bold{x}')V^\dagger_m\big]\rVert_1\\&\leq k\lVert \mathcal{U}(\bold{x})-\mathcal{U}(\bold{x}')\rVert_\diamond
    \end{align*}
    \textit{Step $3$:}Prove true for $L=k+1$
    \begin{align*}
        &\lVert\rho(x)-\rho(x')\rVert_1\\&=\lVert\prod_{l=1}^{k+1}\big[V_lU(\bold{x})\big]|\bold{0}\rangle\prod_{m=1}^{k+1}\langle\bold{0}|\big[U^\dagger(\bold{x})V^\dagger_m\big]\\&-\prod_{l=1}^{k+1}\big[V_lU(\bold{x}')\big]|\bold{0}\rangle\prod_{m=1}^{k+1}\langle\bold{0}|\big[U^\dagger(\bold{x}')V^\dagger_m\big]\rVert_1\\&\stackrel{(a)}{=}\lVert U(\bold{x})\prod_{l=1}^{k}\big[V_lU(\bold{x})\big]|\bold{0}\rangle\langle\bold{0}|\prod_{m=1}^{k}\big[U^\dagger(\bold{x})V^\dagger_m\big]U^\dagger(\bold{x})\\&-U(\bold{x}')\prod_{l=1}^{k}\big[V_lU(\bold{x}')\big]|\bold{0}\rangle\langle\bold{0}|\prod_{m=1}^{k}\big[U^\dagger(\bold{x}')V^\dagger_m\big]U^\dagger(\bold{x}')\rVert_1\\&\stackrel{(b)}{\leq}\lVert U(\bold{x})\prod_{l=1}^{k}\big[V_lU(\bold{x})\big]|\bold{0}\rangle\langle\bold{0}|\prod_{m=1}^{k}\big[U^\dagger(\bold{x})V^\dagger_m\big]U^\dagger(\bold{x})\\&-U(\bold{x}')\prod_{l=1}^{k}\big[V_lU(\bold{x})\big]|\bold{0}\rangle\langle\bold{0}|\prod_{m=1}^{k}\big[U^\dagger(\bold{x})V^\dagger_m\big]U^\dagger(\bold{x}')\rVert_1\\&+\lVert U(\bold{x}')\prod_{l=1}^{k}\big[V_lU(\bold{x})\big]|\bold{0}\rangle\langle\bold{0}|\prod_{m=1}^{k}\big[U^\dagger(\bold{x})V^\dagger_m\big]U^\dagger(\bold{x}')\\&-U(\bold{x}')\prod_{l=1}^{k}\big[V_lU(\bold{x}')\big]|\bold{0}\rangle\langle\bold{0}|\prod_{m=1}^{k}\big[U^\dagger(\bold{x}')V^\dagger_m\big]U^\dagger(\bold{x}')\rVert_1\\&\stackrel{(c)}{\leq}\lVert \mathcal{U}(\bold{x})-\mathcal{U}(\bold{x}')\rVert_\diamond\\&+\Biggl[\lVert\prod_{l=1}^{k}\big[V_lU(\bold{x})\big]|\bold{0}\rangle\langle\bold{0}|\prod_{m=1}^{k}\big[U^\dagger(\bold{x})V^\dagger_m\big]\\&\underbrace{-\prod_{l=1}^{k}\big[V_lU(\bold{x}')\big]|\bold{0}\rangle\langle\bold{0}|\prod_{m=1}^{k}\big[U^\dagger(\bold{x}')V^\dagger_m\big]\rVert_1\Biggr]}_{\leq k\lVert \mathcal{U}(\bold{x})-\mathcal{U}(\bold{x}')\rVert_\diamond}\\&\leq (k+1)\lVert \mathcal{U}(\bold{x})-\mathcal{U}(\bold{x}')\rVert_\diamond
    \end{align*}
    where $(a)$ follows from unitary invariance of the trace norm and (b) from adding and subtracting $$U(\bold{x}')\prod_{l=1}^{k}\big[V_lU(\bold{x})\big]|\bold{0}\rangle\langle\bold{0}|\prod_{m=1}^{k}\big[U^\dagger(\bold{x})V^\dagger_m\big]U^\dagger(\bold{x}')$$ and using norm sub-additivity. Finally $(c)$ follows from the definition of the diamond norm, and the unitary invariance of the trace norm. 
    
    Since the initial condition and the inductive step are satisfied, this concludes the proof.
\end{proof}
We now proceed to bound the diamond norm $\lVert \mathcal{U}(\bold{x})-\mathcal{U}(\bold{x}')\rVert_\diamond$ for the angle embedding $|\psi(\xbf)\rangle$. In particular we use Lemma B.5 in \citep{caro2022generalization}: $\lVert \mathcal{U}(\bold{x})-\mathcal{U}(\bold{x}')\rVert_\diamond\leq2\max_{|\psi\rangle}\lVert \big[U(\xbf)-U(\xbf')\big]|\psi\rangle\rVert_2$, where we have 
\begin{align*}
&\max_{|\psi\rangle}\lVert \big[U(\xbf)-U(\xbf')\big]|\psi\rangle\rVert_2\\&=\Big[\langle\psi|\bigotimes_{j=1}^d\big(I-e^{i\delta \xbf_j\sigma_y}\big)e^{i \xbf_j\sigma_y}e^{-i \xbf_j\sigma_y}\big(I-e^{-i\delta \xbf_j\sigma_y}\big)|\psi\rangle\Big]^{1/2}\\
    &=\Big[\langle\psi|\bigotimes_{j=1}^d2I\big(1-\cos(\delta \xbf_j)\big)|\psi\rangle\Big]^{1/2}\\&\leq\Big[\langle\psi|\bigotimes_{j=1}^d\delta \xbf_j^2I|\psi\rangle\Big]^{1/2}\\&=\prod_{i=1}^d|\delta \xbf_j|\stackrel{(a)}{\leq}\Big(\frac{\lVert\delta \bold{x}\rVert_1}{d}\Big)^d \leq\epsilon^dd^{-d/p},
\end{align*}
where $I$ is the identity matrix and $(a)$ follows from the AM-GM inequality.

Combining everything together we have that 
\begin{align}
    \mathcal{S}_{r,p,\epsilon}^C\leq 2L\epsilon^dd^{-d/p}
\end{align}
Finally, we bound the diamond norm of the dense embedding again using Lemma B.5 in \citep{caro2022generalization} as
\begin{align*}
     &\max_{|\psi\rangle}\lVert \big[U(x)-U(\xbf')\big]|\psi\rangle\rVert_2\\&=\Big[\langle\psi|\bigotimes_{j=1}^{d/2}\big(I-e^{i\delta \xbf_{2j-1}\sigma_y}e^{i\delta \xbf_{2j}\sigma_z}\big)\big(I-e^{-i\delta \xbf_{2j}\sigma_z}e^{-i\delta \xbf_{2j-1}\sigma_y}\big)|\psi\rangle\Big]^{0.5}\\&=\Big[\langle\psi|\bigotimes_{j=1}^{d/2}2I\big(1-\cos(\delta \xbf_{2j-1})\cos(\delta \xbf_{2j})\big)|\psi\rangle\Big]^{0.5}\\&\leq\Big[\langle\psi|\bigotimes_{j=1}^{d/2}(\delta \xbf_{2j-1}^2+\delta \xbf_{2j}^2)I|\psi\rangle\Big]^{1/2}\\&=\Big[\prod_{j=1}^{d/2}(\delta \xbf_{2j-1}^2+\delta \xbf_{2j}^2)\Big]^{1/2}\\&\stackrel{(a)}{\leq}\Big(\frac{2\lVert\delta\bold{x}\rVert_2^2}{d}\Big)^{d/4}\\&\leq\big(2\epsilon^2\max\{d^{-1},d^{-2/p}\}\big)^{d/4},
\end{align*}
where $(a)$ follows again from the AM-GM inequality. Combining everything together we get
\begin{align}
    \mathcal{S}^c_{r,p,\epsilon}\leq 2L(\sqrt{2}\epsilon)^{d/2}\max\{d^{-d/4},d^{-d/2p}\}
\end{align}

\subsection{Quantum attacks}
The adversarial examples of quantum attacks are defined as $\rho'=\underset{\rho^{*}:\lVert\rho^{*}-\rho\rVert_p\leq\epsilon}{\arg\min}y{\rm Tr}\big(O\rho^{*}\big)$. With this definition we can upper bound the term $\mathcal{S}^Q_{r,p,\epsilon}$ as follows
\begin{align*}
    \mathcal{S}^Q_{r,p,\epsilon}&=\max_{\rho':\lVert\rho-\rho'\rVert_p\leq\epsilon}\lVert\rho-\rho'\rVert_{r/(r-1)}\\&\leq\begin{cases}
        \lVert\rho-\rho'\rVert_p&\;\text{if}\;1-1/r-1/p\leq0\\\lVert\rho-\rho'\rVert_pd_\mathcal{H}^{1-1/r-1/p}&\;\text{if}\;1-1/r-1/p\geq0       
    \end{cases}
\end{align*}
This is an application of H{\"o}lder's inequality in the case when $\frac{r}{r-1}\geq p$ and otherwise an application of $\lVert A\rVert_r\geq\lVert A\rVert_s$ if $r\leq s$. Thus 
\begin{align*}  \mathcal{S}^Q_{r,p,\epsilon}\leq\epsilon\max\{1, d_\mathcal{H}^{1-1/q-1/p}\}
\end{align*}
\section{Proof of Theorem~\ref{thm:lowerbound}}\label{app:lower}
We first derive the lower bound.
%In this section we provide adversarial Rademacher complexity upper and lower bounds in the case of quantum adversarial attacks under the following assumption
%\asegval
\subsection{Lower bound}
We begin by considering the standard RC of the function class $\mathcal{F}_2$. 
\begin{align*}
    \mathcal{R}(\mathcal{F}_r)&=\frac{1}{m}\underset{\boldsymbol{\sigma}}{\mathbb{E}}\sup_{A\in \mathcal{A}_r}\Biggl[\sum_{i=1}^m\sigma_i y_i{\rm Tr}\big(A\rho(\xbf_i)\big)\Biggr]\\&=\frac{1}{m}\underset{\sbf}{\mathbb{E}}\Biggl[\sup_{A\in\mathcal{A}_r}{\rm Tr}\Big(A\sum_{i=1}^m\sigma_i y_i\rho(\xbf_i)\Big)\Biggr]\\&\stackrel{(a)}{=}\frac{b}{m}\underset{\boldsymbol{\sigma}}{\mathbb{E}}\Biggl[\Big\lVert\sum_{i=1}^m\sigma_i y_i\rho(\xbf_i)\Big\rVert_{\frac{r}{r-1}}\Biggr]
\end{align*}
where $(a)$ follows from H{\"o}lder's inequality, with equality guaranteed by the supremum operation. We define the supremum achieving observable as 
\begin{align*}
    A_{\sbf}^{*}=\underset{A\in {\mathcal{A}_r}}{\arg\sup}\sum_{i=1}^m\sigma_iy_i{\rm Tr}\big(A\rho(\xbf_i)\big)
\end{align*}
With this, the RC may be written as 
\begin{align*}
    \mathcal{R}(\mathcal{F}_r)=\frac{1}{m}\underset{\sbf}{\mathbb{E}}\Biggl[{\rm Tr}\Big(A_{\sbf}^{*}\sum_{i=1}^m\sigma_iy_i\rho(\xbf_i)\Big)\Biggr]
\end{align*}
We note that the RC is invariant under the transformation $\sbf\mapsto-\sbf$. Therefore we have that
\begin{align}
    \sbf\mapsto-\sbf\implies A_{\sbf}^{*}\mapsto A_{-\sbf}^{*}=-A_{\sbf}^{*},\label{eq:opt_obs}
\end{align} 
which leaves the product $\sbf A_{\sbf}^{*}$ invariant as required.

We now switch our focus to the ARC $\mathcal{R}(\mathcal{F}_{r,p,\epsilon})$. We use the notation $A_{\sbf}^{*}$ to denote the supremum achieving observable of the \textbf{standard} RC, as before.
\begin{align*}
    &\mathcal{R}(\mathcal{F}_{r,p,\epsilon})\\&=\frac{1}{m}\underset{\sbf}{\mathbb{E}}\Bigl[\sup_{A\in\mathcal{A}_r}\sum_{i=1}^m\sigma_i\min_{\rho':\lVert\rho'-\rho(\xbf_i)\rVert_p\leq\epsilon}y_i{\rm Tr}\big(O\rho'\big)\Bigr]\\&\stackrel{(a)}{\geq}\frac{1}{m}\underset{\sbf}{\mathbb{E}}\Bigl[\sum_{i=1}^m\sigma_i\min_{\rho':\lVert\rho'-\rho(\xbf_i)\rVert_p\leq\epsilon}y_i{\rm Tr}\big(A_{\sbf}^{*}\rho'\big)\Bigr]\\&=\frac{1}{m}\underset{\sbf}{\mathbb{E}}\Bigl[\sum_{i=1}^m\sigma_i\min_{\tau:\lVert\tau\rVert_p\leq\epsilon,\;{\rm Tr}\tau=0, \rho(\xbf_i)+\tau\succeq0}y_i{\rm Tr}\big(A_{\sbf}^{*}(\rho(\xbf_i)+\tau)\big)\Bigr]\\&=\mathcal{R}(\mathcal{F})+\frac{1}{m}\underset{\sbf}{\mathbb{E}}\Bigl[\sum_{i=1}^m\sigma_i\min_{\tau:\lVert\tau\rVert_p\leq\epsilon,\;{\rm Tr}\tau=0, \;\rho(\xbf_i)+\tau\succeq0}y_i{\rm Tr}\big(A_{\sbf}^{*}\tau\big)\Bigr]
\end{align*}
where $(a)$ follows from replacing the supremum over $\Acal_r$ with the specific observable $A_{\sbf}^{*}$.

We now analyze the term $$\sigma_i\min_{\tau:\lVert\tau\rVert_p\leq\epsilon,\;{\rm Tr}\tau=0, \;\rho(\xbf_i)+\tau\succeq0}y_i{\rm Tr}\big(A_{\boldsymbol{\sigma}^{*}}\tau\big).$$ In particular we note that under Assumption~\ref{assum:1}, the following equivalence relation holds
\begin{align*}
    &\{\tau:\lVert\tau\rVert_p\leq\epsilon,\;{\rm Tr}\tau=0, \;\rho_i+\tau\succeq0\}\\&=\{\tau:\lVert\tau\rVert_p\leq\epsilon,\;{\rm Tr}\tau=0\}
\end{align*}
And therefore minimization over the two sets is equivalent. Let us define 
\begin{align*}
    \tau_{\sbf,i}^{*}=\underset{\tau:\lVert\tau\rVert_p\leq\epsilon,\;{\rm Tr}\tau=0}{\arg\min}y_i{\rm Tr}(A_{\sbf}^{*}\tau)
\end{align*}
Under $\sbf\mapsto-\sbf$ we have
\begin{align}
    \tau_{\sbf,i}^{*}\mapsto\tau_{-\sbf,i}^{*}&=\underset{\tau:\lVert\tau\rVert_p\leq\epsilon,\;{\rm Tr}\tau=0}{\arg\min}y_n{\rm Tr}(A_{-\sbf}^{*}\tau)\nonumber\\&=\underset{\tau:\lVert\tau\rVert_p\leq\epsilon,\;{\rm Tr}\tau=0}{\arg\min}y_i{\rm Tr}(-A_{\sbf}^{*}\tau)\nonumber\\&=\underset{-\tau:\lVert-\tau\rVert_p\leq\epsilon,\;{\rm Tr}(-\tau)=0}{\arg\min}y_i{\rm Tr}(A_{\sbf}^{*}\tau)\nonumber\\&=\underset{-\tau:\lVert\tau\rVert_p\leq\epsilon,\;{\rm Tr}\tau=0}{\arg\min}y_i{\rm Tr}(A_{\sbf}^{*}\tau)\nonumber\\&=-\underset{\tau:\lVert\tau\rVert_p\leq\epsilon,\;{\rm Tr}\tau=0}{\arg\min}y_i{\rm Tr}(A_{\sbf}^{*}\tau)=-\tau_{\sbf,i}^{*} \label{eq:opt_pert}
\end{align}
Combining the above results from equations \ref{eq:opt_obs}, \ref{eq:opt_pert}, we have \begin{align*}\sbf\mapsto-\sbf&\implies\sigma_i y_i{\rm Tr}\big(A_{\sbf}^{*}\tau_{\sbf,i}^{*}\big)\mapsto-\sigma_i y_i{\rm Tr}\big(A_{-\sbf}^{*}\tau_{-\sbf,i}^{*}\big)\\&\hspace{0.2cm}\stackrel{(a)}{=}-\sigma_i y_i{\rm Tr}\big(A_{\sbf}^{*}\tau_{\sbf,i}^{*}\big)\end{align*}where $(a)$ follows from direct application of \ref{eq:opt_obs}, \ref{eq:opt_pert}.
It follows that 
\begin{align*}
    \underset{\sbf}{\mathbb{E}}\Bigl[\sbf_i y_i{\rm Tr}\big(A_{\sigma}^{*}\tau_{\sbf,i}^{*}\big)\Bigr]=0
\end{align*} as the vectors $\sbf$ and $-\sbf$ are equiprobable.

Going back to the ARC lower bound, this implies
\begin{align*}
    &\mathcal{R}(\mathcal{F}_{r,p,\epsilon})\\&\geq\mathcal{R}(\mathcal{F}_r)+\frac{1}{m}\underset{\sbf}{\mathbb{E}}\Biggl[\sum_{i=1}^m\sigma_i\min_{\tau:\lVert\tau\rVert_p\leq\epsilon,\;{\rm Tr}\tau=0, \;\rho(\xbf_i)+\tau\succeq0}y_i{\rm Tr}\big(A_{\sbf}^{*}\tau\big)\Biggr]\\&=\mathcal{R}(\mathcal{F}_r)+\frac{1}{m}\underset{\sbf}{\mathbb{E}}\Biggl[\sum_{i=1}^m\sigma_iy_i{\rm Tr}\big(A_{\sbf}^{*}\tau_{\sbf,i}^{*}\big)\Biggr]\\&=\mathcal{R}(\mathcal{F}_r).
\end{align*}
\subsection{Upper bound}
To upper bound the ARC we will make use of the sub-additivity of the supremum operation. In particular
\begin{align}
    &\mathcal{R}(\mathcal{F}^Q_{r,p,\epsilon})\\&=\frac{1}{m}\underset{\sbf}{\mathbb{E}}\Bigl[\sup_{A\in\mathcal{A}_r}\sum_{i=1}^m\sigma_i\min_{\rho':\lVert\rho'-\rho(\xbf_i)\rVert_p\leq\epsilon}y_i{\rm Tr}\big(A\rho'\big)\Bigr]\\&=\frac{1}{m}\underset{\sbf}{\mathbb{E}}\Bigl[\sup_{A\in\mathcal{A}_r}\sum_{i=1}^m\sigma_i\min_{\tau:\begin{cases}\lVert\tau\rVert_p\leq\epsilon\\{\rm Tr}\;\tau=0\\\rho(\xbf_i)+\tau\succeq0\end{cases}}y_i{\rm Tr}\big(A(\rho(\xbf_i)+\tau)\big)\Bigr]\\&\stackrel{(a)}{\leq}\mathcal{R}(\mathcal{F})+\underbrace{\frac{1}{m}\underset{\sbf}{\mathbb{E}}\Bigl[\sup_{A\in\mathcal{A}_r}\sum_{i=1}^m\sigma_i\min_{\tau:\begin{cases}\lVert\tau\rVert_p\leq\epsilon\\{\rm Tr}\;\tau=0\\\rho(\xbf_i)+\tau\succeq0\end{cases}}y_i{\rm Tr}\big(A(\tau)\big)}_{\coloneqq\mathcal{R}(\Tilde{\mathcal{F}}_{r,p,\epsilon})}\Bigr]\label{eq:noisy_upper}
\end{align}
where $(a)$ follows from the sub-additivity of the supremum and the definition of the standard RC.

As before, we note that under Assumption \ref{assum:1}, the following equivalence relation holds
\begin{align*}
    &\{\tau:\lVert\tau\rVert_p\leq\epsilon,\;{\rm Tr}\tau=0, \;\rho(\xbf_i)+\tau\succeq0\}\\&=\{\tau:\lVert\tau\rVert_p\leq\epsilon,\;{\rm Tr}\tau=0\}.
\end{align*}
Let us define 
\begin{align*}
    \tau_y^{*}=\underset{\tau:\lVert\tau\rVert_p\leq\epsilon,\;{\rm Tr}\tau=0}{\arg\min}y{\rm Tr}(A\tau).
\end{align*}
Under $y\mapsto-y$ we have
\begin{align}
    \tau_y^{*}\mapsto\tau_{-y}^{*}&=\underset{\tau:\lVert\tau\rVert_p\leq\epsilon,\;{\rm Tr}\tau=0}{\arg\min}-y{\rm Tr}(O\tau)\\&=\underset{-\tau:\lVert-\tau\rVert_p\leq\epsilon,\;{\rm Tr}(-\tau)=0}{\arg\min}y{\rm Tr}(O\tau)\\&=\underset{-\tau:\lVert\tau\rVert_p\leq\epsilon,\;{\rm Tr}\tau=0}{\arg\min}y{\rm Tr}(O\tau)\\&=-\underset{\tau:\lVert\tau\rVert_p\leq\epsilon,\;{\rm Tr}\tau=0}{\arg\min}y{\rm Tr}(O\tau)=-\tau_{y}^{*}\label{eq:opt_pert_upper}
\end{align}
Thus, we have
\begin{align*}
    \mathcal{R}(\Tilde{\mathcal{F}}_{r,p,\epsilon})&=\frac{1}{m}\underset{\sbf}{\mathbb{E}}\Bigl[\sup_{A\in\mathcal{A}_r}\sum_{i=1}^m\sigma_i\min_{\tau:\lVert\tau\rVert_p\leq\epsilon,\;{\rm Tr}\tau=0}y_i{\rm Tr}\big(A\tau\big)\Bigr]\\&=\frac{1}{m}\underset{\sbf}{\mathbb{E}}\Big[\sup_{A\in\mathcal{A}_r}\sum_{i=1}^m\sigma_i{\rm Tr}\big(A\tau_{+1}^{*}\big)\Bigr]\\&=\frac{1}{m}\underset{\sbf}{\mathbb{E}}\Bigl[\sup_{A\in\mathcal{A}_r}{\rm Tr}\Big(\sum_{i=1}^m\sigma_iA\tau_{+1}^{*}\Big)\Bigr]\\
    &\leq\frac{1}{m}\underset{\sbf}{\mathbb{E}}\Bigl[\sup_{A\in\mathcal{A}_r}\Big\lVert A\Big\rVert_r\Big\lVert\sum_{i=1}^m\sigma_i\tau_{+1}^{*}\Big\rVert_{\frac{r}{r-1}}\Bigr]\\&\leq\frac{b}{m}\underset{\sbf}{\mathbb{E}}\Bigl[\Big\lVert\sum_{i=1}^m\sigma_i\tau_{+1}^{*}\Big\rVert_{\frac{r}{r-1}}\Bigr]\\&\leq\frac{b\epsilon\max\{1, d^{1-1/p-1/r}\}}{\sqrt{m}},
\end{align*}
where the last step is an application of Khintchine's inequality. Using this in equation \ref{eq:noisy_upper}, we have 
\begin{align*}
    \mathcal{R}(\mathcal{F}^Q_{r,p,\epsilon})\leq\mathcal{R}(\mathcal{F})+\frac{b\epsilon\max\{1, d^{1-1/p-1/r}\}}{\sqrt{m}}
\end{align*}
\section{Rademacher Complexity of Multi-Class Quantum Classifier}\label{app:multi-class}
In this section, we derive upper bounds on the non-adversarial and adversarial Rademacher complexity for multi-class quantum classifiers. 
\subsection{Non-Adversarial Multiclass Rademacher Complexity}
To obtain an upper bound on $\Rcal(\ell \circ \Gcal_r)$, note that
\begin{align*}
&\Rcal(\ell \circ \Gcal_r) \\&=\Ebb_{\boldsymbol{\sigma}}\Bigl[\sup_{f_k \in \Fcal_r, \forall k \in [K]} \frac{1}{m} \sum_{i=1}^m \sigma_i  \phi_{\gamma} \Bigl(\min_{k \neq y_i} f_{y_i}(\xbf) -f_k(\xbf_i)\Bigr)\Bigr] \nonumber.
%& = \Ebb_{\boldsymbol{\sigma}}\Bigl[\sup_{f_k \in \Fcal, \forall k \in [K]} \frac{1}{m} \sum_{i=1}^m \sigma_i  \max_{k \neq y_i}\phi_{\gamma} \Bigl( f_{y_i}(\xbf) -f_k(\xbf_i)\Bigr)\Bigr] \nonumber.
\end{align*}
Following the proof of \citep[Theorem 9.2]{mohri2012foundations}, we then get that 
\begin{align*}
  \Rcal(\ell \circ \Gcal_r) &\leq \frac{2K}{\gamma} \Ebb_{\boldsymbol{\sigma}}\Bigl [\sup_{k \in [K],f_k \in \Fcal_r} \frac{1}{m} \sum_{i=1}^m \sigma_i f_{k}(\xbf_i)\Bigr] 
\\ &=\frac{2K}{\gamma}\Ebb_{\boldsymbol{\sigma}}\Bigl[ \sup_{y \in \Ycal, A_y \in \Acal_r} \frac{1}{m} \sum_{i=1}^m  \sigma_i \Tr(\rho(\xbf_i) A_y)\Bigr]\nonumber \\
    &= \frac{2K}{m\gamma} \Ebb_{\boldsymbol{\sigma}}\Bigl[ \sup_{y \in \Ycal, A_y \in \Acal_r} \Tr\Bigl((\sum_{i=1}^m  \sigma_i \rho(\xbf_i))A_y \Bigr)\Bigr]\nonumber \\
    & \leq \frac{2K}{m \gamma} \Ebb_{\boldsymbol{\sigma}}\Bigl[\sup_{y \in \Ycal, A_y \in \Acal_r} \Vert A_y \Vert_r \Bigl\Vert \sum_{i=1}^m  \sigma_i \rho(\xbf_i)\Bigr\Vert_{\frac{r}{r-1}} \Bigr] \nonumber \\
    & \leq \frac{2Kb}{m\gamma}\Ebb_{\boldsymbol{\sigma}}\Bigl[\Bigl\Vert \sum_{i=1}^m  \sigma_i \rho(\xbf_i)\Bigr\Vert_{\frac{r}{r-1}} \Bigr],
\end{align*} where the last term can be upper bounded as in the proof of Theorem~\ref{thm:binary_nonadversarial}. 
\subsection{Adversarial Rademacher Complexity Analysis}
We derive upper bound on the adversarial Rademacher complexity for multi-class classification defined as
\begin{align}
    &\Rcal(\ell^C_{r,p,\epsilon} \circ \Gcal_r)\nonumber\\ &=\Ebb_{\boldsymbol{\sigma}}\Bigl[\sup_{\fbf \in \Fcal_r} \frac{1}{m} \sum_{i=1}^m \sigma_i \max_{\xbf': \Vert \xbf_i-\xbf'\Vert_p \leq \epsilon} \phi_{\gamma} \Bigl( M(\fbf(\xbf'),y_i)\Bigr)\Bigr] \nonumber\\
    &= \Ebb_{\boldsymbol{\sigma}}\Bigl[\sup_{\fbf \in \Fcal_r} \frac{1}{m} \sum_{i=1}^m \sigma_i  \phi_{\gamma} \Bigl( \min_{\xbf': \Vert \xbf_i-\xbf'\Vert_p \leq \epsilon} \min_{k \neq y_i} f_{y_i}(\xbf')-f_k(\xbf')\Bigr)\Bigr] \nonumber \\
    &= \Ebb_{\boldsymbol{\sigma}}\Bigl[\sup_{\fbf \in \Fcal_r} \frac{1}{m} \sum_{i=1}^m \sigma_i  \phi_{\gamma} \Bigl( \min_{k \neq y_i} \min_{\xbf': \Vert \xbf_i-\xbf'\Vert_p \leq \epsilon} f_{y_i}(\xbf')-f_k(\xbf')\Bigr)\Bigr] \label{eq:multiclass_proofeq_1}
\end{align} It can then be easily verified that (see Proof of Theorem 8 in \citep{xiao2022adversarial}), 
\begin{align*}
    \phi_{\gamma} \Bigl(\min_{k \neq y_i} \min_{\xbf': \Vert \xbf_i-\xbf
    '\Vert_p \leq \epsilon}  f_{y_i}(\xbf')-f_k(\xbf')\Bigr) = \phi_{\gamma}(\min_{k} h^k(\xbf_i,y_i))
\end{align*} where 
\begin{align}
    h^k(\xbf_i,y_i)=\min_{\xbf': \Vert \xbf_i-\xbf'\Vert_p \leq \epsilon} f_{y_i}(\xbf')-f_k(\xbf') + \gamma \Ibb\{k=y_i\}.
\end{align} We now proceed as in the proof of Theorem 9.2 in \citep{mohri2012foundations} to get that
\begin{align}
    &\Rcal(\ell_{r,p,\epsilon} \circ \Gcal_r) \nonumber\\&=\Ebb_{\boldsymbol{\sigma}}\Bigl[\sup_{\fbf \in \Fcal_r} \frac{1}{m} \sum_{i=1}^m \sigma_i  \phi_{\gamma} \Bigl( \min_{k} h^k(\xbf_i,y_i)\Bigr)\Bigr] \nonumber \\
    & = \Ebb_{\boldsymbol{\sigma}}\Bigl[\sup_{\fbf \in \Fcal_r} \frac{1}{m} \sum_{i=1}^m \sigma_i \max_k \phi_{\gamma} \Bigl(  h^k(\xbf_i,y_i)\Bigr)\Bigr] \nonumber\\
    & \stackrel{(a)}{\leq} \sum_{k=1}^K \Ebb_{\boldsymbol{\sigma}}\Bigl[\sup_{\fbf \in \Fcal_r} \frac{1}{m} \sum_{i=1}^m \sigma_i \phi_{\gamma} \Bigl(  h^k(\xbf_i,y_i)\Bigr)\Bigr] \nonumber \\
    & \stackrel{(b)}{\leq}\frac{1}{\gamma}\sum_{k=1}^K \Ebb_{\boldsymbol{\sigma}}\Bigl[\sup_{\fbf \in \Fcal_r} \frac{1}{m} \sum_{i=1}^m \sigma_i   h^k(\xbf_i,y_i)\Bigr] \nonumber\\
    & {\leq} \frac{1}{\gamma}\sum_{k=1}^K \Ebb_{\boldsymbol{\sigma}}\Bigl[\sup_{\fbf \in \Fcal_r} \frac{1}{m} \sum_{i=1}^m \sigma_i   \min_{\xbf': \Vert \xbf_i-\xbf'\Vert_p \leq \epsilon} f_{y_i}(\xbf')-f_k(\xbf')\Bigr] \nonumber\\&+\frac{1}{\gamma}\sum_{k=1}^K \Ebb_{\boldsymbol{\sigma}}\Bigl[\sup_{\fbf \in \Fcal_r} \frac{1}{m} \sum_{i=1}^m \sigma_i   \gamma \Ibb\{k=y_i\} \Bigr] \nonumber \\
    &= \frac{1}{\gamma}\sum_{k=1}^K \Ebb_{\boldsymbol{\sigma}}\Bigl[\sup_{\fbf \in \Fcal_r} \frac{1}{m} \sum_{i=1}^m \sigma_i   \min_{\xbf': \Vert \xbf_i-\xbf'\Vert_p \leq \epsilon} f_{y_i}(\xbf')-f_k(\xbf')\Bigr] \nonumber \end{align} \begin{align}
    & \stackrel{(c)}{\leq}  \frac{1}{\gamma}\sum_{k=1}^K \Ebb_{\boldsymbol{\sigma}}\Bigl[\sup_{\fbf \in \Fcal_r} \frac{1}{m} \sum_{i=1}^m \sigma_i   f_{y_i}(\xbf_i)-f_k(\xbf_i)\Bigr] \nonumber \\&+\frac{1}{\gamma}\sum_{k=1}^K \Ebb_{\boldsymbol{\sigma}}\Bigl[\sup_{\fbf \in \Fcal_r} \frac{1}{m} \sum_{i=1}^m \sigma_i   \nonumber\\&\hspace{0.4cm}\min_{\xbf': \Vert \xbf_i-\xbf'\Vert_p \leq \epsilon} (f_{y_i}(\xbf')-f_k(\xbf'))-(f_{y_i}(\xbf_i)-f_k(\xbf_i))\Bigr] \\
    &\stackrel{(d)}{\leq} \frac{2 K}{\gamma} \Rcal(\Pi(\Fcal_r)) \nonumber\\&+\frac{1}{\gamma}\sum_{k=1}^K \Ebb_{\boldsymbol{\sigma}}\Bigl[\sup_{\fbf \in \Fcal_r} \frac{1}{m} \sum_{i=1}^m \sigma_i   \nonumber\\&\hspace{0.4cm}\underbrace{\min_{\xbf': \Vert \xbf_i-\xbf'\Vert_p \leq \epsilon} (f_{y_i}(\xbf')-f_k(\xbf'))-(f_{y_i}(\xbf_i)-f_k(\xbf_i))}_{:=g_k(\xbf_i,y_i)}\Bigr]
\end{align} where the inequality in $(a)$ follows from Lemma 9.1 in \citep{mohri2012foundations}, and the inequality in $(b)$ follows from the contraction inequality. The inequality in $(c)$ follows by adding and subracting the term $f_{y_i}(\xbf_i)-f_k\xbf_i)$ and then using $\sup(a+b) \leq \sup(a)+\sup(b)$. Finally, the inequality in $(d)$ follows from \citep[Theorem 9.2]{mohri2012foundations}.

We now upper bound the term,
\begin{align}
  &\Ebb_{\boldsymbol{\sigma}}\Bigl[\sup_{\fbf \in \Fcal_r} \frac{1}{m} \sum_{i=1}^m \sigma_i g_k(\xbf_i,y_i) \Bigr] \nonumber\\ &= \Ebb_{\boldsymbol{\sigma}}\Bigl[\sup_{A_j \in \Acal_r, j\in [K]} \frac{1}{m} \sum_{i=1}^m \sigma_i  \nonumber\\&\hspace{0.9cm}\min_{x': \Vert \xbf-\xbf'\Vert_p \leq \epsilon} \Tr\Bigl((\rho(\xbf')-\rho(\xbf_i))(A_{y_i}-A_k)\Bigr)\Bigr]. \label{eq: intermediate_1.}  
\end{align} To upper bound the Rademacher complexity in \eqref{eq: intermediate_1.}, we resort to the covering number based argument as before. To this end, we note that the function $g_k(\xbf,y)$ is determined by the observables $A_{y}$ and $A_k$. Let us now consider the function $g^c_k(\xbf,y)$ determined by the observables $A^c_{y}$ and $A^c_k$ where $A^c_{y}, A^c_k \in \Ccal(\Acal_r, \Vert \cdot \Vert_r, \delta')$ belongs to the $\delta'$-cover of $\Acal$ with respect to the $r$-Schatten norm. Consequently, we get that
\begin{align*}
    &g_k(\xbf,y) -g^c_k(\xbf,y)|\\&= \Bigl|\min_{\xbf': \Vert \xbf-\xbf'\Vert_p \leq \epsilon} \Tr\Bigl((\rho(\xbf')-\rho(\xbf))(A_{y}-A_k)\Bigr)\\&\hspace{0.2cm}-\min_{\xbf': \Vert \xbf-\xbf
    '\Vert_p \leq \epsilon} \Tr\Bigl((\rho(\xbf')-\rho(\xbf))(A^c_{y}-A^c_k)\Bigr) \Bigr| \\
    & = \Bigl| \Tr\Bigl((\rho(\xbf^*)-\rho(\xbf))(A_{y}-A_k)\Bigr)\\&\hspace{0.2cm}- \Tr\Bigl((\rho(\xbf^{*c})-\rho(\xbf_i))(A^c_{y}-A^c_k)\Bigr) \Bigr|\\
    & \leq \Bigl| \Tr\Bigl((\rho(\xbf^{*c})-\rho(\xbf))(A_y-A_k)\Bigr)\\&\hspace{0.2cm}- \Tr\Bigl((\rho(\xbf^{*c})-\rho(\xbf))(A^c_y-A^c_k)\Bigr) \Bigr|
\end{align*} 
\begin{align*}
    & = \Bigl| \Tr\Bigl((\rho(\xbf^{*c})-\rho(\xbf))(A_{y}-A_k-(A^c_{y}-A^c_k))\Bigr) \Bigr|\\
    & \leq \Vert A_{y}-A_k-(A^c_{y}-A^c_k) \Vert_r  \Vert \rho(\xbf^{*c})-\rho(\xbf)\Vert_{\frac{r}{r-1}}\\
    & \leq \mathcal{S}^{Q/C}_{r,p,\epsilon} (\Vert A_y -A^c_y \Vert_r+\Vert A_k -A^c_k \Vert_r)\\& \leq 2\delta' \mathcal{S}^{Q/C}_{r,p,\epsilon}.
\end{align*}
This gives that the set $$ \{(\xbf,y) \mapsto g_k(\xbf,y): A_k, A_y \in \Ccal(\Acal_r, \Vert \cdot \Vert_r, \delta/(2\mathcal{S}_{r,p,\epsilon}^{Q/C}))\}$$ indeed results in a $\delta$-cover of the set   $ \tilde{\Fcal}^k_{r,p,\epsilon}=\{(\xbf,y) \mapsto g_k(\xbf,y): A_k, A_y \in \Acal_r\}$. In particular, a $\delta/\sqrt{m}$-cover of $\tilde{\Fcal}^k_{r,p,\epsilon}$ implies a $\delta$-cover of the the set $\tilde{\Fcal}^k_{r,p,\epsilon}(\Dcal)=\{\mathbf{g}_k=(g_k(\xbf_1,y_1),\hdots, g_k(\xbf_m,y_m)): g_k(\xbf,y) \in \tilde{\Fcal}_{r,p,\epsilon})$ according to the $l_2$-norm.

Subsequently, using Dudley entropy integral bound then yields that
\begin{align}
& \Ebb_{\boldsymbol{\sigma}}\Bigl[\sup_{\fbf \in \Fcal_r} \frac{1}{m} \sum_{i=1}^m \sigma_i g_k(\xbf_i,y_i) \Bigr] \nonumber \\& \leq 12 \int_0^{D/2} \frac{\sqrt{\mathcal{N}(\tilde{\Fcal}^k_{r,p,\epsilon}(\Dcal), \Vert \cdot \Vert_2, \alpha))}}{m} d\alpha \label{eq:dudley_integral_multiclass}
\end{align}
with 
\begin{align*}&D= \max_{g_k \in \tilde{\Fcal}_{p,\epsilon}^k} \sqrt{\sum_{i=1}^m g_k(\xbf_i,y_i)^2} \\&= \max_{g_k} \sqrt{\sum_{i=1}^m \Bigl(\Tr\Bigl((\rho(\xbf^*_i)-\rho(\xbf_i))(A_{y_i}-A_k)\Bigr)\Bigr)^2}\\&\leq \max_{g_k} \sqrt{\sum_{i=1}^m (\mathcal{S}^{Q/C}_{r,p,\epsilon})^2 \Vert A_{y_i}-A_k \Vert_r^2} \\&\leq 2\sqrt{m}b\mathcal{S}_{r,p,\epsilon}^{Q/C}.
\end{align*}
Subsequently, evaluating the integral in \eqref{eq:dudley_integral_multiclass} then yields that
\begin{align} &\Ebb_{\boldsymbol{\sigma}}\Bigl[\sup_{\fbf \in \Fcal_r} \frac{1}{m} \sum_{i=1}^m \sigma_i g_k(\xbf_i,y_i) \Bigr] \nonumber\\& \leq 12  \int_0^{b\rob\sqrt{m}} \frac{\sqrt{d^2_H \log(\frac{3 \times 2^{1+1/r} \sqrt{m}\rob b}{\delta})}}{m}d\delta\nonumber\\&=\frac{72bd_H\rob2^{1/r}}{\sqrt{m}}\int_0^{2^{-1-1/r}/3}\log(\frac{1}{\delta})d\delta\nonumber\\&=\frac{72bd_H\rob2^{1/r}}{\sqrt{m}}\Biggl(\frac{2^{-1/r}}{6}\sqrt{\log(2^{1/r}6)}\nonumber\\ &\hspace{0.4cm}+\frac{\sqrt{\pi}}{2}\Big(1-{\rm erf}\big(\sqrt{\log(2^{1/r}6)}\big)\Big)\Biggl).
\end{align}
\section{Quantum attacks via quantum FGSM}\label{appdx:QFGSM}
In this appendix we describe our proposed quantum attack, i.e. an adversarial attack directly perturbing the quantum state $\rho(\xbf)$. It is implemented by a parametrized quantum circuit which maps $\rho(\xbf)\mapsto \rho'=\mathcal{U}_\theta(\rho(\xbf))$.

\begin{algorithm}[H]
\caption{Quantum FGSM}\label{alg:QPFGSM}
\begin{algorithmic}[1]
\Require Quantum classifier $g$, loss function $\ell(\cdot)$, data tuple $(\rho(\xbf), y)$
\Require Adversarial parameters $(p, \epsilon)$, maximum number of iterations $\mathtt{max\_iter}$
\Require Adversarial quantum channel $\mathcal{U}_\theta(\cdot)$ such that $\mathcal{U}_{\theta=\theta_0}(\cdot) = \mathbb{I}$, learning rate $\alpha$
\State $i \gets 0$
\State $\theta \gets \alpha \cdot \sign\left(\nabla_\theta \ell\big(g, \mathcal{U}_{\theta=\theta_0}(\rho(\xbf)), y\big)\right)$
\While{$\lVert \rho(\xbf) - \mathcal{U}_\theta(\rho(\xbf)) \rVert_p \geq \epsilon$ \textbf{and} $i < \mathtt{max\_iter}$}
    \State $\theta \gets \theta / 2$
    \State $i \gets i + 1$
\EndWhile
\If{$i = \mathtt{max\_iter}$}
    \State $\theta \gets \theta_0$
\EndIf
\State \Return $\mathcal{U}_\theta(\rho(\xbf))$
\end{algorithmic}
\end{algorithm}

\begin{comment}
\begin{align*}
|0\rangle\;\;|0_1\rangle\;\;|0_j\rangle\;\;|0_d\rangle\;\;|0_\frac{d}{2}\rangle\\\\
R_Y(\xbf_1)\;\;R_Z(\xbf_2)\;\;R_Y(\xbf_j)\;\;R_Y(\xbf_{2j-1})\\\\R_Z(\xbf_{2j})\;\;R_Y(\xbf_d-1)\;\;R_Y(\xbf_d)\;\;R_Z(\xbf_d)\\\\U(\xbf)\;\;V_1\;\;V_L
\end{align*} 
\end{comment}
\newpage
\bibliography{refs}
\end{document}